\def\E{\mathbb{E}}
\def\R{\mathbb{R}}
\def\11{\mathbbm{1}}
\def\tbf{\textbf }
\def\be{\begin{equation}}
\def\ee{\end{equation}}
\def\bR{\mathbb{R}}
\def\bE{\mathbb{E}}
\def\bP{\mathbb{P}}
\def\bN{\mathbb{N}}
\def\bZ{\mathbb{Z}}
\def\cA{\mathcal{A}}
\def\cP{\mathcal{P}}
\def\cW{\mathcal{W}}
\def\cS{\mathcal{S}}
\def\cF{\mathcal{F}}
\def\cG{\mathcal{G}}
\def\cL{\mathcal{L}}
\def\cN{\mathcal{N}}
\def\eps{\varepsilon}
\def\wt{\widetilde }
\def\ul{\underline    }
\def\tU{\text{U}}
\def\tV{\text{V}}
\def\tM{\text{M}}
\def\tU{\text{U}}
\def\tV{\text{V}}
\def\tW{\text{W}}
\def\tZ{\text{Z}}
\newcommand\dif{\mathop{}\!\mathrm{d}}
\newcommand{\Ex}[1]{\left\langle{#1}\right \rangle}
\newtheorem{thm}{Theorem}[section]
\newtheorem{proposition}[thm]{Proposition}
\newtheorem{lemma}[thm]{Lemma}
\theoremstyle{definition}
\numberwithin{equation}{section}
\begin{document}

\title{Operator Norm Bounds on the Correlation Matrix of the \\  SK Model at High Temperature}

\author{Christian Brennecke\thanks{Institute for Applied Mathematics, University of Bonn, Endenicher Allee 60, 53115 Bonn, Germany} 
\and Changji Xu\thanks{Center of Mathematical Sciences and Applications, Harvard University, Cambridge MA 02138,USA}
\and Horng-Tzer Yau\thanks{Department of Mathematics, Harvard University, One Oxford Street, Cambridge MA 02138, USA}
}

\maketitle

\begin{abstract}
We prove that the correlation matrix $ \tbf M= (\langle \sigma_i  \sigma_j\rangle-\langle \sigma_i  \rangle\langle\sigma_j\rangle)_{1\leq i,j\leq N} \in \bR^{N\times N}$ of the Sherrington-Kirkpatrick model has the property that for every $\epsilon>0$ there exists $K_\epsilon>0$, that is independent of $N$, such that
		\[ \bP\big(  \| \tbf M   \|_{\text{op}}  \leq  K_{\epsilon}\big) \geq 1- \epsilon \]
for $N$ large enough, for suitable interaction and external field parameters $(\beta,h)$ in the replica symmetric region. In other words, the operator norm of $\tbf M$ is of order one with high probability. Our results are in particular valid for all $ (\beta,h)\in (0,1)\times (0,\infty)  $ and thus complement recently obtained results in \cite{EAG,BSXY} that imply the operator norm boundedness of $\tbf M$ for all $\beta<1$ in the special case of a vanishing external field. 
\end{abstract}

%%%%%%%%%%%%%%%%%%%%%%%%%%%%%%%%%%%%%%%%%%%%
%%%%%%%%%%%%%%%%%%%%%%%%%%%%%%%%%%%%%%%%%%%%
%%%%%%%%%%%%%%%%%%%%%%%%%%%%%%%%%%%%%%%%%%%%

\section{Setup and Main Results}\label{sec:intro}

We consider systems of $N$ interacting spins $ \sigma_i\in \{-1,1\}, i\in [N]=\{1,\dots,N\}$, described by the Sherrington-Kirkpatrick \cite{SK} Hamiltonian $H_N:\{-1,1\}^N\to\mathbb{R}$ which is defined by
		\begin{equation} \label{eq:HN}
		H_N(\sigma) = \beta \sum_{1\leq i < j\leq N} g_{ij} \sigma_{i}\sigma_j  + h \sum_{i=1}^N \sigma_i= \frac{\beta}{2} (\sigma, \tbf{G}\sigma ) +  h\, (\tbf 1, \sigma).
		\end{equation}
The symmetric matrix $ \textbf{G} = (g_{ij})_{1\leq i,j\leq N}$ is a GOE matrix, that is, up to the symmetry constraint its entries are i.i.d.\ centered Gaussian random variables of variance $N^{-1}$ for $i\neq j$ and we set $g_{ii} =0$ for simplicity. The standard Euclidean inner product in $\bR^N$ and its induced norm are denoted by $ (\cdot,\cdot)$ and $\| \cdot\|$, respectively. We assume $\beta > 0$, $h>0$ and we assume the $\{g_{ij}\}$ to be realized in some probability space $(\Omega, \cF, \bP)$. The expectation with regards to $\bP$ is denoted by $\mathbb{E}(\cdot)$ and the $L^p(\Omega, \cF, \bP)$ norms are $ \| \cdot\|_{p} = (\bE\, |\cdot|^p)^{1/p}$. 

Based on a novel representation of the entries of the two point correlation matrix
		\[\tbf M=\tbf M_{\beta,h} = (m_{ij} )_{1\leq i,j\leq N} =  \big(\langle \sigma_i  \sigma_j \rangle- \langle \sigma_i \rangle\langle \sigma_j \rangle\big)_{1\leq i,j\leq N} \in \R^{N\times N} \]
as sums over weights of self-avoiding paths, which is motivated by the results of \cite{ABSY}, we recently proved in \cite{BSXY} for the special case $h=0$ that at high temperature, $ \tbf M$ is asymptotically close to a resolvent of $\tbf G$, in the sense that
		\be  \label{eq:resh0}  \lim_{N\to \infty} \big\| \tbf M_{\beta,h=0} - ( 1+\beta^2 - \beta \textbf{G})^{-1}  \big \|_{\text{op}}   =0 \ee
in probability. Here, $\| \tbf A\|_{\text{op}} = \sup_{\tbf u\in\bR^N: \|\tbf u\|=1}\| \tbf A \tbf u\| $ denotes the standard operator norm, for $\tbf A \in \bR^{N\times N}$, and $\langle \cdot \rangle =\langle \cdot \rangle_{\beta,h}$ denotes the Gibbs measure induced by $H_N$, that is 
		\[ \langle f \rangle = \frac{1}{Z_N}  \sum_{\sigma \in \{-1,1\}^N}  f(\sigma) \,e^{H_N(\sigma)} \hspace{0.5cm} \text{with} \hspace{0.5cm} Z_N =    \sum_{\sigma \in \{-1,1\}^N}  e^{H_N(\sigma)} \]
for $ f: \{-1,1\}^N \to \mathbb{R} $. We also occasionally abbreviate the covariance of two observables $f,g:\{-1,1\}^N\to\mathbb{R}$ by $ \langle f;g \rangle = \langle f g\rangle - \langle f\rangle \langle g\rangle$ so that $m_{ij} = \langle \sigma_i  ;\sigma_j \rangle$. By standard properties of GOE matrices, the validity of \eqref{eq:resh0} naturally suggests the (well-known) existence of a phase transition at $\beta =1$ and \cite{BSXY} verifies the validity of \eqref{eq:resh0} indeed in the replica symmetric region at $h=0$, that is, for all $\beta <1 $. It implies in particular that 
		\be \label{eq:normlm}\lim_{N\to\infty} \| \tbf M \|_{\text{op}} = \frac1{ (1-\beta)^{2}},    \ee
so that the operator norm of $ \tbf M$ is typically of order one as long as $\beta <1$. The boundedness of $\| \tbf M \|_{\text{op}}$ was also proved independently in \cite{EAG}, in the sense that $ \bE\,\| \tbf M \|_{\text{op}} = O(1)$.  

Denoting by $R :\{-1,1\}^N\times \{-1,1\}^N \to\bR $ the overlap so that $R (\sigma,\tau) = \frac1N (\sigma,\tau)$ and by $ \tbf m = (m_1,\ldots, m_N)   =  ( \langle \sigma_1\rangle, \ldots, \langle \sigma_N\rangle) \in (-1,1)^N$ the magnetization, the close relation between replica symmetry and $ \tbf M$ is perhaps most easily illustrated by observing that
		\be \label{eq:OLvsM} \frac1{N^2} \|\tbf M\|_\text{op}^2 \leq  \big\langle  \big(   R - \langle R\rangle \big)^2 \big\rangle =   \frac1{N^2} \text{tr}\,\tbf M^*\tbf M + \frac2{N^2}(\tbf m, \tbf M \,\tbf m)   \leq \frac1N \big(\|\tbf M\|_\text{op}^2 + 2 \|\tbf M\|_\text{op}\big).  \ee
Thus, if $ \langle R\rangle =1- \frac{\text{tr}\, \tbf M}N $ concentrates, suitable control on $\|\tbf M\|_\text{op}$ implies replica symmetry. A blow-up of $\|\tbf M\|_\text{op}$ (e.g.\ in a sense comparable to \eqref{eq:normlm} at $(\beta,h) =(1,0)$), on the other hand, may heuristically suggest that $ R $ does not self-average which prohibits replica symmetry.

The fact that the operator norm of $\tbf M $ detects the RS-RSB phase transition temperature correctly in the case of a vanishing external field $h=0$ (for which $ \langle R\rangle_{\beta,h=0}=0$), by \eqref{eq:resh0}, \eqref{eq:normlm} and  \eqref{eq:OLvsM}, naturally suggests to study the norm boundedness of $\tbf M$ and its relation to the RS phase for $h\neq 0$. Although a complete proof is still lacking, it is generally expected that the SK model is replica symmetric for all $(\beta,h)$ satisfying the AT \cite{AT} condition\footnote{Expectations over effective random variables, like for instance $Z$ in Eq.\ \eqref{eq:AT} and Eq.\ \eqref{eq:defq}, that are independent of the disorder $\{g_{ij}\}$ in \eqref{eq:HN}, are denoted by $E(\cdot)$, which is to be distinguished from expectations $\bE(\cdot)$ over the disorder in \eqref{eq:HN}. } 
		\be \label{eq:AT}\beta^2 E \, \text{sech}^{4}(h + \beta\sqrt{q} Z) <1, \ee
where here and in the following $ q =q_{\beta,h} $ denotes the unique solution to
		\be \label{eq:defq} q = E\tanh^2(h +\beta\sqrt{q} Z) \ee
and where $Z\sim \cN(0,1)$ denotes a standard Gaussian independent of the disorder $\{g_{ij}\}$.	
		
Proving the norm boundedness of $\tbf M$, or an analogue of \eqref{eq:resh0}, for $h\neq 0$ under \eqref{eq:AT} is a challenging problem. In this work instead, we impose further assumptions on $(\beta,h)$ which ensure exponential concentration of the overlap, based on results obtained by Talagrand in \cite{talagrand2010mean-vol1,talagrand2010mean-vol2}. More precisely, we assume in the sequel the AT condition \eqref{eq:AT} together with 
		\be  \label{eq:AT+} (\partial_m \Phi )(m,q') \big|_{m = 1} < 0 \text{ for all } q <q' \leq 1,  \ee 
where for $0\leq m \leq 1$ and $q \leq q'\leq 1$ we set
	\[\Phi(m,q') = \log 2 + \frac{\beta^2}{4}(1-q')^2 -  \frac{\beta^2}{4}m ({q'}^2 - q^2) + \frac 1 m E \log E_{Z'} \cosh^m ( h + \beta \sqrt{q}Z + \beta \sqrt{q'-q}Z')\,.\]
Here, $ (Z, Z')\sim\cN(0,\text{id}_{\bR^2})$ and $ E_{Z'}(\cdot)$ denotes the expectation over $Z'$. Conditions \eqref{eq:AT} \& \eqref{eq:AT+} ensure that we consider parameters $(\beta,h)$ inside a suitable interior region of the replica symmetric phase of the model (see \cite[Chapter 13]{talagrand2010mean-vol2} for more details on this). They imply in particular locally uniform in $\beta$ exponential concentration of the overlap, which in turn enables an efficient computation of many observables. 

While we recall the results from \cite{talagrand2010mean-vol1, talagrand2010mean-vol2} that are based on \eqref{eq:AT} \& \eqref{eq:AT+} and that are used in the sequel in Section \ref{sec:kpt-est}, let us already point out that \eqref{eq:AT} \& \eqref{eq:AT+} are satisfied for every external field strength $h>0$ whenever $\beta <1$: indeed following \cite{talagrand2010mean-vol2} one has
		\[\begin{split}
		( \partial_m \Phi )(m,q') \big|_{m = 1} (q) &=0, \hspace{0.5cm} \partial_{q'} \Big( ( \partial_m \Phi )(m,q') \big|_{m = 1} \Big)(q)  = 0, \\
		 \partial_{q'}^2 \Big( ( \partial_m \Phi )(m,q') \big|_{m = 1}\Big) & =  \frac{\beta^2}{2}\Big(\beta^2 E \frac{E_{Z'}\,\text{sech}^{3}( h + \beta\sqrt q Z+ \beta \sqrt{q'-q}Z') }{E_{Z'}  \cosh (h + \beta\sqrt q Z+ \beta \sqrt{q'-q}Z')} -1  \Big)
		\end{split} \]
and simple monotonicity arguments as in \cite[Section 3]{chen2021almeida} imply
		\[\begin{split}
		 \beta^2 \frac{  E_{Z'}\,\text{sech}^{3}( h + \beta\sqrt q Z+ \beta \sqrt{q'-q}Z') }{E_{Z'}  \cosh (h + \beta\sqrt q Z+ \beta \sqrt{q'-q}Z')} &   \leq  \beta^2 \,E_{Z'}\,  \text{sech}^{4}( h + \beta\sqrt q Z+ \beta \sqrt{q'-q}Z').
		 \end{split}\]
Thus, $ ( \partial_m \Phi )(m,q') \big|_{m = 1} < 0 $ for all $ q < q'  \leq 1$ whenever $\beta <1$. 
	
Assuming from now on \eqref{eq:AT} \& \eqref{eq:AT+}, our strategy to prove the norm boundedness of $\tbf M $ combines recent ideas from \cite{Bolt1, Bolt2, ABSY, CT21, celentano2022sudakov, EAG, BSXY}. While conceptually similar to the main strategy in \cite{BSXY}, notice that for $h\neq 0$, there is no analogue of the exact graphical representation of the entries of $\tbf M$ for $h=0$ available. For this reason, the implementation of our strategy is quite different from \cite{BSXY} and it provides several results of independent interest: in the first step of our analysis, we compute $\tbf M$ up to an error which has vanishing Frobenius norm in the limit $N\to \infty$. To state this result precisely, we denote from now on by $ \langle \cdot \rangle^{(i)}$ the Gibbs measure obtained from $\langle \cdot\rangle$ by removing the spin $\sigma_i$. 
		
\begin{proposition} \label{prop:hTAP}
	Assume that $ (\beta,h)$ satisfy \eqref{eq:AT} \emph{\&} \eqref{eq:AT+}. Then, there exists a constant $C=C_{\beta,h} >0$ such that for all $1\leq i\neq j\leq N$, we have that
		\[ \label{eq:hTAP} \E \Big ( m_{ij} - \beta \big(1 - m_i^2\big)\sum_{k =1}^N g_{ik} m_{kj}^{(i)} \Big)^2 \leq CN^{-5/2}. \]
\end{proposition}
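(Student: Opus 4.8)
The proof is based on the elementary identity
\[ m_{ij} \;=\; \frac{1-m_i^2}{2}\Big(\langle\sigma_j\rangle_{+}^{(i)}-\langle\sigma_j\rangle_{-}^{(i)}\Big), \]
where $\langle\cdot\rangle_{\pm}^{(i)}$ denotes the Gibbs measure on $(\sigma_k)_{k\neq i}$ obtained from $\langle\cdot\rangle$ by conditioning on $\sigma_i=\pm1$; it follows at once from $\langle\sigma_j\rangle_{\pm}^{(i)}=(m_j\pm\langle\sigma_i\sigma_j\rangle)/(1\pm m_i)$. Conditioning on $\sigma_i=\pm1$ amounts, up to an irrelevant additive constant, to replacing the Hamiltonian of the remaining spins by $H_N^{(i)}\pm\beta(\mathbf g_i,\sigma)$, with $\mathbf g_i:=(g_{ik})_{k\neq i}$. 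Introducing therefore, for $u\in\mathbb R$, the scalar function
\[ \phi(u)\;:=\;\big\langle\sigma_j\big\rangle_u^{(i)},\qquad \langle\cdot\rangle_u^{(i)}:=\text{the Gibbs measure of }H_N^{(i)}+u\,\beta(\mathbf g_i,\sigma), \]
we have $m_{ij}=\tfrac12(1-m_i^2)(\phi(1)-\phi(-1))$, and the derivatives of $\phi$ are cavity correlation functions: $\phi^{(n)}(u)$ is the $(n{+}1)$-st joint cumulant of $\sigma_j$ with $n$ copies of $\beta(\mathbf g_i,\sigma)$ in $\langle\cdot\rangle_u^{(i)}$; in particular $\phi'(0)=\beta\sum_k g_{ik}m_{kj}^{(i)}$, which is precisely the expression appearing on the right-hand side of the claim (the $k=i$ term vanishing since $g_{ii}=0$).

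Because $\phi(1)-\phi(-1)$ is an odd perturbation about $0$, Taylor's formula with integral remainder gives
\[ \phi(1)-\phi(-1)\;=\;2\,\phi'(0)+\int_0^1\frac{(1-s)^2}{2}\,\big(\phi'''(s)+\phi'''(-s)\big)\dif s, \]
so that $m_{ij}-\beta(1-m_i^2)\sum_k g_{ik}m_{kj}^{(i)}$ equals $\tfrac{1-m_i^2}{2}$ times this remainder. Since $|1-m_i^2|\le1$, Cauchy--Schwarz in $s$ reduces the Proposition to
\[ \sup_{|s|\le1}\ \bE\big[\phi'''(s)^2\big]\;\le\;C\,N^{-5/2}. \]
For $|s|\le1$ the measure $\langle\cdot\rangle_s^{(i)}$ differs from the cavity measure $\langle\cdot\rangle^{(i)}$ only by a per-site field perturbation of size $O(N^{-1/2})$, hence stays well inside the replica symmetric region uniformly in $s$; all the high-temperature input of Section~\ref{sec:kpt-est} --- exponential overlap concentration, the resulting bound $\sum_{k,l}\bE(m_{kl}^{(i)})^2=O(N)$, and decay estimates for the $3$-, $4$- and $5$-point connected correlations --- applies to it with uniform constants. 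Expanding the fourth cumulant in the Gaussians $\mathbf g_i$,
\[ \phi'''(s)\;=\;\beta^3\sum_{k,l,m}g_{ik}g_{il}g_{im}\,\big\langle\sigma_j;\sigma_k;\sigma_l;\sigma_m\big\rangle_{s,c}^{(i)}, \]
where $\langle\cdot;\dots;\cdot\rangle_{s,c}^{(i)}$ is the connected correlation in $\langle\cdot\rangle_s^{(i)}$.

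The core of the argument is the estimate on $\bE[\phi'''(s)^2]$. Conditioning on the $\sigma$-algebra generated by $\{g_{kl}:k,l\neq i\}$, the vector $\mathbf g_i$ is an independent Gaussian, and one may Taylor-expand $\langle\sigma_j;\sigma_k;\sigma_l;\sigma_m\rangle_{s,c}^{(i)}$ in $\mathbf g_i$ around the field-free point: the corrections involve derivatives of this cumulant, which are higher connected correlations weighted by $\beta$ and hence of lower order, while the leading term carries the field-free cavity $4$-point cumulant, which is independent of $\mathbf g_i$. Squaring and taking expectations then produces a sixth Gaussian moment of $\mathbf g_i$, which forces coincidences among the indices $k,l,m$ (and possibly with $j$); one organizes the resulting sum according to the coincidence pattern. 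A crude application of Cauchy--Schwarz at this point yields only $\bE[\phi'''(s)^2]=O(N^{-2})$, which is insufficient, and the improvement to $O(N^{-5/2})$ rests on two cancellation mechanisms that must be tracked with care. First, the coincidences forced by the Gaussian average create contracted sums such as $\sum_k\langle\sigma_j;\sigma_k;\sigma_k;\sigma_m\rangle_c^{(i)}$, which are not merely $O(N)$ times a single $4$-point function but, by the cumulant identity obtained from $\sum_k\sigma_k^2=N-1$, equal $-2\sum_k m_{kj}^{(i)}m_{km}^{(i)}-2\sum_k m_k^{(i)}\langle\sigma_j;\sigma_k;\sigma_m\rangle_c^{(i)}$, which is of order $N^{-1/2}$ in $L^2$ once one invokes $\sum_{k,l}\bE(m_{kl}^{(i)})^2=O(N)$ and the disorder-decorrelation of the $3$-point functions. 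Second, cavity cumulants whose variable indices are disjoint decorrelate over the disorder, which kills the off-diagonal contributions to sums of the form $\sum_{k\neq k'}\bE[\,\cdots\langle\cdots;\sigma_k\rangle_c\,\langle\cdots;\sigma_{k'}\rangle_c\,]$. Assembling the $O(N^{-3})$ bounds obtained for every coincidence pattern, together with the lower-order corrections coming from $s\neq0$, gives the required estimate. Thus the only genuine difficulty is the organization and bookkeeping of these cancellations; the analytic inputs --- correlation decay and correlation-matrix moment bounds for the cavity SK model, uniform over $O(N^{-1/2})$ field perturbations --- are all provided by hypotheses~\eqref{eq:AT}--\eqref{eq:AT+} via the results recalled in Section~\ref{sec:kpt-est}.
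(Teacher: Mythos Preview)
Your starting identity $m_{ij}=(1-m_i^2)\,\delta_i m_j^{[i]}$ and the reduction to a remainder in the cavity coupling are the same as the paper's. The route diverges at the expansion: you Taylor--expand $\phi(u)=\langle\sigma_j\rangle_u^{(i)}$ in a \emph{deterministic} scalar $u$, landing on the remainder $\int (1-s)^2\phi'''(s)\,\dif s$ with
\[
\phi'''(s)=\beta^3\sum_{k,l,m} g_{ik}g_{il}g_{im}\,m_{jklm}^{s,(i)},
\]
whereas the paper realizes $(g_{ij})_{j\neq i}$ as Brownian motions $(g_{ij}(t))_{t\in[0,1]}$ and applies It\^o's formula to $\delta_i m_j^{[i]}(t)$, obtaining
\[
\frac{m_{ij}}{1-m_i^2}-\beta\sum_l g_{il}m_{lj}^{(i)}
=\beta\sum_k\!\int_0^1\!\big(\Delta_i m_{jk}^{[i]}\big)(t)\,\dif g_{ik}(t)
+\tfrac{\beta^2}{2}(1-N^{-1})\!\int_0^1\!\big(\delta_i m_{j;1}^{[i]}\big)(t)\,\dif t .
\]
By It\^o isometry the square reduces to $L^2$ bounds on the \emph{two}-point object $\Delta_i m_{jk}^{[i]}$ and the $(1,1)$-point object $\delta_i m_{j;1}^{[i]}$, both handled in one line by Lemma~\ref{lm:delta-m-estimate}. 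The stochastic parametrization buys exactly this: the integrand at time $t$ is adapted, so one never faces a product of six $g_{i\cdot}$'s multiplying a $g_{i\cdot}$-dependent cumulant.

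Your sketch, by contrast, must control $\bE[\phi'''(s)^2]$ with the four-point cumulant $m_{jklm}^{s,(i)}$ \emph{depending on $\mathbf g_i$} when $s\neq 0$. The step ``Taylor-expand $m_{jklm}^{s,(i)}$ in $\mathbf g_i$ around the field-free point'' is not a closed argument: every order of that expansion produces another factor $\sum_n g_{in}$ times a higher cumulant, and the tail of this new expansion is again evaluated at $s\neq 0$ with the same coupling. What actually works here is iterated Gaussian integration by parts in the $g_{ik}$ (exactly the device of Lemma~\ref{g-delta-est}); after removing the explicit $g$'s one is left with averaged products of $(k,p)$-point functions in the perturbed measure, which Lemma~\ref{corrbound} bounds uniformly in $s$ since the parameters stay inside $\cA_{RS^-}$. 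Your two ``cancellation mechanisms'' (the contracted sum $\sum_k m_{jkkm}=N\,m_{jm;1}$ and the disorder-decorrelation of disjoint cumulants) are precisely the extra $N^{-1/2}$ gain encoded in Lemma~\ref{corrbound} for $2k+p\ge 3$; invoking that lemma replaces the ad hoc bookkeeping you outline. With these fixes your route does yield $N^{-5/2}$, but as written the decisive estimate is only asserted.
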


\noindent \textbf{Remark 1.1.} \emph{Eq.\ \eqref{eq:hTAP} shows that $ m_{ij} \approx \beta\big(1 - m_i^2\big)\sum_{k =1}^N g_{ik} m_{kj}^{(i)} $. Iterating the r.h.s., one obtains a representation of $m_{ij}$ as a sum over weights $w(\gamma)$ of self-avoiding paths $\gamma$ from $i$ to $j$: each edge $ e = \{i_1,i_2\} \in \gamma$ contributes a factor $\beta g_{i_1i_2}$ to $w(\gamma)$ and each vertex a factor $\big(1-(m_i^{(S)})^2\big)$, for suitable $S\subset [N]$. For more details on a heuristic derivation of the path representation, see \cite{ABSY,BvDS}. Since we do not use the path representation to control $\| \emph{\tbf M}\|_{\emph{op}}$, we omit further details on its derivation, but mention that it generalizes the path representation of $ \tbf M_{\beta, h=0}$ obtained in \cite{BSXY} in the range $\beta <1$. }

\vspace{0.1cm}
\noindent \textbf{Remark 1.2.} \emph{A similar bound as in Prop.\ \ref{prop:hTAP} was proved in \cite[Theorem 1.1]{ABSY}. Compared to the latter, Prop.\ \ref{prop:hTAP} substantially improves the decay rate $ N^{-1-\epsilon}$ from \cite{ABSY} to a decay rate of order $N^{-5/2}$. The novel key ingredient that we use here consists of moment bounds on a suitable class of multi-point correlation functions that are naturally linked to strong overlap control. For the details, see in particular Lemmas \ref{corrbound}, \ref{lm:delta-m-estimate} and \ref{g-delta-est} below. }
\vspace{0.2cm}

As discussed in detail in \cite{ABSY} (for small $\beta$), after expressing the $ m_{kj}^{(i)}$ through the original two point functions $m_{kj}$, \eqref{eq:hTAP} suggests the validity of the resolvent type equation  
		\be\label{eq:resh} \tbf M \approx \frac1{\tbf D + \beta^2(1-q) - 2\beta^2 N^{-1}\tbf m\otimes \tbf m  -\beta \tbf G  }, \ee
where $\tbf D \in\bR^{N\times N}$ is the diagonal matrix with entries 
		\be \label{eq:Dij} (\tbf D)_{ij} = m^{-1}_{ii} \delta_{ij}= \frac1{1-m_i^2}\delta_{ij}.\ee 
Motivated by this approximation, the second step of our analysis proves the following. 
\begin{proposition} \label{prop:M-H} Assume that $ (\beta,h)$ satisfy \eqref{eq:AT} \emph{\&} \eqref{eq:AT+}, let $q$ denote the solution of \eqref{eq:defq} and let $\tbf D \in\bR^{N\times N}$ be defined as in \eqref{eq:Dij}. Then we have that
		\[  \|(\emph{\tbf D} + \beta^2(1-q) -  \beta \emph{\tbf G} )\emph{\tbf M}  - \emph{id}_{\bR^{N }}  \|_{\emph{op}} \leq o(1) \| \emph{\tbf M}\|_{\emph{op}} + O(1)  \]
for an error $o(1)\to 0 $ as $N\to \infty$ in probability and where the error $O(1) \geq 0$ is of order one with high probability: there exists a constant $C=C_{\beta,h}>0$ such that
		\[ \bP \big( O(1) \leq K \big) \geq 1 - CK^{-2} \]
for every $K>0$. 
\end{proposition}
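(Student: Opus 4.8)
The plan is to promote Proposition~\ref{prop:hTAP} to the matrix identity behind \eqref{eq:resh}, with quantitative errors, and then absorb the rank-one correction of \eqref{eq:resh} into the right-hand side. First I would left-multiply the approximate identity of Proposition~\ref{prop:hTAP} by $\tbf D$. Since $m_{ii}=\langle\sigma_i;\sigma_i\rangle=1-m_i^2$, the diagonal of $\tbf D\tbf M$ is exactly $\mathrm{id}$, while for $i\neq j$ one gets $(\tbf D\tbf M)_{ij}=\beta\sum_k g_{ik}m^{(i)}_{kj}+(\tbf D)_{ii}\,\mathcal E_{ij}$ with $\mathcal E_{ij}:=m_{ij}-\beta(1-m_i^2)\sum_k g_{ik}m^{(i)}_{kj}$. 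By Proposition~\ref{prop:hTAP}, $\bE\sum_{i\neq j}\mathcal E_{ij}^2\leq N^2\cdot CN^{-5/2}=o(1)$, so the matrix $(\mathcal E_{ij})$ has Frobenius norm $o(1)$ with high probability; combined with the crude a priori bound $\|\tbf D\|_{\text{op}}=\max_i(1-m_i^2)^{-1}=N^{o(1)}$ with high probability — valid because $1-m_i^2\geq\langle\cosh\ell_i\rangle^{-2}$, with $\ell_i:=h+\beta\sum_{k\neq i}g_{ik}\sigma_k$ the cavity field at $i$, so that $(1-m_i^2)^{-1}$ takes only sub-polynomially large values by Gaussian-type tail bounds on $\ell_i$ — one obtains $\tbf D\tbf M=\mathrm{id}+\tbf F+o(1)$ in operator norm, where $\tbf F$ has zero diagonal and off-diagonal entries $\tbf F_{ij}=\beta\sum_k g_{ik}m^{(i)}_{kj}$. (Alternatively one uses $L^4$-control on $\mathcal E_{ij}$, which the proof of Proposition~\ref{prop:hTAP} also provides.)

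Next I would identify $\tbf F$ with $\beta\tbf G\tbf M-\beta^2(1-q)\tbf M+2\beta^2N^{-1}\,\tbf m\otimes\tbf m\,\tbf M$ up to controllable errors. For $i\neq j$, write $\beta\sum_k g_{ik}m^{(i)}_{kj}=(\beta\tbf G\tbf M)_{ij}+\beta\sum_k g_{ik}(m^{(i)}_{kj}-m_{kj})$, and expand the cavity correction $m^{(i)}_{kj}-m_{kj}$ as in \cite{ABSY,BSXY}: integrating out $\sigma_i$ tilts $\langle\cdot\rangle^{(i)}$ by $\cosh(\ell_i)/\langle\cosh\ell_i\rangle^{(i)}$, and expanding $\log\cosh$ about the cavity mean $h+\beta\sum_k g_{ik}m^{(i)}_k$ produces a leading term $m_i\beta\sum_l g_{il}\langle\sigma_k;\sigma_j;\sigma_l\rangle^{(i)}$ plus higher-order terms. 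Multiplying by $\beta g_{ik}$, summing over $k$, and using that the Gaussian family $\{g_{il}\}_l$ is independent of $\langle\cdot\rangle^{(i)}$, I would split the resulting double sum over neighbours into its coinciding-index part ($l\in\{k,j\}$) and the rest. The coinciding-index part is evaluated with the Ising cumulant identity $\langle\sigma_k;\sigma_j;\sigma_k\rangle^{(i)}=-2m^{(i)}_k m^{(i)}_{kj}$ and the concentrations $g_{ik}^2\approx N^{-1}$ and $N^{-1}\sum_k(m^{(i)}_k)^2\to q$ (self-averaging of the magnetization, a consequence of the exponential overlap concentration in \cite{talagrand2010mean-vol2}); this reconstructs, up to vanishing errors, the deterministic Onsager shift $-\beta^2(1-q)\tbf M$ and the rank-one term $2\beta^2N^{-1}\,\tbf m\otimes\tbf m\,\tbf M$ on the off-diagonal entries, exactly as in \eqref{eq:resh}. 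The diagonal of $\beta\tbf G\tbf M$ is not covered by Proposition~\ref{prop:hTAP}; I would treat $(\beta\tbf G\tbf M)_{ii}=\beta\sum_k g_{ik}m_{ik}$ by substituting the approximation of Proposition~\ref{prop:hTAP} for $m_{ik}$, whence the same concentration gives $(\beta\tbf G\tbf M)_{ii}=\beta^2(1-q)(1-m_i^2)+(\text{small})$, matching the diagonal of $-\beta^2(1-q)\tbf M$. Collecting everything, $(\tbf D+\beta^2(1-q)-\beta\tbf G)\tbf M-\mathrm{id}=2\beta^2N^{-1}\,\tbf m\otimes\tbf m\,\tbf M+\tbf R$, where $\tbf R$ gathers all error matrices.

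It then remains to estimate the two pieces. For the rank-one term, $\|2\beta^2N^{-1}\,\tbf m\otimes\tbf m\,\tbf M\|_{\text{op}}=2\beta^2N^{-1}\|\tbf m\|\,\|\tbf M\tbf m\|\leq 2\beta^2N^{-1/2}\|\tbf M\tbf m\|$; by the a priori estimate $\bE\|\tbf M\tbf m\|^2=\bE\sum_i\langle\sigma_i;(\sigma,\tbf m)\rangle^2=O(N)$ — a correlation-function moment bound encoding overlap concentration, of the type furnished by Lemma~\ref{corrbound} — this term has bounded second moment, hence is $O(1)$ with the required tail $\bP(O(1)\leq K)\geq 1-CK^{-2}$. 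The remainder $\tbf R$ splits into (i) pieces still carrying a factor $\tbf M$, of the form $\tbf A\tbf M$ with $\bE\|\tbf A\|_F^2=o(1)$, or $N^{-1}\tbf A\tbf M^2$ with $\|\tbf A\|_{\text{op}}=O(1)$ — the latter controlled via $\|\tbf M^2\|_{\text{op}}=\|\tbf M\|_{\text{op}}^2$ and the a priori bound $\|\tbf M\|_{\text{op}}\leq\|\tbf M\|_F=O(\sqrt N)$ with high probability, which follows from \eqref{eq:OLvsM} and overlap concentration since $N^{-2}\mathrm{tr}\,\tbf M^2\leq\langle(R-\langle R\rangle)^2\rangle=O(N^{-1})$ — and these contribute $o(1)\|\tbf M\|_{\text{op}}$; and (ii) stand-alone matrices $\tbf B$ (the non-coinciding-index Gaussian fluctuations $\sum_{k\neq l}g_{ik}g_{il}\langle\sigma_k;\sigma_j;\sigma_l\rangle^{(i)}$, the higher-order cavity terms, and the concentration errors) with $\bE\|\tbf B\|_F^2=O(1)$, contributing $O(1)$ with a Chebyshev tail. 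After pulling out the independent Gaussians $\{g_{il}\}$, all the required $\bE\|\cdot\|_F^2$ bounds reduce to moment bounds on cavity two-, three- and four-point functions of exactly the type provided by Lemmas~\ref{corrbound}, \ref{lm:delta-m-estimate} and \ref{g-delta-est}. Summing the contributions yields $\|(\tbf D+\beta^2(1-q)-\beta\tbf G)\tbf M-\mathrm{id}\|_{\text{op}}\leq o(1)\|\tbf M\|_{\text{op}}+O(1)$ with the stated tail.

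The main obstacle is twofold. First, the cavity bookkeeping: one must carry enough orders of the $\log\cosh$ expansion and track carefully which index coincidences produce a systematic contribution versus only a mean-zero fluctuation, so that the leading pieces assemble into exactly the operator of \eqref{eq:resh} and no term of $O(1)$ operator norm survives on the wrong side. Second — and this is where \eqref{eq:AT} and \eqref{eq:AT+} are essential — the moment bounds on the multi-point (cavity) correlation functions controlling $\tbf R$ and the rank-one term; these are genuinely quantitative consequences of the exponential overlap concentration of \cite{talagrand2010mean-vol1,talagrand2010mean-vol2}, and only with the sharp rates (both in Proposition~\ref{prop:hTAP} and in these lemmas) can one absorb the $\Theta(N^2)$-, $\Theta(N^3)$- and $\Theta(N^4)$-fold sums arising in the Frobenius-norm estimates. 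A minor additional point is the possible largeness of the entries of $\tbf D$, handled by the $N^{o(1)}$ bound on $\|\tbf D\|_{\text{op}}$ (or by a truncation).
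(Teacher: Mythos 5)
Your overall architecture is the same as the paper's: start from Proposition~\ref{prop:hTAP}, recognize that the whole content of the statement is the replacement of the cavity correlations $m_{kj}^{(i)}$ by the full ones $m_{kj}$ (which is where the Onsager shift $\beta^2(1-q)$ must come from), control the fluctuating remainders in Frobenius norm via the $(k,p)$-point bounds of Lemmas~\ref{corrbound}--\ref{g-delta-est}, and accept one genuinely $O(1)$ matrix with a Chebyshev tail (your rank-one piece and non-coinciding Gaussian fluctuations play the role of the paper's $\tbf Y_4$). Your auxiliary estimates ($\bE\|\tbf M\tbf m\|^2=O(N)$, $\|\tbf M\|_F=O(\sqrt N)$, the $N^{o(1)}$ bound on $\|\tbf D\|_{\text{op}}$) are all provable with the tools at hand, although the last one is unnecessary: the proof of Proposition~\ref{prop:hTAP} actually establishes the bound \eqref{eq:2435} directly on $m_{ij}/(1-m_i^2)-\beta\sum_l g_{il}m_{lj}^{(i)}=\delta_im_j^{[i]}-\beta\sum_l g_{il}m_{lj}^{(i)}$, so no division by $1-m_i^2$ is ever needed.

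The genuine gap is in the middle step, precisely the ``cavity bookkeeping'' you flag as the main obstacle. In your accounting, the coinciding-index part ($l\in\{k,j\}$) of the first-order term $m_i\beta\sum_l g_{il}\langle\sigma_k;\sigma_j;\sigma_l\rangle^{(i)}$ is claimed to reconstruct \emph{both} the Onsager shift and the rank-one term. It does not: the $l=k$ coincidence gives $-2\beta^2 m_i\sum_k g_{ik}^2\,m_k^{(i)}m_{kj}^{(i)}\approx -2\beta^2 N^{-1}m_i(\tbf m^{(i)},\tbf M^{(i)}e_j)$, which is only the rank-one correction, and the $l=j$ coincidence is $O(N^{-1})$ per entry with $O(1)$ Frobenius norm. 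The Onsager term $-\beta^2(1-q)m_{ij}$ arises instead from the \emph{second}-order term of your tilt expansion, namely the product of the two conditional-mean shifts, $(\delta_im_k^{[i]})(\delta_im_j^{[i]})(1-m_i^2)=(\delta_im_k^{[i]})\,m_{ij}$; after multiplying by $\beta g_{ik}$, summing over $k$, and using $\delta_im_k^{[i]}\approx\beta\sum_l g_{il}m_{lk}^{(i)}$ together with $g_{ik}^2\approx N^{-1}$ and $m_{kk}^{(i)}=1-(m_k^{(i)})^2$, the $l=k$ coincidence of \emph{that} double sum yields $\beta^2(1-q)m_{ij}$. As written, your expansion would come up short of exactly the term that puts $\beta^2(1-q)$ into the operator, and you would also have to verify that the remaining second- and higher-order pieces are controllable --- note that not all of them have $O(1)$ Frobenius norm (e.g.\ pieces proportional to $g_{ij}$ times a bounded diagonal have Frobenius norm of order $\sqrt N$ and must be recognized as bounded-operator-norm matrices). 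The paper sidesteps all of this with the exact conditioning identity $m_{jk}=(\delta_im_j^{[i]})m_{ik}+\epsilon_im_{jk}^{[i]}+m_i\,\delta_im_{jk}^{[i]}$, which splits $\tbf Y$ into $\tbf Y_1\tbf M+\tbf Y_2-\tbf Y_3-\tbf Y_4$ with $\tbf Y_1$ \emph{diagonal}; the entire Onsager mechanism then reduces to the single scalar concentration $\beta\sum_j g_{ij}\delta_im_j^{[i]}\to\beta^2(1-q)$ uniformly in $i$ (proved via a fourth-moment bound --- your diagonal treatment also needs a uniform-in-$i$ statement here, not just a second-moment one), the higher-order terms are absorbed wholesale into $\Delta_im_{jk}^{[i]}$ and killed by Lemma~\ref{g-delta-est}, and the rank-one term never needs to appear.
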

\noindent \textbf{Remark 1.3.} \emph{In view of the resolvent heuristics \eqref{eq:resh}, it seems natural to expect that  
		\be \label{eq:resconv}\lim_{N\to\infty} \big\|(\emph{\tbf D} + \beta^2(1-q) - 2\beta^2 N^{-1}\tbf m\otimes \tbf m-  \beta \emph{\tbf G} )\emph{\tbf M}  - \emph{id}_{\bR^{N }} \big \|_{\emph{op}}  = 0, \ee
with high probability. Since the methods presented in this paper do not allow for a simple proof of \eqref{eq:resconv} and since the norm boundedness of $\tbf M$ is all that matters in connection with replica symmetry (cf.\ \eqref{eq:OLvsM}), we leave a proof of \eqref{eq:resconv} open for possible future work.}
\vspace{0.2cm}

Finally, in the last step of our analysis we prove the norm boundedness of $ \tbf M $ by deriving a uniform lower bound on the matrix $\tbf D + \beta^2(1-q) -  \beta \tbf G $ appearing in Prop.\ \ref{prop:M-H}. Compared to the special case $h=0$, in which case the resolvent approximation of $\tbf M$ only involves the GOE interaction $\tbf G$, let us point out that it is quite non-trivial to relate the infimum of the spectrum of $\tbf D + \beta^2(1-q) -  \beta \tbf G $ to the AT condition \eqref{eq:AT} due to the implicit dependence of $\tbf D$ on $\tbf G$. Our bounds are summarized in the next result which is a direct consequence of combining recent ideas from \cite{Bolt1, Bolt2, CT21, celentano2022sudakov}. In its statement, $ (\tbf m^{(k)})_{k\geq 1}$ denotes Bolthausen's iterative TAP solution  \cite{Bolt1,Bolt2}, whose construction is recalled in detail in Section \ref{sec:AMP} below.

\begin{proposition} \label{prop:op-H}
Assume that $ (\beta,h)$ satisfy the AT condition \eqref{eq:AT} and let $q$ denote the solution of \eqref{eq:defq}. Then, there exists $c =c _{\beta,h}>0$, which is independent of $N\in\bN$, so that 
		\be\label{eq:TAPHess} \emph{\tbf D}_{\tbf m^{(k)}}  + \beta^2(1-q) - \frac{2\beta^2}{N} \tbf m^{(k)}\otimes \tbf m^{(k)} -\beta \emph{\tbf G}  \geq c \ee
for $k$ large enough, with probability tending to one as $N\to \infty$. Here, $\tbf m^{(k)}$ denotes Bolthausen's iterative TAP solution at step $k$ and $\tbf D_{\tbf m^{(k)}}\in \bR^{N\times N}$ has entries
		$$ (\tbf D_{\tbf m^{(k)}} )_{ij}=\frac1{1-\big(m_i^{(k)}\big)^2}\delta_{ij}. $$

Moreover, assuming $ (\beta,h)$ to satisfy both \eqref{eq:AT} \& \eqref{eq:AT+}, it follows that  
		\[    \emph{\tbf D}  + \beta^2(1-q)  -\beta \emph{\tbf G}\geq \emph{\tbf D}  + \beta^2(1-q) - \frac{2\beta^2}{N}\tbf m\otimes \tbf m -\beta \emph{\tbf G}  \geq c   \]
with probability tending to one as $N\to\infty$, where $\tbf D \in\bR^{N\times N}$ is defined as in \eqref{eq:Dij}. In particular, $\emph{\tbf D}  + \beta^2(1-q) -\beta \emph{\tbf G} $ is invertible with high probability and satisfies
		\[ \big\| \big(\emph{\tbf D}  + \beta^2(1-q) -\beta \emph{\tbf G} \big)^{-1}\big\|_{\emph{op}}\leq c^{-1} <\infty.\]
\end{proposition}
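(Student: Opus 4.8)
The plan is to establish the spectral lower bound \eqref{eq:TAPHess} along Bolthausen's iterative TAP sequence first, and then transfer it to the genuine diagonal matrix $\tbf D$ built from the true magnetization $\tbf m$ using the convergence of $\tbf m^{(k)}$ to $\tbf m$ that is available under \eqref{eq:AT} \& \eqref{eq:AT+}. For the first part I would proceed as in \cite{Bolt1, Bolt2, CT21, celentano2022sudakov}: the matrix on the left of \eqref{eq:TAPHess} is exactly (a shift of) the Hessian of the TAP free energy functional evaluated at the AMP iterate $\tbf m^{(k)}$, so the task is to show this Hessian is uniformly positive definite. The key input is that the empirical distribution of the coordinates of $\tbf m^{(k)}$ is, for $k$ large and $N\to\infty$, close to the law of $\tanh(h+\beta\sqrt q Z)$ with $Z\sim\cN(0,1)$ (the state evolution fixed point corresponding to \eqref{eq:defq}); this is the content of Bolthausen's analysis and its refinements. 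Hence $\tbf D_{\tbf m^{(k)}}$ is, in a suitable sense, close to a multiplication operator with symbol $\operatorname{sech}^{-2}(h+\beta\sqrt q Z)$, and one reduces to a deterministic linear-algebra/self-consistent-equation problem.

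The core estimate is then: for the shifted linearized operator $\tbf A \coloneqq \tbf D_{\tbf m^{(k)}} + \beta^2(1-q) - \tfrac{2\beta^2}{N}\tbf m^{(k)}\otimes\tbf m^{(k)} - \beta\tbf G$, one wants $\inf\operatorname{spec}(\tbf A) \geq c>0$. The natural route is to use the AT condition \eqref{eq:AT} to verify a self-consistent positivity criterion. Concretely, writing $\tbf D_{\tbf m^{(k)}} = (1-q)^{-1}\mathrm{id} + \tbf E_k$ where $\tbf E_k$ has approximately the law of $\operatorname{sech}^{-2}(h+\beta\sqrt q Z) - (1-q)^{-1}$ on the diagonal, and using that the resolvent of $\beta\tbf G$ localizes (via the local semicircle law / the fact that $\tbf G$ is GOE with edge at $2$), one shows the relevant Stieltjes-transform fixed point equation has a solution bounded away from the singularity precisely when \eqref{eq:AT} holds, since the AT functional $\beta^2 E\operatorname{sech}^4(h+\beta\sqrt q Z)$ is exactly the derivative that governs the stability of that fixed point. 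The rank-one term $-\tfrac{2\beta^2}{N}\tbf m^{(k)}\otimes\tbf m^{(k)}$ is a bounded negative-semidefinite perturbation of operator norm $O(1)$, but since it is rank one it can only push down finitely many eigenvalues; combined with the fact that one actually proves a bound on the full spectrum (not just the bulk) — e.g.\ by a quadratic-form argument $\langle \tbf u, \tbf A\tbf u\rangle \geq c\|\tbf u\|^2$ uniformly — this term is harmless. Alternatively one can first prove $\tbf D_{\tbf m^{(k)}} + \beta^2(1-q) - \beta\tbf G \geq c'$ by the argument above and then note the rank-one subtraction is actually beneficial for the subsequent comparison but must be retained here because the AMP Hessian genuinely contains it.

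For the second assertion, the inequality $\tbf D + \beta^2(1-q) - \beta\tbf G \geq \tbf D + \beta^2(1-q) - \tfrac{2\beta^2}{N}\tbf m\otimes\tbf m - \beta\tbf G$ is immediate since $\tfrac{2\beta^2}{N}\tbf m\otimes\tbf m \succeq 0$, so it suffices to bound the smaller matrix below by $c$. Under \eqref{eq:AT} \& \eqref{eq:AT+} one has the exponential overlap concentration of Talagrand recalled in Section \ref{sec:kpt-est}, which forces $\|\tbf m^{(k)} - \tbf m\|/\sqrt N \to 0$ (in probability) along the iteration, and moreover $\tbf D_{\tbf m^{(k)}} \to \tbf D$ in operator norm after this identification, because $x\mapsto (1-x^2)^{-1}$ is Lipschitz on the relevant compact range of magnetization values (which stays bounded away from $\pm 1$ by the exponential tail bounds). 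Feeding this into \eqref{eq:TAPHess} and absorbing the $o(1)$ operator-norm error into the constant gives $\tbf D + \beta^2(1-q) - \tfrac{2\beta^2}{N}\tbf m\otimes\tbf m - \beta\tbf G \geq c/2$ with high probability, and the stated resolvent bound $\|(\tbf D + \beta^2(1-q) - \beta\tbf G)^{-1}\|_{\mathrm{op}} \leq c^{-1}$ follows by the spectral theorem. The main obstacle, as the authors flag, is precisely the implicit dependence of $\tbf D$ on $\tbf G$: one cannot treat $\tbf D$ as a fixed deterministic perturbation of $\beta\tbf G$, so the positivity proof must go through the AMP iterates where $\tbf m^{(k)}$ is an \emph{explicit} polynomial function of $\tbf G$ with controllable (Lipschitz/state-evolution) behavior, and only afterwards is the limit identified with the true $\tbf m$; making the AT threshold appear sharply in this iterative scheme, rather than with a lossy constant, is the delicate point and is where the input from \cite{CT21, celentano2022sudakov} is essential.
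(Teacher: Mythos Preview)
Your high-level plan (prove the bound at the AMP iterate, then transfer to the true magnetization) matches the paper's architecture, but both halves of your implementation diverge from what actually works, and the transfer step contains a genuine gap.

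\textbf{First part.} The paper does not use a Stieltjes-transform or local-law argument. The obstruction you yourself name---$\tbf D_{\tbf m^{(k)}}$ depends on $\tbf G$---is not resolved by saying $\tbf m^{(k)}$ is an ``explicit polynomial of $\tbf G$''; that dependence is exactly what kills a free-probability computation. The paper instead uses Bolthausen's orthogonal decomposition $\tbf G=\tbf G^{(k+1)}+\sum_{s\le k}\bar\rho^{(s)}$: conditionally on $\cG_k$, the piece $\tbf G^{(k+1)}$ is a fresh Gaussian \emph{independent} of $\tbf m^{(k)}$, and the remainder is low rank. One then applies the Sudakov--Fernique inequality (conditionally on $\cG_k$, following \cite{celentano2022sudakov}) to replace $(\tbf u,\tbf G^{(k+1)}\tbf u)$ by a one-dimensional Gaussian process, reducing the infimum of the quadratic form to a finite-dimensional variational problem over probability laws in $\cW_2$. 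That variational problem is solved by a Lagrange-multiplier computation in which the AT quantity $\beta^2 E\,\mathrm{sech}^4(h+\beta\sqrt q\,Z)$ appears as the threshold for strict concavity. Your sketch does not mention Gaussian comparison at all, and a resolvent/self-consistent-equation route would need substantial new ideas to handle the correlated diagonal.

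\textbf{Transfer step.} Your claim that $\tbf D_{\tbf m^{(k)}}\to \tbf D$ in operator norm is not justified and in fact fails: the map $x\mapsto(1-x^2)^{-1}$ is \emph{not} Lipschitz on any range containing the coordinates of $\tbf m$ or $\tbf m^{(k)}$, because individual $m_i=\tanh(\cdot)$ are not bounded away from $\pm 1$ (only their empirical law has tails). The convergence $N^{-1}\|\tbf m-\tbf m^{(k)}\|^2\to 0$ from \cite{CT21} is an $\ell^2$-averaged statement and gives no control on $\max_i|m_i-m_i^{(k)}|$, hence no control on $\|\tbf D-\tbf D_{\tbf m^{(k)}}\|_{\mathrm{op}}$. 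The paper avoids this entirely: it proves the lower bound \emph{uniformly} over all $\tbf v$ with $\|\tbf v-\tbf m^{(k)}\|^2\le N\eps$ (this is the set $S_{N,\eps,k}$ in Prop.~\ref{prop:vHv}), and then simply observes that $\tbf m$ lies in this set with high probability. The uniformity in $\tbf v$ is built into the Sudakov--Fernique step and costs nothing extra, whereas your perturbative route would require an operator-norm stability estimate that is not available.
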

\noindent \textbf{Remark 1.4.} \emph{We point out that the proof of \eqref{eq:TAPHess} only requires $(\beta,h)$ to satisfy \eqref{eq:AT}. The additional assumption \eqref{eq:AT+} is used to show that $ \tbf m$ is close to $\tbf m^{(k)}$, applying the main result of \cite{CT21} (see Prop.\ \ref{prop:vHv} for the details). The proof of \eqref{eq:TAPHess} follows, up to a suitable modifications, from translating the arguments in \cite{celentano2022sudakov} to the present setting. }

\vspace{0.1cm}
\noindent \textbf{Remark 1.5.} \emph{Notice that the matrix on the l.h.s.\ in \eqref{eq:TAPHess} is, up to a negligible error\,\footnote{Under \eqref{eq:AT}, it holds true that $ q = N^{-1} \sum_{i=1}^N \big(m_i^{(k)}\big)^2  + o(1)$ for an error $o(1)$ which is such that $ \lim_{k\to \infty} \limsup_{N\to\infty} o(1)\to 0$, in probability.}, equal to the negative of the Hessian of the TAP free energy functional \cite{TAP} at $\tbf m^{(k)}$. In particular, the lower bound in \eqref{eq:TAPHess} resolves a recent open question from \cite{GSS}, which studies the limiting spectral distribution of the Hessian at $\tbf m^{(k)}$ and which provides an interesting characterization of Plefka's conditions \cite{P}. For further recent work related to the TAP approach, see also \cite{DS, BK, CPS1, CPS2, CFM, BY, BNSX, EMS, B, KSS, G, BH, BvDS, BS} and the references therein.}
\vspace{0.2cm}

Collecting Propositions \ref{prop:hTAP}, \ref{prop:M-H} and \ref{prop:op-H}, we arrive at the following main result.

\begin{thm} \label{thm:main}
Assume that $ (\beta,h)$ satisfy \eqref{eq:AT} \emph{\&} \eqref{eq:AT+}. Then, for every $\epsilon>0$ there exists a constant $K_{\epsilon}>0$, that is independent of $N$, such that  
		\[\bP\big(  \| \emph{\tbf M}  \, \|_{\emph{op}}  \leq  K_\epsilon\big) \geq 1- \epsilon \] 
for all $N$ large enough. 
\end{thm}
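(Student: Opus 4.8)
The plan is to combine the three propositions in the natural way suggested by the resolvent heuristic. The key algebraic observation is that Proposition~\ref{prop:M-H} gives us an approximate factorization $(\tbf D + \beta^2(1-q) - \beta\tbf G)\tbf M = \mathrm{id} + \tbf E$, where $\|\tbf E\|_{\mathrm{op}} \leq o(1)\|\tbf M\|_{\mathrm{op}} + O(1)$ with the stated tail control, while Proposition~\ref{prop:op-H} gives us (under \eqref{eq:AT} \& \eqref{eq:AT+}, with probability tending to one) a two-sided bound showing $\tbf A := \tbf D + \beta^2(1-q) - \beta\tbf G$ is invertible with $\|\tbf A^{-1}\|_{\mathrm{op}} \leq c^{-1}$. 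On this good event we can write $\tbf M = \tbf A^{-1}(\mathrm{id} + \tbf E)$, hence
\[
\|\tbf M\|_{\mathrm{op}} \leq c^{-1}\big(1 + o(1)\|\tbf M\|_{\mathrm{op}} + O(1)\big).
\]
(Here I am using Proposition~\ref{prop:hTAP} only implicitly, as it is what underlies Proposition~\ref{prop:M-H}; the actual combination uses Propositions \ref{prop:M-H} and \ref{prop:op-H} directly.)

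The next step is to absorb the $\|\tbf M\|_{\mathrm{op}}$ on the right-hand side. Fix $\epsilon>0$. Choose $N$ large enough that the $o(1)$ term in Proposition~\ref{prop:M-H} is, with probability at least $1-\epsilon/3$, smaller than $c/2$; and choose the good event of Proposition~\ref{prop:op-H}, of probability at least $1-\epsilon/3$ for $N$ large, on which $\|\tbf A^{-1}\|_{\mathrm{op}}\leq c^{-1}$. Finally, using the tail bound $\bP(O(1)\leq K)\geq 1 - C K^{-2}$, pick $K = K(\epsilon)$ with $CK^{-2}\leq \epsilon/3$. On the intersection of these three events — which has probability at least $1-\epsilon$ for $N$ large — we get
\[
\|\tbf M\|_{\mathrm{op}} \leq c^{-1}\big(1 + K\big) + \tfrac12\|\tbf M\|_{\mathrm{op}},
\]
so that $\|\tbf M\|_{\mathrm{op}} \leq 2c^{-1}(1+K) =: K_\epsilon$, which is a constant independent of $N$. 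This proves the theorem with $K_\epsilon = 2c^{-1}(1+K(\epsilon))$.

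One technical point to handle carefully: the rearrangement $\|\tbf M\|_{\mathrm{op}} - \tfrac12\|\tbf M\|_{\mathrm{op}} \leq c^{-1}(1+K)$ is only an innocuous bootstrap because $\|\tbf M\|_{\mathrm{op}}$ is a deterministic finite quantity for each fixed realization (it is the operator norm of a fixed $N\times N$ matrix), so there is no circularity — we are not assuming a priori that it is bounded. The only subtlety is making sure the three events are combined with a union bound with matching failure probabilities, and that the $o(1)$ in Proposition~\ref{prop:M-H} and the high-probability events in Proposition~\ref{prop:op-H} are genuinely for the \emph{same} $N$; both are stated as holding with probability $\to 1$, so for $N\geq N_0(\epsilon)$ each has probability $\geq 1-\epsilon/3$, and we simply take $N_0$ to be the max of the relevant thresholds.

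I do not expect any real obstacle in this final step — all the substance has been pushed into Propositions~\ref{prop:hTAP}, \ref{prop:M-H}, and \ref{prop:op-H}. The only thing requiring minor care is the bookkeeping of constants ($c$ from Proposition~\ref{prop:op-H} must be the same $c$ used as the threshold $c/2$ for the $o(1)$ term) and confirming that the $O(1)$ error in Proposition~\ref{prop:M-H} is independent enough of the other two good events that a union bound suffices; since all three are simply statements about the same disorder $\{g_{ij}\}$, intersecting them is unproblematic.
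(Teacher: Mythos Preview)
Your proposal is correct and follows essentially the same approach as the paper: invert $\tbf D + \beta^2(1-q) - \beta\tbf G$ using Proposition~\ref{prop:op-H}, apply Proposition~\ref{prop:M-H} to get $\|\tbf M\|_{\mathrm{op}} \leq c^{-1}(1+O(1)) + c^{-1}o(1)\|\tbf M\|_{\mathrm{op}}$, and rearrange. The paper's write-up is slightly terser (it leaves the bound as $\|\tbf M\|_{\mathrm{op}} \leq (c^{-1}+O(1))/(1-c^{-1}o(1))$ and then invokes the tail bounds), but your explicit union-bound bookkeeping and choice $o(1)<c/2$ amount to the same argument.
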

\begin{proof} 
On a set of probability tending to one as $N\to\infty$, Prop.\ \ref{prop:op-H} shows that
		\[ \big\|\big( \emph{\tbf D}  + \beta^2(1-q)) -\beta \emph{\tbf G}\big)^{-1}\big\|_{\text{op}} \leq {c_{\beta,h}}^{-1}\]
for some constant $ c_{\beta,h}>0$, that is independent of $N$. Using Prop.\ \ref{prop:M-H} and the norm bound $ \| \tbf A \tbf B\|_{\text{op}}\leq  \| \tbf A \|_{\text{op}} \|  \tbf B\|_{\text{op}}$ for all $\tbf A, \tbf B\in \bR^{N\times N}$, this implies 
		\[\begin{split}
		\| \tbf M\|_{\text{op}} &\leq   {c_{\beta,h}}^{-1} \big(1+ \|( {\tbf D} + \beta^2(1-q) -  \beta  {\tbf G} ) {\tbf M}  -  \text{id}_{\bR^{N }}  \|_{\text{op}} \big) \\
		& \leq {c_{\beta,h}}^{-1} \big(1+ O(1) \big) + {c_{\beta,h}}^{-1}  o(1) \| {\tbf M}\|_{\text{op}}
		\end{split}\]
so that 
		\[ \| \tbf M\|_{\text{op}}  \leq \frac{  c_{\beta,h}^{-1}+ O(1)}{1- c_{\beta,h}^{-1}o(1) }. \]
The claim now follows from the fact that $ \lim_{N\to\infty} o(1) = 0$ in the sense of probability and the bound $\bP \big( O(1) > K \big) \leq CK^{-2}$, for some $C>0$ and every $ K>0$. 
\end{proof}

The rest of this paper is devoted to the proofs of Prop.\ \ref{prop:hTAP}, Prop.\ \ref{prop:M-H} and Prop.\ \ref{prop:op-H}. Section \ref{sec:kpt-est} recalls several useful results from \cite{talagrand2010mean-vol1, talagrand2010mean-vol2} and uses them to derive decay estimates on suitable correlation functions. Applying tools from stochastic calculus as in \cite{ABSY}, this implies Prop.\ \ref{prop:hTAP} and Prop.\ \ref{prop:M-H}, which is explained in Section \ref{sec:proofpart1}. Finally, in Section \ref{sec:AMP} we use the results of \cite{Bolt1, Bolt2, CT21, celentano2022sudakov} to deduce Prop.\ \ref{prop:op-H}. 

%%%%%%%%%%%%%%%%%%%%%%%%%%%%%%%%%%%%%%%%%%%%
%%%%%%%%%%%%%%%%%%%%%%%%%%%%%%%%%%%%%%%%%%%%
%%%%%%%%%%%%%%%%%%%%%%%%%%%%%%%%%%%%%%%%%%%%

\section{Bounds on $(k,p)$-Point Correlation Functions}
\label{sec:kpt-est}

In this section, we derive suitable decay bounds on a certain class of correlation functions that occur naturally in the proofs of Propositions \ref{prop:hTAP} and \ref{prop:M-H}. Our bounds combine ideas from \cite{ABSY} and results by Talagrand \cite{talagrand2010mean-vol1, talagrand2010mean-vol2} that follow from the assumptions \eqref{eq:AT} \& \eqref{eq:AT+}. 

To efficiently employ the results from \cite{talagrand2010mean-vol1, talagrand2010mean-vol2} that are recalled below, it is convenient to work in a slightly more general setting compared to the previous Section \ref{sec:intro}. To be more precise, we consider in this section spin systems with Hamiltonian of the form
		\be\label{eq:HN+} H_N(\sigma) = \beta \sum_{1\leq i < j\leq N} g_{ij} \sigma_{i}\sigma_j  +  \sum_{i=1}^N h_i \sigma_i= \frac{\beta}{2} (\sigma, \tbf{G}\sigma ) +   (\tbf h, \sigma), \ee
where $\tbf h\in \bR^N$ is assumed to be a Gaussian random vector whose components are i.i.d. copies of a Gaussian random variable $h $ with mean and variance denoted by
		$$ h \sim\cN( \mu_h, \sigma^2_h).$$
We allow for the possibility that $\sigma_h^2 =0$ (so that all of the following results apply in particular to systems with deterministic field as in \eqref{eq:HN}) and we assume that $E \,h^2 > 0$ (in particular, the Gaussian field $\tbf h$ need not be centered). Under the latter assumption, recall from \cite{Gue} and \cite[Chapter 1]{talagrand2010mean-vol1} that there is a unique solution $$q=q_{\beta,\mu_h,\sigma_h^2}\in (0,1)$$ to \eqref{eq:defq}. Let us point out that in this slightly more general setting, the expectation $E(\cdot)$ in \eqref{eq:AT}, \eqref{eq:defq} and \eqref{eq:AT+} is taken over all of the independent random variables $Z, Z' $ and $h$. 
 
In the following, we use dynamical methods to control the correlation functions of SK models with Hamiltonians as in \eqref{eq:HN+}, but with possibly modified interaction coupling $\beta'$ and Gaussian field $\tbf h'$ (whose entries are i.i.d. copies of a modified Gaussian random variable $h' \sim \cN(\mu_{h'}, \sigma_{h'}^2)$). All considered parameters $(\beta', h')$ satisfy, however, conditions \eqref{eq:AT}  \& \eqref{eq:AT+} and this motivates us to define the parameter set
		\[ \cA_{RS^-}  = \big\{ (\beta,\mu_h,\sigma_h^2)\in (0,\infty)\times\bR \times [0,\infty):   (\beta,h) \text{ satisfy } \eqref{eq:AT} \, \&\, \eqref{eq:AT+}, \text{ for } h\sim \cN(\mu_h, \sigma_h^2)\big\}.  \]
Using the continuous dependence $ (\beta, \mu_h, \sigma_h^2)\mapsto q_{\beta,h} =q_{\beta, \mu_h, \sigma_h^2}$, it is straightforward to check that the set $ \cA_{RS^-} \subset (0,\infty)\times\bR \times [0,\infty)$ is open, which is frequently used below.
		
We denote by $ (\sigma^l)_{l\in\bZ}$ a sequence of i.i.d. samples from $\langle\cdot\rangle$, called replicas. Following standard conventions, the $l$-th replica $\sigma^l$ corresponds to the $l$-th coordinate in $ \prod_{j\in\bZ}  \{-1,1\}^N $ with product Gibbs measure $ \bigotimes_{j\in \bZ} \langle \cdot \rangle$. By slight abuse of notation, we write $ \langle\cdot\rangle$ also for expectations over functions of several replicas. Consider e.g. the overlap $ R: \{-1,1\}^N\times \{-1,1\}^N\to \bR$ which equals the normalized inner product of two replicas:
		\[ R(\sigma^l, \sigma^{l'}) =N^{-1}\sum_{i=1}^N \sigma_i^l\sigma_i^{l'} = N^{-1} (\sigma^l,\sigma^{l'}).  \]
Below, we abbreviate for simplicity $ R_{l,l'} =R(\sigma^l, \sigma^{l'})$ when computing Gibbs expectations of functions of the overlap of multiple replicas. Following the previous remark, we thus write for instance $\langle f(R_{1,2}, R_{3,4})\rangle = \big\langle  f\big( N^{-1}(\sigma^1,\sigma^2), N^{-1}(\sigma^3,\sigma^4)\big)\big\rangle^{\otimes 4}$ for every $f:\bR\to \bR$. 

\begin{thm}[{\cite[Theorem 13.7.1.]{talagrand2010mean-vol2}}]  \label{thm:conc}
Assume $(\beta,\mu_h,\sigma_h^2)\in \cA_{RS^-}$ and $h\sim \cN(\mu_h,\sigma_h^2)$. Let $ q_{\beta,h}$ denote the unique solution of \eqref{eq:defq}. Then, there is a constant $   K_{\beta,h}>0$ such that 
		\[ \bE\,  \big \langle  \exp \big( N (R_{1,2} - q_{\beta,h} \big)^2/K_{\beta,h}\big) \big\rangle_{\beta,h} \leq 2.  \]
Moreover, $ K_{\beta,h}=K_{\beta,\mu_h,\sigma_h^2}$ is locally bounded in $(\beta, \mu_h, \sigma_h^2)$: for $\delta=\delta_{\beta,h} >0$ small enough s.t. $  (\beta -\delta, \beta+\delta)\times (\mu_h-\delta,\mu_h+\delta)\times [(\sigma_h^2-\delta)_+,\sigma_h^2+\delta)\subset \cA_{RS^-}$, we find $K>0$ such that 
		\be \label{eq:conc+}  \sup_{ \substack{ \beta' \in (\beta -\delta, \beta+\delta), \\ \mu_h'\in (\mu_h-\delta,\mu_h+\delta),\\ \sigma_{h'}^2\in [(\sigma_h^2-\delta)_+,\sigma_h^2+\delta)} } \bE \,\big \langle  \exp \big( N (R_{1,2} - q_{\beta',h'} \big)^2/ K \big) \big\rangle_{\beta',h'} \leq 2,  \ee
where $ (\sigma_h^2-\delta)_+=\max(0,\sigma_h^2-\delta)$ and where $h'\sim \cN(\mu_{h'},\sigma_{h'}^2)$.
\end{thm}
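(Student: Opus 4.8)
The pointwise estimate is precisely \cite[Theorem 13.7.1]{talagrand2010mean-vol2}, so for that part the plan is simply to invoke it. For orientation: conditions \eqref{eq:AT} and \eqref{eq:AT+} are exactly those under which Talagrand's cavity-and-interpolation analysis (Chapter~13 of \cite{talagrand2010mean-vol2}; the auxiliary function $\Phi$ and the sign of $(\partial_m\Phi)|_{m=1}$ are the central device) establishes the replica-symmetric formula for the free energy in a quantitatively stable way, and the same analysis upgrades this to the asserted subgaussian concentration of $R_{1,2}$ around $q_{\beta,h}$. I would take this as given.

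The only thing that genuinely needs to be checked is the \emph{local uniformity} of the constant, i.e.\ \eqref{eq:conc+}. The plan here is to combine the openness of $\cA_{RS^-}$ (already observed above, from continuity of $(\beta,\mu_h,\sigma_h^2)\mapsto q_{\beta,h}$ together with the explicit form of \eqref{eq:AT} and \eqref{eq:AT+}) with a compactness argument. Fix $(\beta,\mu_h,\sigma_h^2)\in\cA_{RS^-}$ and choose $\delta>0$ so small that the closed box
\[ \cK_\delta=[\beta-\delta,\beta+\delta]\times[\mu_h-\delta,\mu_h+\delta]\times[(\sigma_h^2-\delta)_+,\sigma_h^2+\delta] \]
is a compact subset of $\cA_{RS^-}$; this is possible because $\cA_{RS^-}$ is open, and it also covers the degenerate case $\sigma_h^2=0$, where $(\sigma_h^2-\delta)_+=0$. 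Since the half-open box appearing in \eqref{eq:conc+} is contained in $\cK_\delta$, it suffices to exhibit a single $K$ that works simultaneously for every $(\beta',\mu_{h'},\sigma_{h'}^2)\in\cK_\delta$.

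To produce this uniform $K$ one cannot use the pointwise theorem as a black box; the plan is to reopen Talagrand's proof and verify that its output depends only on finitely many quantities that are locally bounded in the parameters. These are, essentially: (i) the gap $1-\beta'^2E\,\text{sech}^4(h'+\beta'\sqrt{q_{\beta',h'}}Z)>0$ furnished by \eqref{eq:AT}, which is continuous in $(\beta',\mu_{h'},\sigma_{h'}^2)$ via continuity of $q_{\beta',h'}$ and hence bounded away from $0$ on $\cK_\delta$; (ii) a quantitative version of \eqref{eq:AT+}: using $(\partial_m\Phi)(m,q')|_{m=1}(q_{\beta',h'})=0$, $\partial_{q'}\big((\partial_m\Phi)(m,q')|_{m=1}\big)(q_{\beta',h'})=0$ and $\partial_{q'}^2\big((\partial_m\Phi)(m,q')|_{m=1}\big)\big|_{q'=q_{\beta',h'}}=\tfrac{\beta'^2}{2}\big(\beta'^2E\,\text{sech}^4(h'+\beta'\sqrt{q_{\beta',h'}}Z)-1\big)<0$ (as recorded in Section~\ref{sec:intro}, using \eqref{eq:AT}), one combines this behaviour near the diagonal $q'=q_{\beta',h'}$ with \eqref{eq:AT+} away from it and with compactness to obtain $c_0>0$ with $(\partial_m\Phi)(m,q')|_{m=1}\le -c_0\,(q'-q_{\beta',h'})^2$ for all $q_{\beta',h'}\le q'\le 1$ and all parameters in $\cK_\delta$; and (iii) locally uniform moment bounds on $h'$ and on $\tanh$, $\cosh^{-1}$, $\text{sech}$ of the relevant Gaussians. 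Feeding (i)--(iii) through Talagrand's argument yields one constant $K$ valid on $\cK_\delta$.

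The main obstacle is exactly this last step, and within it item (ii) is the delicate point: the inequality $(\partial_m\Phi)(m,q')|_{m=1}<0$ degenerates quadratically as $q'\downarrow q_{\beta',h'}$, so a uniform lower bound on its absolute value cannot come from \eqref{eq:AT+} alone but must use the vanishing of the first two $q'$-derivatives at the diagonal together with the strict negativity of the second derivative there (which is a reformulation of \eqref{eq:AT}); one then patches the diagonal estimate to the compact-set estimate on $\{q'-q_{\beta',h'}\ge\eta\}$. An alternative to reopening Talagrand's proof would be a limiting/contradiction argument, but that requires a continuity property of the best concentration constant in the parameters that is not automatic, so I would favour the direct route.
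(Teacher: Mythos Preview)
Your proposal is correct and matches the paper's own treatment: the paper does not give a proof either, but simply cites \cite[Theorem 13.7.1]{talagrand2010mean-vol2} for the pointwise bound and remarks (Remark~2.1) that the local boundedness of $K_{\beta,h}$ ``readily follows from the arguments in \cite[Sections 13.4--13.7]{talagrand2010mean-vol2}.'' Your sketch of why those arguments yield locally uniform constants (openness of $\cA_{RS^-}$, compactness, and continuous dependence of the AT gap and of the quantitative version of \eqref{eq:AT+} on the parameters) is in fact more detailed than what the paper records.
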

\noindent \textbf{Remark 2.1.} \emph{That $K_{\beta,h}>0$ is locally bounded in $(\beta, \mu_h,\sigma^2_h)$ is not explicitly stated in \cite[Theorem 13.7.1.]{talagrand2010mean-vol2}, but it readily follows from the arguments in \cite[Sections 13.4-13.7]{talagrand2010mean-vol2}.}

\vspace{0.3cm}
Theorem \ref{thm:conc} implies useful decay bounds on a large class of correlation functions. Here, we rely on a direct consequence of \eqref{eq:conc+} explained in detail in \cite[Sections 1.8 \& 1.10]{talagrand2010mean-vol1}. For the precise statement, we recall from \cite{talagrand2010mean-vol1} for $ l,l'\in \bZ$ the notation 
		\be\label{eq:Tll} \begin{split} T_{l,l'} = N^{-1} ( \sigma^l - \tbf m, \sigma^{l'}-\tbf m), \hspace{0.5cm} T_{l} = N^{-1} ( \sigma^l - \tbf m, \tbf m), \hspace{0.5cm}T  = N^{-1} (   \tbf m,  \tbf m)\\ \end{split}\ee
for replicas $(\sigma^l)_{l \in \bZ}$ as well as
		\be\label{eq:ABC} \begin{split}
		\nu_1^2 &=  \frac{1-2q +q_4   }{1- \beta^2( 1-2q +q_4\,) }, \hspace{0.5cm} \nu_2^2 =  \frac{q -q_4   }{\big( 1- \beta^2( 1-2q +q_4\,)\big) \big( 1- \beta^2( 1-4q +3 q_4\,)\big) }, \\
		\nu_3^2 &=  \frac{q_4 -q^2  }{1- \beta^2( 1-4q +3 q_4\,)}  +  \frac{\beta^2(q_4 - q^2) A^2 }{1- \beta^2( 1-4q +3 q_4\,)} +  \frac{2\beta^2( 2q + q^2 -q_4)B^2  }{1- \beta^2( 1-4q +3 q_4\,)} ,
		\end{split}\ee
where $q$ is as in \eqref{eq:defq} and where $q_4 =  E \tanh^4(h + \beta \sqrt{q} Z)$. 

\begin{thm}[{\cite[Theorem 1.10.1]{talagrand2010mean-vol1}}]
\label{tal}
Assume $(\beta,\mu_h,\sigma_h^2)\in \cA_{RS^-}$ and $h\sim \cN(\mu_h,\sigma_h^2)$. Denote by $U_{l,l'},U_l$ and $U$ independent centered Gaussian random variables with variance $\nu_1^2, \nu_2^2$ and, respectively, $\nu_3^2$, as in \eqref{eq:ABC}. Let $ T_{l,l'}, T_l $ and $T$ be defined as in \eqref{eq:Tll}. Then, for fixed $n\in\bN_0$, $ k(l,l') \in \bN_0$ for $1\leq l<l'\leq n$, $ k(l) \in \bN_0$ for $1\leq l\leq n$, $k\in \bN_0$ as well as
		\[ m = \sum_{1\leq l < l' \leq n } k(l,l') +\sum_{1\leq l  \leq n } k(l)+k,  \]
we have that
		\[\begin{split}
		&\Big|  \bE\, \Big\langle  \prod_{1\leq l<l' \leq n} T_{l,l'}^{k(l,l')}  \prod_{1\leq l\leq n} T_{l}^{k(l)}  T^{k}   \Big\rangle  - N^{-\frac m2} E\!\! \prod_{1\leq l<l' \leq n} U_{l,l'}^{k(l,l')}   \prod_{1\leq l\leq n} U_{l}^{k(l)}  U^{k}\Big|\leq  C N^{-\frac{m+1}2} . 
		\end{split}\]
Moreover, the constant $ C=C_{\beta,h}$ is locally bounded in $ (\beta, \mu_h, \sigma_h^2)$. 
\end{thm}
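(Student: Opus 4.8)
The plan is to prove this joint central limit theorem with an explicit rate by the cavity method, as carried out in \cite[Chapter~1]{talagrand2010mean-vol1}, with the exponential overlap concentration of Theorem~\ref{thm:conc} as the essential input. I would first reduce to an asymptotic Wick-type factorization proved by induction on the total degree $m$: it suffices to show that $\bE\big\langle \prod_{l<l'} T_{l,l'}^{k(l,l')}\prod_l T_l^{k(l)}\,T^{k}\big\rangle$ equals $N^{-m/2}$ times the corresponding mixed Gaussian moment of the $U$'s, up to an error $O(N^{-(m+1)/2})$ whose constant is controlled by $K_{\beta,h}$; the base cases $m=0,1$ are straightforward, using $\langle T_{l,l'}\rangle=\langle T_l\rangle=0$ for $l\ne l'$. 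Two preliminary facts are needed: (i) $N^{-1/2}$-scale sub-Gaussian concentration — for $R_{1,2}$ around $q$ (this is Theorem~\ref{thm:conc}, hence also for $N^{-1}\|\mathbf m\|^2=\langle R_{1,2}\rangle$), and, via (ii), for $T_{l,l'}$ and $T_l$ under the Gibbs measure — all with scale set by $K_{\beta,h}$; and (ii) the elementary identities $T_{l,l'}=R_{l,l'}-T_l-T_{l'}-T$ and $T_l=N^{-1}(\sigma^l,\mathbf m)-T$, which let one pass freely between the $T$-quantities and the raw overlaps.

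For the inductive step I would extract one factor — say a $T_{l,l'}$; the cases of a $T_l$ or $T$ factor are analogous and easier — and compute its Gibbs average through the cavity in the $N$-th spin: for a bounded function $f$ of finitely many replicas restricted to the sites $1,\dots,N-1$,
\[
\langle f\rangle_{N}=\frac{\big\langle f\,\mathrm{Av}\,\exp\!\big(\beta\sum_{l} \epsilon_l b_l\big)\big\rangle_{N-1}}{\big\langle \mathrm{Av}\,\exp\!\big(\beta\sum_{l} \epsilon_l b_l\big)\big\rangle_{N-1}},
\]
where $\mathrm{Av}$ is the average over $\epsilon_l=\sigma^l_N\in\{-1,1\}$ and $b_l=\sum_{i<N}g_{iN}\sigma^l_i$ (plus the external-field term at site $N$) is the cavity field; conditionally on the first $N-1$ spins the $b_l$ are jointly Gaussian with covariance $R_{l,l'}$, and the $1/N$ versus $1/(N-1)$ variance mismatch is absorbed by a short interpolation whose derivative is again estimated via overlap concentration. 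I would then Taylor-expand the exponentials to the order dictated by $m$ and evaluate the resulting Gaussian expectations in the $b_l$ by Wick's theorem; each contraction pairs two cavity fields and produces a factor $R_{l,l'}$, which by (i) is replaced by $q$ for $l\ne l'$ and by $1$ for $l=l'$ up to a controlled remainder. Passing the extracted factor through this machinery yields two kinds of terms: (a) ``pairing'' terms in which it contracts against one of the remaining factors — these reproduce the covariance structure, lower the degree by $2$, and match the induction hypothesis; and (b) all other terms, each of which carries at least one additional explicit $N^{-1/2}$ (from an unpaired cavity field or from a replaced fluctuation) and is absorbed into the $O(N^{-(m+1)/2})$ error. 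Solving the linear system for the pairing coefficients produces exactly the variances $\nu_1^2,\nu_2^2,\nu_3^2$ in \eqref{eq:ABC}, and the local boundedness of $C_{\beta,h}$ is inherited from that of $K_{\beta,h}$ in \eqref{eq:conc+} together with the continuity of $q=q_{\beta,\mu_h,\sigma_h^2}$, $q_4$ and the auxiliary constants in the parameters.

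The main obstacle is the control of the error terms in (b). The joint Taylor and Wick expansions generate combinatorially many terms involving positive powers of quantities such as $NR_{l,l'}^2$, $NT_l^2$ and $NT_{l,l'}^2$, which are only $O(1)$ and not $o(1)$, so it is only the \emph{exponential} integrability provided by Theorem~\ref{thm:conc} — and not mere $L^p$ control — that is strong enough to keep the error at the sharp scale $N^{-(m+1)/2}$ uniformly in $N$. Correctly attributing the power of $N^{-1/2}$ to every term of the expansion, and in particular verifying that the diagonal contractions $R_{l,l}=1$ (which bring no gain) never upset the counting, is the delicate bookkeeping; the interpolation handling the temperature mismatch in the cavity step is a secondary, more routine point.
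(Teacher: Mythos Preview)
Your proposal is correct and accurately sketches Talagrand's cavity-method proof from \cite[Section~1.10]{talagrand2010mean-vol1}, which is precisely what the paper invokes: the paper does not supply its own proof of this theorem but explains in Remark~2.2 that Talagrand's argument goes through verbatim once the $\beta<1/2$ hypothesis (used only to secure \cite[Eq.~(1.88)]{talagrand2010mean-vol1} and hence \cite[Eq.~(1.296)]{talagrand2010mean-vol1}) is replaced by the exponential overlap concentration of Theorem~\ref{thm:conc}. Your identification of that concentration as the essential input, and of the local boundedness of $C_{\beta,h}$ as inherited from that of $K_{\beta,h}$, matches the paper's remark exactly.
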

\noindent \textbf{Remark 2.2.} \emph{We remark that \cite[Theorem 1.10.1]{talagrand2010mean-vol1} assumes $\beta <1/2$ in order to apply \cite[Eq. (1.88)]{talagrand2010mean-vol1} to obtain the error bounds in \cite[Eq. (1.296)]{talagrand2010mean-vol1}. The latter is the key result that implies \cite[Lemma 1.10.2 \& Corollary 1.10.3]{talagrand2010mean-vol1}, which in turn implies \cite[Theorem 1.10.1]{talagrand2010mean-vol1}. Instead of assuming $\beta<1/2$, we can equally assume $(\beta,\mu_h,\sigma_h^2)\in \cA_{RS^-}$ and $h\sim \cN(\mu_h,\sigma_h^2)$ to get \cite[Eq. (1.296)]{talagrand2010mean-vol1}, by Theorem \ref{thm:conc}. Given \cite[Eq. (1.296)]{talagrand2010mean-vol1}, Theorem \ref{tal} can then be proved exactly as in \cite[Section 1.10]{talagrand2010mean-vol1}. Moreover, the fact that the constant $C_{\beta,h}$ is locally bounded in $ (\beta, \mu_h, \sigma_h^2)$ is not explicitly explained in \cite[Section 1.10]{talagrand2010mean-vol1}, but it readily follows as a consequence of Theorem \ref{thm:conc} and the relevant tools from \cite[Sections 1.6, 1.8 \& 1.10]{talagrand2010mean-vol1}, in particular \cite[Eq. (1.151), (1.154), (1.215) \& (1.216)]{talagrand2010mean-vol1}.} 
\vspace{0.2cm}

We are now ready to apply the preceding results to derive suitable decay bounds on correlation functions. For $ k\in \bN$ and $i_1,\ldots, i_k \in [N]$, we define the $k$-point functions by 
	\be \label{eq:corrk}m_{i_1 i_2\ldots i_k} =  \partial_{h_{i_1}} \partial_{h_{i_2}}\ldots \partial_{h_{i_k}}  \log Z_N (\tbf h).  \ee
To derive our main results, it is useful to consider a certain class of $ (k,p)$-point correlation functions, which are defined as follows: fixing $k,p\in\bN_0$ and $ j_1,..,j_p \in [N]$, we define 
	\be \label{eq:corrkp}	m_{j_1\ldots j_p;k} = N^{-k}\sum_{i_1,\ldots,i_k = 1}^N  m_{i_1 i_1 i_2 i_2 \ldots i_ki_k j_1 j_2\ldots j_p}\, \ee
by which we mean for $k=0$ that $ m_{j_1\ldots j_p;0}=m_{j_1\ldots j_p}$. Notice that each of the indices $ i_1, \ldots, i_k $ over which we average occurs exactly twice.

\begin{lemma} \label{corrbound} Assume $(\beta,\mu_h,\sigma_h^2)\in \cA_{RS^-}$ and $h\sim \cN(\mu_h,\sigma_h^2)$. Set $ A_m = m/2$ if $m\leq 2$ and $A_m = m/2+1/2$ if $m\geq 3$.  
Then, we have for every $k \geq 0,p \geq 1$ with $2k+p \geq 2$ that
	\begin{align*} &N^{-p}\sum_{j_1,\ldots, j_p =1 }^N \E\, m_{j_1\ldots j_p;k}^2  \leq C_{\beta,h} N^{-A_{2k+p}}\end{align*}
for some $C_{\beta,h}>0$ that is locally bounded in $ (\beta, \mu_h, \sigma_h^2) $. 
\end{lemma}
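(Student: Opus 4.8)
\textbf{Proof strategy for Lemma \ref{corrbound}.}

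The plan is to express each $k$-point function $m_{i_1i_1\cdots i_ki_k j_1\cdots j_p}$ as a Gibbs expectation of a polynomial in the recentered spin variables $\hat\sigma_i = \sigma_i - m_i$, and then to organize the resulting sum so that Theorem \ref{tal} can be applied. Recall that differentiating $\log Z_N(\tbf h)$ in the fields generates connected correlation functions (Ursell functions) of the spins: $m_{i_1\cdots i_k} = \partial_{h_{i_1}}\cdots\partial_{h_{i_k}}\log Z_N$ is a fixed universal polynomial combination of moments $\langle \sigma_{i_{a_1}}\cdots\sigma_{i_{a_r}}\rangle$, and crucially it can be written purely in terms of the \emph{centered} variables $\hat\sigma_i$, i.e. as a sum of terms $\pm\langle \hat\sigma_{S_1}\rangle\langle \hat\sigma_{S_2}\rangle\cdots$ over set partitions, with each block of size at least $2$ (the size-one blocks vanish after centering). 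The key point is that when we set the indices in pairs as in $m_{j_1\ldots j_p;k}$ and average $N^{-k}\sum_{i_1,\ldots,i_k}$, each repeated index $i_a$ appears in the spin monomial as $\hat\sigma_{i_a}^2$ or, across different Gibbs brackets, as two separate factors $\hat\sigma_{i_a}$. Summing $N^{-1}\sum_{i_a}\hat\sigma_{i_a}^{(l)}\hat\sigma_{i_a}^{(l')}$ over a single index produces exactly the overlap-type quantities $T_{l,l'}$ (or $T_l$, $T$ if one of the factors is $m_{i_a}$) from \eqref{eq:Tll}, and likewise $N^{-1}\sum_{i_a} m_{i_a}^2 = T$. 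Thus after introducing replicas to linearize the square $\E\,m_{j_1\ldots j_p;k}^2$ — writing $m_{j_1\ldots j_p;k}^2$ as a product over two independent families of replicas — and after performing the $N^{-k}\sum_{i}$ averages, the whole expression becomes $\bE\langle \cdots\rangle$ of a polynomial in the $T_{l,l'},T_l,T$ times a bounded polynomial in finitely many $m_{j_a}$ and finitely many free spin factors $\hat\sigma_{j_a}^{(l)}$.

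The second step is bookkeeping to identify the exponent $m$ in Theorem \ref{tal}. Each of the two copies of $m_{j_1\ldots j_p;k}$ contributes a product of centered Gibbs brackets whose total degree (counting all $\hat\sigma$ and $m$ factors) is $2k+p$; hence the product of the two copies has total degree $2(2k+p)$. After the $N^{-k}$-averages over the $i$-indices (and noting the $N^{-p}\sum_{j}$ average will be handled separately), every $\hat\sigma_{i_a}$ has been paired into an overlap factor, while the $\hat\sigma_{j_b}$ factors remain free. One then takes $\bE\langle\cdot\rangle$: by Gaussian/Talagrand concentration the free spin factors $\hat\sigma_{j_b}^{(l)}$ that are not yet paired up contribute only after being integrated against the disorder, and the dominant contribution comes from pairing them as well, either with each other (giving another overlap factor $T_{l,l'}$) or via the $N^{-p}\sum_{j_1,\ldots,j_p}$ average. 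The net effect is that the number $m$ of overlap factors produced equals (roughly) $k$ from the forced pairings of repeated indices, plus additional contributions from pairing the $j$-indices, and Theorem \ref{tal} then gives a factor $N^{-m/2}$ with the leading constant $E\prod U^{\cdots}$ locally bounded. Careful counting, separating the cases $2k+p\le 2$ and $2k+p\ge 3$ (the "$+1/2$" improvement in $A_m$ for $m\ge 3$ comes precisely from the $N^{-(m+1)/2}$ error term in Theorem \ref{tal} dominating when the leading Gaussian moment $E\prod U^{\cdots}$ vanishes by parity, which happens exactly when some block has odd total degree), should yield the claimed exponent $A_{2k+p}$.

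The main obstacle, and the part that requires genuine care rather than routine computation, is the combinatorial accounting of \emph{which} pairings survive and how many overlap factors each produces — in particular showing that the "bad" configurations, where few overlap factors are generated, are either absent by parity (for $m\ge 3$, forcing the extra $N^{-1/2}$) or already account for the stated bound (for $m\le 2$). Concretely: when $2k+p=2$ (so $(k,p)\in\{(1,0)\text{ excluded},(0,2)\}$, i.e. $k=0,p=2$, or $k=1,p=0$ excluded by $p\ge1$; effectively $m_{j_1j_2}$), one gets exactly $A_2=1$; for $2k+p\ge3$ one needs to verify that the leading Gaussian term either vanishes or is accompanied by an extra power of $N^{-1/2}$ from an odd-degree block. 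A clean way to organize this is to first prove the base cases $2k+p\in\{2,3\}$ by hand using the explicit low-order Ursell expansions and Theorem \ref{tal}, and then set up an induction on $k$: relating $m_{j_1\ldots j_p;k}$ to $m_{\cdot;k-1}$-type quantities via one more $N^{-1}\sum_{i}\hat\sigma_i^2$-average, which peels off one overlap factor $T_{l,l}$-type term (contributing $N^{-1/2}$ in the Talagrand estimate for each such sum, since $T_{l,l}\approx 1-q$ is order one but its \emph{fluctuation} contributes the decay — here one must be slightly careful because $T_{l,l}$ is not centered, so the centering $T_{l,l}-(1-q)$ has to be tracked, which is exactly where the $\beta^2(1-q)$ and the structure of \eqref{eq:ABC} enter). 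I expect the cleanest writeup combines: (i) the Ursell/centered-moment expansion, (ii) replica linearization of the square, (iii) explicit evaluation of index sums into $T$-variables, and (iv) a single application of Theorem \ref{tal} with careful degree counting, treating the parity-induced vanishing of leading Gaussian moments as the source of the improved rate for $m\ge3$.
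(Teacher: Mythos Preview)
Your outline has the right high-level shape --- replicate the square, convert index sums into $T$-variables, invoke Theorem~\ref{tal}, and argue that the leading Gaussian moment vanishes for $2k+p\ge 3$ --- but the mechanism you identify for the extra $N^{-1/2}$ is not the one that actually works, and your chosen representation (Ursell/cumulant expansion in $\hat\sigma_i=\sigma_i-m_i$) does not deliver it cleanly.

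The paper does \emph{not} use the Ursell expansion. It uses the replica-difference representation
\[
m_{j_1\ldots j_p}=\Big\langle \sigma^1_{j_1}\prod_{u=1}^{p-1}\Big(\sum_{v=1}^u(\sigma^v_{j_{u+1}}-\sigma^{u+1}_{j_{u+1}})\Big)\Big\rangle,
\]
and the crucial companion identity $\big\langle\prod_{u=1}^{p-1}(\cdots)\big\rangle=0$ (which lets you subtract $N^{-1}\|\tbf m\|^2$ for free). After squaring with a second replica family indexed by negative integers and summing over the $j$'s, one gets $\bE\langle\prod_{u=0}^{p-1}Z_u(T)\rangle$ where each $Z_u$ for $u\ge 1$ is a specific linear combination of $T_{v,v'}$ with one positive and one negative replica index. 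The point is a \emph{computation}, not a parity count: one checks directly that $E Z_u Z_{u'}=0$ for $1\le u<u'$, so the limiting Gaussians $Z_1,\ldots,Z_{p-1}$ are \emph{independent}, and hence $E\prod_{u=0}^{p-1}Z_u=0$ whenever $p\ge 3$. For $k\ge 1$ the same structure gives a factor $Y_{k,p}$ independent of the rest with $E Y_{k,p}=0$. This independence is a feature of the particular replica-difference form and does not follow from any generic ``odd block degree'' parity argument; indeed the total degree can be even and the moment still vanishes.

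Two further points. First, your remark about tracking $T_{l,l}-(1-q)$ is a red herring: the diagonal never appears in this form. In the paper the repeated index $i_a$ produces $N^{-1}\sum_{i_a}\sigma_{i_a}^1(\sigma_{i_a}^1-\sigma_{i_a}^2)=1-R_{1,2}$, and one then uses the vanishing identity above to replace the constant by $N^{-1}\|\tbf m\|^2$, landing on off-diagonal $T$-variables only. Second, an induction on $k$ peeling off one $N^{-1}\sum_i$ at a time is not how the argument runs; the whole product is handled in one application of Theorem~\ref{tal}, and the gain comes from the vanishing leading term, not from iterated $N^{-1/2}$ factors. Your Ursell route may be salvageable, but you would have to reprove the independence of the relevant Gaussian combinations from scratch, and that is the genuine content of the lemma.
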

\begin{proof}
For simplicity, we give a detailed proof for the cases $k=0$ and $k=1$; the remaining cases $k \geq 2$ are proved with analogous arguments. We use the representation	
		\begin{equation} \label{eq:k-pt-re} m_{j_1\ldots j_p} = \Big\langle \sigma^1_{j_1}\prod_{u = 1}^{p-1} \Big( \sum_{v = 1}^u(\sigma^v_{j_{u+1}} - \sigma^{u+1}_{j_{u+1}})\Big) \Big\rangle, \end{equation}
where we recall that $(\sigma^l)_{l\in\bN} $ denote i.i.d. replicas sampled from $\langle\cdot\rangle$. Eq.\ \eqref{eq:k-pt-re} follows from 
 		\begin{align*} \partial_{h_i}\Ex{f(\sigma^1,\ldots,\sigma^u)} &= \sum_{\sigma^1,\ldots,\sigma^u \in \{-1,1\}^N}f(\sigma^1,\ldots,\sigma^u)\,\partial_{h_i} \frac{e^{H_N(\sigma^1) +	\ldots+H_N(\sigma^u)}}{Z_N^u}\\
		& = \Big\langle f(\sigma^1,\ldots,\sigma^u)\Big(\sum_{v = 1}^u \sigma^v_i - u \sigma^{u+1}_i\Big)\Big\rangle\\
		& = \Big\langle f(\sigma^1,\ldots,\sigma^u) \sum_{v = 1}^u ( \sigma^v_i - \sigma^{u+1}_i) \Big\rangle \end{align*}
and induction. Observe that the previous identity implies in particular that  
		\begin{equation} \label{eq:zero2.4}
		\Big\langle \prod_{u = 1}^{p-1} \Big(\sum_{v = 1}^u(\sigma^v_{j_{u+1}} - \sigma^{u+1}_{j_{u+1}})\Big)\Big\rangle = \partial_{h_{j_2}} \ldots\partial_{h_{j_p}}\Ex{1} = 0\,. 
		\end{equation}
Now, consider first the case $k = 0$. Recalling that $ (\sigma^l)_{l\in\bZ}$ denotes a sequence of i.i.d. replicas sampled from $\langle\cdot\rangle$, Eq.\ \eqref{eq:k-pt-re} and \eqref{eq:zero2.4} imply that we can express $m_{j_1\ldots j_p}^2$ as
	\[\begin{split} m_{j_1\ldots j_p}^2 &= \Big\langle \big(\sigma^1_{j_1}\sigma^{-1}_{j_1}-N^{-1}\|\tbf m\|^2\big)\prod_{u = 1}^{p-1} \Big(\sum_{v = 1}^u(\sigma^v_{j_{u+1}} -  \sigma^{u+1}_{j_{u+1}})\Big)\Big(\sum_{v' = -1}^{-u}(\sigma^{v'}_{j_{u+1}} - \sigma^{-(u+1)}_{j_{u+1}})\Big)\Big\rangle\\
	%& = \Big\langle(\sigma^1_{j_1}\sigma^{-1}_{j_1}-N^{-1}\|\tbf m\|^2)\prod_{u = 1}^{p-1}  \sum_{v = 1}^u \sum_{v = -1}^{-u} \big( \sigma^v_{j_{u+1}}\sigma^{v'}_{j_{u+1}}  -  \sigma^{u+1}_{j_{u+1}} \sigma^{v'}_{j_{u+1}}   -  \sigma^v_{j_{u+1}} \sigma^{-(u+1)}_{j_{u+1}} + \sigma^{u+1}_{j_{u+1}}\sigma^{-(u+1)}_{j_{u+1}} \big)  \Big\rangle
	\end{split} \]
so that
	\[\begin{split}
		&N^{-p}\sum_{j_1,\ldots, j_p=1}^N m_{j_1\ldots j_p}^2\\
		& = \Big\langle \big(R_{1,-1}-N^{-1}\|\tbf m\|^2\big)\prod_{u = 1}^{p-1}  \sum_{v = 1}^u\sum_{v' = -1}^{-u} \big(R_{v,v'} - R_{v,-(u+1)} - R_{u+1,v'} + R_{u+1,-(u+1)} \big)\Big\rangle\\
		&  = \Big\langle(T_{1,-1}+T_1 + T_{-1} )\prod_{u = 1}^{p-1} \sum_{v = 1}^u\sum_{v' = -1}^{-u} \big(T_{v,v'} - T_{v',-(u+1)} - T_{u+1,v'} + T_{u+1,-(u+1)}\big)  \Big\rangle. 
	\end{split}\]
Here, we used in the second step that $R_{l,l'} = T_{l,l'} + T_l + T_{l'} + N^{-1}\|\tbf m\|^2$ and that 
		\[\begin{split}
		&R_{v,v'} - R_{v,-(u+1)} - R_{u+1,v'} + R_{u+1,-(u+1)}\\
		& = T_{v,v'} + T_v + T_{v'} + N^{-1}\|\tbf m\|^2  - T_{v,-(u+1)} - T_{v} - T_{-(u+1)} - N^{-1}\|\tbf m\|^2\\
		&\hspace{0.5cm} - T_{u+1,v'} - T_{u+1} - T_{v'} - N^{-1}\|\tbf m\|^2 + T_{u+1,-(u+1)} + T_{u+1} + T_{-(u+1)} + N^{-1}\|\tbf m\|^2\\
		& = T_{v,v'} - T_{v',-(u+1)} - T_{u+1,v'} + T_{u+1,-(u+1)}.
		\end{split}\]
Now, applying and using the same notation as in Theorem \ref{tal}, we get
	\begin{equation}
	\label{eq:3.3}
		 \Big| N^{-p}\sum_{j_1,\ldots, j_p=1}^N \bE \,m_{j_1\ldots j_p}^2 - N^{-\frac p2}E \prod_{u = 0}^{p-1} Z_u\Big| \leq  C N^{-\frac{p+1}2}, 
	\end{equation}
where $ (Z_u)_{0\leq u\leq p-1}$ is a Gaussian random vector whose entries are defined by 
		\[ Z_u = \begin{cases} U_{1,-1}+ U_1 + U_{-1}&: u=0, \\  \sum_{v = 1}^u\sum_{v' = -1}^{-u}\big(U_{vv'} - U_{v,-(u+1)} - U_{u+1,v'} + U_{u+1,-(u+1)}\big) &: u \geq 1. \end{cases}\]
This proves the lemma for the case $k=0$ and $p=2$. To obtain an additional decay factor $N^{-1/2}$ for the cases $p\geq 3$, notice that the entries of $ (Z_u)_{1\leq u\leq p-1}$ are independent:
		\[\begin{split}
		E Z_u Z_{u'} =& \,E\,  \sum_{v = 1}^u\sum_{v' = -1}^{-u} \sum_{w = 1}^{u'}\sum_{w' = -1}^{-u'} \big(U_{vv'} - U_{v,-(u+1)} - U_{u+1,v'} + U_{u+1,-(u+1)}\big) U_{ww'}\\
		=&\, \sum_{v = 1}^u\sum_{v' = -1}^{-u} \big( E\, U_{vv'}^2 -   E \, U_{v,-(u+1)}^2  -   E \, U_{u+1,v'}^2 +E\,  U_{u+1,-(u+1)}^2\big) = 0
		\end{split}\]
for all $1\leq u < u '\leq p-1$. Since $ (Z_u)_{0\leq u\leq p-1}$ is a Gaussian vector, we can then write
		\begin{equation*} Z_0 =  \sum_{u=1}^{p-1}x_u Z_u + Z', \end{equation*}
where $\tbf x = (x_1,\ldots, x_{p-1})\in \bR^{p-1}$ and where $Z'$ is a Gaussian random variable independent of the remaining entries $Z_1,\ldots,Z_{p-1}$. This implies for $p\geq 3$ that 
		\[E \prod_{u = 0}^{p-1} Z_u = \sum_{u=1}^{p-1} x_u  \big( E\, Z_u^2\big) \, E \prod_{\substack{ u'=1,\\ u'\neq u}}^{p-1}  Z_{u'} =0\]
and combining this with \eqref{eq:3.3}, we conclude that 
		\[ \Big| N^{-p}\sum_{j_1,\ldots, j_p=1}^N \bE \,m_{j_1\ldots j_p}^2 \Big| \leq  CN^{-\frac{p+1}2 }.\]
	
Next, consider the case $k = 1$. By \eqref{eq:k-pt-re}, we have that
		\begin{align*}
		m_{j_1\ldots j_p;1} =  \frac1N \sum_{i_1=1}^N m_{i_1i_1 j_1\ldots j_p} 
		& =  \frac 1 N \sum_{i_1=1}^N\Big\langle \sigma_{i_1}^1(\sigma_{i_1}^1 - \sigma_{i_1}^2)\prod_{u = 2}^{p+1} \Big(\sum_{v = 1}^u\big(\sigma^v_{j_{u-1}} - \sigma^{u+1}_{j_{u-1}}\big)\Big)\Big\rangle\\
		& =  \Big\langle (1 - R_{1,2})\prod_{u = 2}^{p+1} \Big(\sum_{v = 1}^u\big(\sigma^v_{j_{u-1}} - \sigma^{u+1}_{j_{u-1}}\big)\Big)\Big\rangle\\
		& = \Big\langle \big( N^{-1} \| \textbf m \|^2- R_{1,2}\big)\prod_{u = 2}^{p+1} \Big(\sum_{v = 1}^u\big(\sigma^v_{j_{u-1}} - \sigma^{u+1}_{j_{u-1}}\big)\Big)\Big\rangle,
		\end{align*}
where we used \eqref{eq:zero2.4} in the last step. Hence
	\[\begin{split}
		m_{j_1\ldots j_p;1}^2 &= \Big\langle \big(N^{-1} \| \textbf m \|^2-R_{12}\big)\big(N^{-1} \| \textbf m \|^2-R_{-1,-2}\big) \\
		&\hspace{1cm}\times \prod_{u = 2}^{p+1} \Big(\sum_{v = 1}^u(\sigma^v_{j_{u-1}} - \sigma^{u+1}_{j_{u-1}})\Big)\Big(\sum_{v' = -1}^{-u}(\sigma^{v'}_{j_{u-1}} -  \sigma^{-(u+1)}_{j_{u-1}})\Big)\Big\rangle
	\end{split}\]
and thus, arguing as in the first step, 
	\begin{align*}
		  N^{-p}\!\!\! \sum_{j_1,\ldots,j_p=1}^N \!\!\!\bE \, m_{j_1\ldots j_p;1}^2
		%&= \E \Ex{(q-R_{12})(q-R_{-1,-2})\prod_{u = 2}^{p+1} \Big(\sum_{v = 1}^u\sum_{v' = -1}^{-u}(R_{v,v'} - R_{v,-(u+1)} - R_{u+1,v'} + R_{u+1,-(u+1)}) \Big)}\\
		& = \bE \,\Big\langle (T_1 + T_2 + T_{12})(T_{-1} + T_{-2} + T_{-1,-2}) \\
		&\hspace{1cm}\times\prod_{u = 2}^{p+1} \Big(\sum_{v = 1}^u\sum_{v' = -1}^{-u}\big(T_{v,v'} - T_{v,-(u+1)} - T_{u+1,v'} + T_{u+1,-(u+1)}\big) \Big)\Big\rangle.
		% & = \E[(U_1 + U_2 + U_{12})(U_{-1} + U_{-2} + U_{-1,-2})X] + \mathcal O(N^{-(p+1)/2-1})
	\end{align*}
Applying and using the same notation as in Theorem \ref{tal}, we obtain that 
	\[\Big|  N^{-p}\!\!\! \sum_{j_1,\ldots,j_p=1}^N \!\!\!\bE \, m_{j_1\ldots j_p;1}^2 - E\, (U_1 + U_2 + U_{12})(U_{-1} + U_{-2} + U_{-1,-2}) Y_{1,p}\Big|\leq CN^{-\frac{p+3}2 }, \]
where we set 	
		$$Y_{1,p} =\prod_{u = 2}^{p+1} \Big(\sum_{v = 1}^u\sum_{v' = -1}^{-u}\big (U_{vv'} - U_{v,-(u+1)} - U_{u+1,v'} + U_{u+1,-(u+1)}\big)\Big).$$
Notice in particular that $ Y_{1,p}$ is independent of the random variables $ (U_{l,l'})_{l,l' \in \bZ: ll'>0 } $ and $ (U_{l})_{l\in\bZ}$ and that $E Y_{1,p} =0$, arguing as in the previous step. We therefore conclude that
		\[\Big|  N^{-p}\!\!\! \sum_{j_1,\ldots,j_p=1}^N \!\!\!\bE \, m_{j_1\ldots j_p;1}^2  \Big|\leq CN^{-\frac{p+3}2 } = C N^{-A_{2+p}}.\]

For the remaining cases $k\geq 2$, using the same arguments as above, we find that 
		\[ \Big|  N^{-p}\!\!\! \sum_{j_1,\ldots,j_p=1}^N \!\!\!\bE \, m_{j_1\ldots j_p;k}^2 - E\, X_{k,p} Y_{k,p} \Big|\leq CN^{-\frac{2k+p+1}2 } = N^{-A_{2k+p}},  \]
where 
		\[Y_{k,p} =\prod_{u = 2k}^{p+2k-1} \Big(\sum_{v = 1}^u\sum_{v' = -1}^{-u}\big (U_{vv'} - U_{v,-(u+1)} - U_{u+1,v'} + U_{u+1,-(u+1)}\big)\Big)\]
so that by independence $ E \,Y_{k,p}=0$, and where $X_{k,p}$ is a finite polynomial of the random variables $ (U_{l,l'})_{l,l' \in \bZ: ll'>0 } $ and $ (U_{l})_{l\in\bZ}$ so that $E [X_{k,p} Y_{k,p}] = 0$.
\end{proof}

We conclude this section with a few corollaries of Lemma \ref{corrbound} that provide useful bounds on the correlation functions, conditionally on a fixed spin $\sigma_i$. Recall that $ \textbf m\in\bR^N$ denotes the magnetization vector (i.e. $ m_i = \langle \sigma_i\rangle$) and that $ \textbf m^{(i)}$ denotes the magnetization of the system after removing $\sigma_i$. We write $\langle \cdot\rangle^{(i)}$ for the corresponding Gibbs measure so that
		\[ \textbf m^{(i)}  = \big(m_1^{(i)}, \ldots, m_{i-1}^{(i)}, m_{i+1}^{(i)} ,\ldots, m_N^{(i)}\big) = \big( \langle \sigma_1\rangle^{(i)}, \ldots,\langle \sigma_{i-1}\rangle^{(i)}, \langle\sigma_{i+1}\rangle^{(i)} , \ldots, \langle\sigma_N\rangle^{(i)}\big).  \] 
Note that $ \langle \cdot\rangle^{(i)}  $ is a Gibbs measure induced by the SK Hamiltonian as in \eqref{eq:HN}, but with $N-1$ particles and with coupling $ \beta' = (1-1/N)^{1/2}\beta $ (which is close to $\beta$ for $N\gg 1$). 

More generally, for fixed $i\in [N]$, let us identify $ (g_{ij})_{j:j\neq i}$ with $ (g_{ij}(1))_{j:j\neq i}$, where the components of $ \big( (g_{ij}(t) )_{j:j\neq i}\big)_{t\in [0,1]}$ denote $N-1$ independent Brownian motions of speed $1/N$, i.e. $ \bE  (g_{ij}(t))^2 = t/N$. Then, for $ \sigma_i\in\{-1,1\}$ and $t\in [0,1]$, we denote by $ \langle \cdot \rangle^{[i]}(\sigma_i, t)$ the Gibbs measure induced by the SK Hamiltonian 
		\[\begin{split}
		H_{N-1}^{[i]} (\sigma) &=  \sum_{\substack{ 1\leq u < v\leq N, \\ u,v\neq i} } \beta'  g'_{uv} \sigma_{u}\sigma_v  + \sum_{\substack{1\leq u\leq N: \\ u\neq i}}^N \big(h_u  + \beta g_{iu} (t) \sigma_i\big)  \sigma_u\\
		& = \sum_{\substack{ 1\leq u < v\leq N, \\ u,v\neq i} } \beta'  g'_{uv} \sigma_{u}\sigma_v  + \sum_{\substack{1\leq u\leq N: \\ u\neq i}}^N h_u'(\sigma_i,t)  \sigma_u
		\end{split} \]
for $ \sigma = (\sigma_j)_{j\in [N], j\neq i} \in \{-1,1\}^{N-1}$. Here, the interactions $ g'_{uv} = (1-1/N)^{-1/2} g_{uv}$ define a GOE matrix $\textbf{G}'\in  \bR^{N-1\times N-1}$ and the random field $\big( h_j'(\sigma_i,t)\big)_{j\in [N], j\neq i}$ consists of independent Gaussian copies of $h'\sim \cN(\mu_{h'}, \sigma_{h'}^2) =  \cN(\mu_h, \sigma_h^2+ \beta^2t /N)$. Observe that 
		\be \label{eq:bminb'} | \mu_h- \mu_{h'}| =0, \hspace{0.5cm} |\beta - \beta'| \to 0, \hspace{0.5cm} | \sigma^2_h- \sigma^2_{h'}| \to 0  \ee
as $N\to \infty$ and that $ \langle\cdot\rangle^{[i]} (\sigma_i,0) = \langle\cdot\rangle^{(i)}$. Moreover, observe that $ \langle\cdot\rangle^{[i]} (\cdot,1) $ corresponds to the Gibbs expectation $(\langle\cdot\rangle| \sigma_i )$ conditionally on $\sigma_i$. 

In the sequel, we denote by $ m_{j_1\ldots j_p}^{[i]} =m_{j_1\ldots j_p}^{[i]}(\sigma_i,t) $ the $k$-point and by $m_{j_1\ldots j_p;k}^{[i]} =m_{j_1\ldots j_p;k}^{[i]}(\sigma_i,t)$ the $(k,p)$-point correlation functions with regards to $\langle\cdot\rangle^{[i]}$, defined as in \eqref{eq:corrk} and \eqref{eq:corrkp}. Following \cite{ABSY}, we finally define for observables $f:\{-1,1\}^{N-1}\to \bR$ 
		\[\begin{split}
		\delta_i \langle f\rangle^{[i]} &= \big(\delta_i \langle f\rangle^{[i]}\big)(t) = \frac12 \langle f\rangle^{[i]}(1,t) - \frac12 \langle f\rangle^{[i]}(-1,t), \\
		\epsilon_i \langle f\rangle^{[i]}  &= \big(\epsilon_i \langle f\rangle^{[i]}\big) (t) = \frac12 \langle f\rangle^{[i]}(1,t) + \frac12 \langle f\rangle^{[i]}(-1,t), \\
		 \Delta_i \langle f\rangle^{[i]}  &= \big(\Delta_i \langle f\rangle^{[i]}\big) (t) = \big(\epsilon_i \langle f\rangle^{[i]}\big) (t) -   \langle f\rangle^{(i)}. 
		\end{split}\]
The next lemmas estimate certain correlation functions, conditionally on $\sigma_i$. To ease the notation, we write $ \sum$ and understand implicitly that all variables are averaged over $[N]\setminus \{i\}$. 
		
\begin{lemma} \label{lm:delta-m-estimate}
Assume $(\beta,\mu_h,\sigma_h^2)\in \cA_{RS^-}$ and $h\sim \cN(\mu_h,\sigma_h^2)$. Then, for every $k \geq 0,p \geq 1$ with $2k+p \geq 1$, there exists a constant $C>0$ such that
		\[\begin{split}
		\sup_{t\in [0,1]} N^{-p}\sum_{j_1,\ldots, j_p  }  \big\| \big(\Delta_i m_{j_1\ldots j_p;k}^{[i]}\big)(t) \big\|^2_2  &\leq C N^{-A_{2k+p+2}},\\
		\sup_{t\in [0,1]} N^{-p}\sum_{j_1,\ldots, j_p  } \big\| \big(\delta_i m_{j_1\ldots j_p;k}^{[i]} \big)(t)\big\|_2^2 &\leq C N^{-A_{2k+p+1}}.  
		\end{split}\]
\end{lemma}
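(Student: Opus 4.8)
The plan is to bound the two quantities by first reducing them, via the identity $\langle\cdot\rangle^{[i]}(\sigma_i,t)=\langle\cdot\rangle^{[i]}(\pm1,t)$, to differences of correlation functions of \emph{bona fide} SK models of size $N-1$ and then to invoke Lemma \ref{corrbound} for those models. More precisely, $\delta_i$ and $\epsilon_i$ are the odd and even parts in $\sigma_i\in\{-1,1\}$ of a single extra field coordinate of strength $\beta g_{iu}(t)\sigma_i$, and $\Delta_i$ subtracts off the value at $t=0$, where $\langle\cdot\rangle^{[i]}(\sigma_i,0)=\langle\cdot\rangle^{(i)}$ does not see $\sigma_i$ at all. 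The key structural observation is that adding a field term $\pm\beta g_{iu}(t)\sigma_u$ to the Hamiltonian $H^{[i]}_{N-1}$ changes only the Gaussian external field from $h'\sim\cN(\mu_h,\sigma_h^2+\beta^2 t/N)$ to $h'\pm\beta g_{iu}(t)$: since $(g_{iu}(t))_{u\ne i}$ are themselves i.i.d.\ centered Gaussians of variance $t/N$ independent of everything else, conditioning on $\sigma_i=+1$ (resp.\ $-1$) produces again an SK model with an i.i.d.\ Gaussian field, of variance $\sigma_h^2+\beta^2 t/N\pm\,(\text{correction})$ — in fact, because $g_{iu}(t)$ has the \emph{same law} as $-g_{iu}(t)$, the models $\langle\cdot\rangle^{[i]}(+1,t)$ and $\langle\cdot\rangle^{[i]}(-1,t)$ are equal in distribution, each being an SK model with parameters $(\beta',\mu_h,\sigma_h^2+\beta^2 t/N)\in\cA_{RS^-}$ for $N$ large, by openness of $\cA_{RS^-}$ and \eqref{eq:bminb'}.

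With this in place, the first step is to write $\delta_i m^{[i]}_{j_1\ldots j_p;k}(t)$ as a finite linear combination (with $\pm\tfrac12$ coefficients) of $(k,p)$-point functions of two SK models whose parameters lie in a fixed compact subset of $\cA_{RS^-}$. Applying Lemma \ref{corrbound} to each, one immediately gets $N^{-p}\sum_{j_1,\ldots,j_p}\bE(m^{[i]}_{j_1\ldots j_p;k})^2\le C N^{-A_{2k+p}}$ for those models; however this only gives the decay rate $A_{2k+p}$, whereas the statement claims the faster rate $A_{2k+p+1}$ for $\delta_i$ and $A_{2k+p+2}$ for $\Delta_i$. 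The extra decay must come from a Gaussian integration-by-parts (Stein) argument in the variables $g_{iu}(t)$: one writes, as in \cite{ABSY}, $\delta_i\langle f\rangle^{[i]}$ as an integral $\int_0^1 \tfrac{d}{ds}(\text{something})\,ds$ of an $s$-derivative that brings down one factor of $\beta g_{iu}(s)$ per term (each contributing an independent $\cN(0,s/N)$, hence an $N^{-1/2}$ after taking $L^2$ norms and using independence from the Gibbs average), and similarly $\Delta_i$ — being a second difference in the field coordinate — produces two such factors. Concretely, $\delta_i\langle f\rangle^{[i]}(t)=\tfrac{\beta}{2}\sum_u\int_0^1 g_{iu}(t)\,\partial_{s}\!\big(\ldots\big)$-type identities, or more cleanly the stochastic-calculus representation from \cite{ABSY}, let one extract these powers; the resulting $(k,p)$-point functions that appear after the IBP are again of the form covered by Lemma \ref{corrbound} (with the same or a shifted number of averaged double-indices), and the extra $N^{-1/2}$ (resp.\ $N^{-1}$) is exactly the gap between $A_{2k+p}$ and $A_{2k+p+1}$ (resp.\ $A_{2k+p+2}$). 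The supremum over $t\in[0,1]$ is harmless because all constants from Lemma \ref{corrbound} and Theorem \ref{tal} are locally bounded in the parameters and the parameter path $t\mapsto(\beta',\mu_h,\sigma_h^2+\beta^2 t/N)$ stays in a fixed compact subset of $\cA_{RS^-}$ for $N$ large.

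The main obstacle, I expect, is bookkeeping the Stein/IBP expansion so that every term that arises is genuinely of the shape $m^{[i]}_{j_1\ldots j_p;k}$ or a product of such (times bounded factors), with the index structure — each averaged index appearing exactly twice — preserved, so that Lemma \ref{corrbound} applies term by term; differentiating in $g_{iu}$ naturally creates new paired indices $u$, and one has to check these pair up correctly and that the count $2k+p$ shifts in the claimed way. A secondary technical point is justifying the differentiation under the Gibbs average and the interchange with $\bE$ and $\sup_t$ (uniform integrability via the concentration bound of Theorem \ref{thm:conc}), but this is routine given the exponential moment control already available. Once the combinatorial reduction is set up, the estimate is a direct consequence of Lemma \ref{corrbound} applied to the size-$(N-1)$ models, exactly in the spirit of the corresponding step in \cite{ABSY}.
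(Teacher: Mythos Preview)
Your approach matches the paper's: apply It\^o's lemma in the Brownian row $(g_{iu}(t))_{u\ne i}$, note that the resulting integrands are $(k,p')$-point functions of $(N{-}1)$-particle SK models with parameters in $\cA_{RS^-}$, and combine the It\^o isometry with Lemma~\ref{corrbound}. One sharpening worth making explicit: rather than viewing $\Delta_i$ as a ``second difference'' producing two factors simultaneously, the paper bootstraps --- it first proves the $\delta_i$ bound (It\^o yields martingale integrand $\epsilon_i m^{[i]}_{lj_1\ldots j_p;k}$ and drift $\delta_i m^{[i]}_{j_1\ldots j_p;k+1}$, both controlled directly by Lemma~\ref{corrbound} at rates $A_{2k+p+1}$ and $A_{2k+p+2}$ respectively), and then feeds that into the It\^o expansion of $\Delta_i$, whose martingale integrand is $\delta_i m^{[i]}_{lj_1\ldots j_p;k}$ and whose drift is $\epsilon_i m^{[i]}_{j_1\ldots j_p;k+1}$.
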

\begin{proof}
Applying It\^{o}'s lemma w.r.t. $ \big( (g_{ij}(t) )_{j\in [N]:j\neq i}\big)_{t\in [0,1]}$, we find by definition \eqref{eq:corrkp} that
		\begin{align*}
		&\dif \big( \Delta_i m_{j_1\ldots j_p;k}^{[i]}\big) = \beta\sum_{l}  \big(\delta_i m_{lj_1\ldots j_p;k}^{[i]}\big) \dif g_{il}+ \frac{\beta^2}2(1-N^{-1}) \big(\epsilon_i m_{j_1\ldots j_p;k+1}^{[i]} \big)\dif t\,,\\
		&\dif \big( \delta_i m_{j_1\ldots j_p;k}^{[i]}\big) = \beta\sum_{l} \big(\epsilon_im_{lj_1\ldots j_p;k}^{[i]}\big) \dif g_{il}+ \frac{\beta^2}2(1-N^{-1}) \big(\delta_im_{j_1\ldots j_p;k+1}^{[i]}\big)\dif t\,.
		\end{align*}	
Notice that the integrands on the r.h.s. in the previous equations are linear combinations of suitable $(k,p)$-point functions with interaction coupling and external field parameters all satisfying \eqref{eq:AT} \& \eqref{eq:AT+} for $N$ large (by the remarks around \eqref{eq:bminb'}) s.t. by Lemma \ref{corrbound}
		\[\begin{split}
		 \sup_{t\in [0,1]} N^{-p-1}\sum_{l,j_1,\ldots, j_p  } \big\| \big( \epsilon_im_{lj_1\ldots j_p;k}^{[i]} \big)(t)\big\|_2^2 & \leq C N^{-A_{2k+p+1}}, \\
		  \sup_{t\in [0,1]} N^{-p}\sum_{j_1,\ldots, j_p  } \big\| \big( \delta_im_{j_1\ldots j_p;k+1}^{[i]}\big)(t)\big\|_2^2 &\leq C N^{-A_{2k+2+p}}. \end{split}\] 
Hence, employing the It\^{o} isometry yields
		\[\sup_{t\in [0,1]}   N^{-p}\sum_{j_1,\ldots, j_p  }  \big\| \big( \delta_i m_{j_1\ldots j_p;k}^{[i]}\big)(t)\big\|_2^2 \leq C N^{-A_{2k+p+1}}\]
and plugging this into the dynamical equation for $\Delta_i m_{j_1\ldots j_p;k}^{[i]}$, we get
		\[ \sup_{t\in [0,1]} N^{-p}\sum_{j_1,\ldots, j_p  }  \big\| \big(\Delta_i m_{j_1\ldots j_p;k}^{[i]}\big)(t) \big\|^2_2  \leq C N^{-A_{2k+p+2}}. \]
\end{proof}

\begin{lemma}\label{g-delta-est}
Assume $(\beta,\mu_h,\sigma_h^2)\in \cA_{RS^-}$ and $h\sim \cN(\mu_h,\sigma_h^2)$. Then, for every $k \geq 0,p \geq 1$ with $2k+p \geq 1$, there exists a constant $C>0$ such that
		\be\label{eq:g-delta-est}\begin{split}
		\sup_{t\in [0,1]} N^{-p} \sum_{j_1,\ldots,j_p} \Big\| \sum_{j} g_{ij}(t) \big(\Delta_i m_{j j_1\ldots j_p ;k}^{[i]}\big)(t) \Big\|_2^2  \leq C N^{-A_{2k+p+3}}, \\
		\sup_{t\in [0,1]} N^{-p} \sum_{j_1,\ldots,j_p} \Big\| \sum_{j} g_{ij}(t) \big(\delta_i m_{j j_1\ldots j_p ;k}^{[i]}\big) (t) \Big\|_2^2  \leq C N^{-A_{2k+p+2}}.
		\end{split} \ee
\end{lemma}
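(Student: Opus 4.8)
The plan is to reduce Lemma \ref{g-delta-est} to Lemma \ref{lm:delta-m-estimate} by conditioning on the Brownian motions $(g_{ij}(t))_{j \neq i}$ and exploiting the independence structure. Fix $i$ and $t$. The key observation is that, conditionally on the full trajectory $\big((g_{ij}(s))_{j \neq i}\big)_{s \in [0,t]}$ — equivalently on the field increments that define $H_{N-1}^{[i]}(\cdot)$ — the quantities $\big(\Delta_i m_{j j_1 \ldots j_p;k}^{[i]}\big)(t)$ are deterministic functions of the GOE matrix $\textbf{G}'$ and of the realized field values $\big(h_u'(\sigma_i,t)\big)_{u \neq i}$. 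One must be a little careful: $\delta_i$ and $\epsilon_i$ already involve fixing $\sigma_i = \pm 1$, so $\big(\Delta_i m_{jj_1\ldots j_p;k}^{[i]}\big)(t)$ depends on the trajectory only through the running variance $\sigma_{h'}^2 = \sigma_h^2 + \beta^2 t/N$ and through the two field vectors $h_u \pm \beta g_{iu}(t)$; in particular the vector $\big(g_{iu}(t)\big)_{u \neq i}$ is \emph{not} independent of $\big(\Delta_i m^{[i]}\big)(t)$. The right move, following \cite{ABSY}, is therefore not a naive conditioning but rather to write $g_{iu}(t)$ in terms of the Gaussian field it feeds into: the field $h_u'(\sigma_i,t) = h_u + \beta g_{iu}(t)\sigma_i$ and we may use Gaussian integration by parts (the Stein/Gaussian-Wick identity) in the variables $g_{iu}(t)$.

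Concretely, first I would apply Gaussian integration by parts in each $g_{ij}(t)$ to the expression $\bE \big\| \sum_j g_{ij}(t)\big(\Delta_i m_{jj_1\ldots j_p;k}^{[i]}\big)(t)\big\|_2^2$. Expanding the square gives $\sum_{j,j'} \bE\, g_{ij}(t) g_{ij'}(t) \big(\Delta_i m_{jj_1\ldots j_p;k}^{[i]}\big)(t)\big(\Delta_i m_{j'j_1\ldots j_p;k}^{[i]}\big)(t)$; each factor $g_{ij}(t)$, having variance $t/N \le 1/N$, can be integrated by parts, producing either the Kronecker-delta term $\tfrac t N \delta_{jj'}$ times a product of two $\Delta_i m^{[i]}$'s — which after summing over $j_1,\dots,j_p$ and $j=j'$ and dividing by $N^p$ is bounded by $N^{-1} \cdot N^{-A_{2k+p+2}} \le N^{-A_{2k+p+3}}$ by Lemma \ref{lm:delta-m-estimate} (using $A_{m+1} \le A_m + 1$ and in fact $A_{m}+1 \ge A_{m+2}$ up to the $\pm 1/2$ jump, which one checks case-by-case from the definition $A_m = m/2$ for $m\le 2$, $A_m = (m+1)/2$ for $m \ge 3$) — or a derivative term $\tfrac t N \partial_{g_{ij}(t)}\big[\,\cdot\,\big]$. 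The derivative $\partial_{g_{ij}(t)}$ acting on $\big(\Delta_i m_{j'j_1\ldots j_p;k}^{[i]}\big)(t)$ is, by the same Itô/differentiation rules already used in the proof of Lemma \ref{lm:delta-m-estimate}, a linear combination of $(k',p')$-point functions of the type $\delta_i m^{[i]}$ and $\epsilon_i m^{[i]}$ with one extra free index $j$, and those are controlled by Lemmas \ref{corrbound} and \ref{lm:delta-m-estimate}; each such term carries the prefactor $\tfrac tN$, which supplies the extra $N^{-1}$ needed to reach the exponent $A_{2k+p+3}$. The second bound in \eqref{eq:g-delta-est}, for $\delta_i m^{[i]}$ in place of $\Delta_i m^{[i]}$, is handled identically, starting from the weaker input $\sup_t N^{-p}\sum \|\delta_i m^{[i]}_{j_1\ldots j_p;k}(t)\|_2^2 \le C N^{-A_{2k+p+1}}$ and gaining one power of $N^{-1}$, landing at $N^{-A_{2k+p+2}}$.

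A cleaner alternative, which I would present as the main line, avoids integration by parts entirely: condition on the sigma-algebra $\mathcal{G}_i$ generated by everything \emph{except} $(g_{ij}(t))_{j\neq i}$ — that is, on $\textbf{G}'$ and on the base fields $(h_u)_{u\neq i}$. Wait: this does not decouple either, since $\Delta_i m^{[i]}(t)$ depends on $(g_{ij}(t))_j$ through the fields. The correct conditioning that \emph{does} work is on the trajectory increments on $[0,t]$ modulo the terminal value, but the technically cleanest route really is Gaussian integration by parts as above, so I would commit to that. Throughout, uniformity in $t \in [0,1]$ comes for free because the constants in Lemmas \ref{corrbound} and \ref{lm:delta-m-estimate} are uniform in $t$ (via the local boundedness of the Talagrand constants over $\cA_{RS^-}$, together with \eqref{eq:bminb'}), and all derivative-term coefficients are bounded by absolute constants times $\beta$.

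The main obstacle I anticipate is purely bookkeeping: after one integration by parts, the derivative $\partial_{g_{ij}(t)}$ generates terms in which the index $j$ appears in a position that is \emph{summed} rather than free, so one must recognize these as $(k+1,p)$-point functions (the pair $jj$ gets absorbed into the averaged indices) or as $(k,p+1)$-point functions with $j$ free, and then invoke the appropriate case of Lemma \ref{lm:delta-m-estimate} or Lemma \ref{corrbound} with the shifted arguments. Verifying that every resulting exponent is at least $A_{2k+p+3}$ (resp.\ $A_{2k+p+2}$) requires checking the elementary inequality $A_{m+2} \le A_m + 1$ for all $m \ge 1$, paying attention to the boundary cases $m = 1, 2$ where the definition of $A_m$ switches; this is where a small amount of care, but no real difficulty, is needed.
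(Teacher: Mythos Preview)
Your approach---expand the square and apply Gaussian integration by parts in the $g_{ij}(t)$ variables, then control the resulting $(k',p')$-point terms via Lemma~\ref{lm:delta-m-estimate} and Cauchy--Schwarz---is correct and is exactly what the paper does. One point to make explicit: the paper integrates by parts \emph{twice} (first in $g_{ij'}$, then in the surviving $g_{ij}$), and you will need to as well, since after a single integration by parts the derivative terms still carry a bare factor of $g_{ij'}$ which cannot simply be bounded in $L^2$ without losing the required extra power of $N^{-1}$; your bookkeeping paragraph already anticipates the correct post-IBP structure, so this is only a matter of presentation.
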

\begin{proof}
The bounds follow from integration by parts applied to $ g_{ij}(t) $ and $ g_{ij'}(t) $ in
		\[N^{-p} \sum_{j_1,\ldots,j_p}   \sum_{j,j'}  \bE\, g_{ij} g_{ij'} \big(\Delta_i m_{j j_1\ldots j_p ;k}^{[i]}\big) \big(\Delta_i m_{j' j_1\ldots j_p ;k}^{[i]}\big)\]
and applying Lemma \ref{lm:delta-m-estimate}. It will be clear that the bounds are uniform in $t\in [0,1]$ so for simplicity, let us ignore the $t$-dependence in the notation. Applying Gaussian integration by parts first to the $g_{ij'}$ implies
		\[\begin{split}
		&  N^{-p} \sum_{j_1,\ldots,j_p}   \sum_{j,j'}  \bE\, g_{ij} g_{ij'} \big(\Delta_i m_{j j_1\ldots j_p ;k}^{[i]}\big) \big(\Delta_i m_{j' j_1\ldots j_p ;k}^{[i]}\big) \\
		& = N^{-p} \sum_{j_1,\ldots,j_p=1}^N\bigg(  \frac tN \sum_{j=1}^N \bE   \big(\Delta_i m_{j j_1\ldots j_p ;k}^{[i]}\big)^2 \\
		&\hspace{3cm}+ \beta^2 t(1-N^{-1})  \sum_{j=1}^N  \bE\, g_{ij}  \big(\Delta_i m_{j j_1\ldots j_p ;k}^{[i]}\big) \big(\Delta_i m_{ j_1\ldots j_p ;k+1}^{[i]}\big)\\
		&\hspace{3cm} + \beta^2  \frac tN \sum_{j,j'=1}^N  \bE\, g_{ij}  \big(\Delta_i m_{j j' j_1\ldots j_p ;k}^{[i]}\big) \big(\Delta_i m_{j' j_1\ldots j_p ;k}^{[i]}\big) \bigg)\\
		\end{split}\]
and then, applying Gaussian integration by parts to the $g_{ij}$, we arrive at	
		\[\begin{split}
		&  N^{-p} \sum_{j_1,\ldots,j_p}   \sum_{j,j'}  \bE\, g_{ij} g_{ij'} \big(\Delta_i m_{j j_1\ldots j_p ;k}^{[i]}\big) \big(\Delta_i m_{j' j_1\ldots j_p ;k}^{[i]}\big) \\
		& = N^{-p} \sum_{j_1,\ldots,j_p}\bigg(  \frac tN \sum_{j} \bE   \big(\Delta_i m_{j j_1\ldots j_p ;k}^{[i]}\big)^2 + \beta^4 t^2 (1-N^{-1})   \bE\,    \big(\Delta_i m_{j_1\ldots j_p ;k+1}^{[i]}\big)^2 \\
		&\hspace{3cm} + \frac{\beta^4 t^2}N (1-N^{-1})\sum_{j}  \bE\,   \big(\Delta_i m_{j j_1\ldots j_p ;k}^{[i]}\big) \big(\Delta_i m_{j j_1\ldots j_p ;k+1}^{[i]}\big)\\
		&\hspace{3cm}+  \frac{\beta^4t^2}{N}  (1-N^{-1})\sum_{j'}  \bE\,   \big(\Delta_i m_{j' j_1\ldots j_p ;k+1}^{[i]}\big) \big(\Delta_i m_{j' j_1\ldots j_p ;k}^{[i]}\big) \\
		&\hspace{3cm}+   \frac{\beta^4t^2}{N^2}  \sum_{j,j'}  \bE\,   \big(\Delta_i m_{j j' j_1\ldots j_p ;k}^{[i]}\big)^2   \bigg).
		\end{split}\]
Applying Lemma \ref{lm:delta-m-estimate} and Cauchy-Schwarz, we thus obtain
		\[\begin{split}
		\sup_{t\in [0,1]} N^{-p} \sum_{j_1,\ldots,j_p}\Big\| \sum_{j} g_{ij}(t) \big(\Delta_i m_{j j_1\ldots j_p ;k}^{[i]}\big)(t) \Big\|_2^2  \leq CN^{-A_{2k+p+3}}. 
		\end{split} \]
Repeating the computation while replacing $\Delta_i$ by $\delta_i$ in all integrands, we obtain \eqref{eq:g-delta-est}.
\end{proof}

%%%%%%%%%%%%%%%%%%%%%%%%%%%%%%%%%%%%%%%%%%%%%%%%%
%%%%%%%%%%%%%%%%%%%%%%%%%%%%%%%%%%%%%%%%%%%%%%%%%
%%%%%%%%%%%%%%%%%%%%%%%%%%%%%%%%%%%%%%%%%%%%%%%%%

\section{Proof of Propositions \ref{prop:hTAP} and \ref{prop:M-H}} \label{sec:proofpart1}

We start with the proof of Prop.\ \ref{prop:hTAP} which is a simple consequence of Lemma \ref{lm:delta-m-estimate}. 
\begin{proof}[Proof of Prop.\ \ref{prop:hTAP}]
A direct computation shows that 
		\[ m_{ij} = (1 - m_i^2)\,\delta_i m_j^{[i]}.  \]
Now, applying It\^{o} as in Lemma \ref{lm:delta-m-estimate} in the previous section on the $i$-th row of $\textbf{G}$, we find
		\[\begin{split}
		\frac{m_{ij}}{1 - m_i^2} -  \beta \sum_{l } g_{il} m_{lj}^{(i)}  &=\beta  \sum_{k}\int_0^1 \big(\Delta_{i}m_{jk}^{[i]}\big)(t)\dif g_{ik}(t) + \frac {\beta^2}{2 N}\sum_{k}\int_0^1 \big(\delta_{i} m_{kkj}^{[i]}\big)(t)\dif t \\
		& = \beta\sum_{k}\int_0^1 \big(\Delta_{i}m_{jk}^{[i]}\big)(t)\dif g_{ik}(t) + \frac{\beta^2}2(1-N^{-1}) \int_0^1 \big(\delta_{i} m_{j;1}^{[i]}\big)(t)\dif t.
		\end{split} \]
Using that $ (1 - m_i^2)\leq 1$ and applying Lemma \ref{lm:delta-m-estimate}, we conclude that
		\be \label{eq:2435}
		\begin{split}
		&\Big\| \frac{m_{ij}}{1 - m_i^2} - \beta\sum_{l \not = i} g_{il} m_{lj}^{(i)} \Big\|_2^2 \\
		&\leq  \frac CN \sum_{k}\sup_{t\in[0,1]}  \big\| \big(\Delta_{i}m_{jk}^{[i]}\big)(t) \big\|_2^2  + C \sup_{t\in[0,1]}  \big\| \big(\delta_{i} m_{j;1}^{[i]}\big)(s) \big\|_2^2  \\
		& = C \sup_{t\in[0,1]} \bigg( \frac1{(N-1)^2} \sum_{j_1,j_2}  \big\| \big(\Delta_{i}m_{j_1j_2}^{[i]}\big)(t) \big\|_2^2  + \frac1{N-1} \sum_{j}  \big\| \big(\delta_{i} m_{j;1}^{[i]}\big)(s) \big\|_2^2\bigg)  \leq C  N^{-5/2}.
		\end{split}\ee
Notice that in the second step of the last bound we used the symmetry among the sites $j \in [N]\setminus\{i\}$, in order to average over both indices of the two point functions.
\end{proof}
	
We conclude this section with the proof of Prop.\ \ref{prop:M-H}. Our goal is to show that 
		\[ \tbf Y = \big(\tbf D  + \beta^2(1-q)   -  \beta \tbf G \big) \tbf M  - \text{id}_{\bR^{N }} \in \bR^{N\times N}\]
with entries   
		\be \label{eq:defyik}  
		y_{ik} =  \begin{cases} (1-m_i^2) \big( \beta^2(1-q) - \beta \sum_j g_{ij} \delta_im_j^{[i]}\big) &: i=k, \\  (1-m_i^2)^{-1}m_{ik} + \beta^2(1-q)m_{ik}  - \beta\sum_{j} g_{ij}m_{jk} &: i\neq k,  \end{cases}
		\ee
has operator norm $ \| \tbf Y\|_{\text{op}}$ bounded by $ o(1) \|\tbf M\|_{\text{op}}$, up to a quantity $O(1)$ which is of order one with high probability. Our proof relies on Prop.\ \ref{prop:hTAP}. In the following, the standard Frobenius (or Hilbert-Schmidt) norm is denoted by $\|\cdot\|_{\text F}$; recall that $ \| \cdot\|_{\text{op}}\leq \|\cdot\|_{\text F}$. 

\begin{proof}[Proof of Prop.\ \ref{prop:M-H}]
A direct computation shows that 
		\[\begin{split}
		m_{jk} =  \langle m_j^{[i]};m_k^{[i]}\rangle + \langle m_{jk}^{[i]}\rangle & =\big( \delta_im_j^{[i]}\big) m_{ik} + \epsilon_i m_{jk}^{[i]} + m_i \big(\delta_i m_{jk}^{[i]}\big) 
		\end{split} \]
so that, motivated by Prop.\ \ref{prop:hTAP}, we can decompose $ y_{ik}$ for $i\neq k$ into 
		\[\begin{split}
		y_{ik} & = \frac{m_{ik}}{1-m_i^2} - \beta\sum_j g_{ij} m_{jk}^{(i)}  - \beta\sum_{j} g_{ij} \big(m_{jk} -m_{jk}^{(i)}\big) + \beta^2(1-q)m_{ik} \\
		& = \Big( \beta^2 (1-q)  - \beta \sum_j g_{ij} \delta_im_j^{[i]}  \Big) m_{ik} + \Big( \frac{m_{ik}}{1-m_i^2} - \beta\sum_j g_{ij} m_{jk}^{(i)}\Big)  \\
		&\hspace{0.5cm}  - \beta \sum_j g_{ij} \big(\Delta_i m_{jk}^{[i]} \big)  - \beta m_i \sum_jg_{ij}\big( \delta_i m_{jk}^{[i]}\big).
		\end{split}\]
Comparing this with the diagonal entries of $\tbf Y$ in Eq.\ \eqref{eq:defyik}, we can split
		\[\tbf Y = \tbf Y_1 \tbf M + \tbf Y_2 - \tbf Y_3 - \tbf Y_4, \]
where the $ \tbf Y_j\in \bR^{N\times N}$, for $j\in \{1,2,3,4\}$, are defined by
		\[\begin{split}
		( \tbf Y_1 )_{ik}  &= \Big( \beta^2 (1-q)  - \beta \sum_j g_{ij} \delta_im_j^{[i]}  \Big) \delta_{ik} ,
		\hspace{0.5cm}(\tbf Y_2)_{ik}  = \begin{cases} 0 &: i=k, \\  \frac{m_{ik}}{1-m_i^2} - \beta\sum_j g_{ij} m_{jk}^{(i)} &: i\neq k,  \end{cases} \\
		(\tbf Y_3)_{ik} & = \begin{cases} 0 &: i=k, \\  \beta \sum_j g_{ij} \big(\Delta_i m_{jk}^{[i]} \big)  &: i\neq k,  \end{cases} 
		\hspace{1.1cm}(\tbf Y_4)_{ik}  = \begin{cases} 0 &: i=k, \\  \beta m_i \sum_jg_{ij}\big( \delta_i m_{jk}^{[i]}\big)  &: i\neq k.  \end{cases}
		\end{split}\]
The matrix $ \tbf Y_2$ vanishes in norm when $N\to \infty$, in the sense of probability, by Eq.\ \eqref{eq:2435} and Markov's inequality: for every $\delta >0$, we find that
		\[ \bP\big( \| \tbf Y_2\|_{\text{op}} > \delta\big)\leq \delta^{-2} \bE \| \tbf Y_2\|_{\text{op}}^2 \leq \delta^{-2} \bE \| \tbf Y_2\|_{\text{F}}^2 = N^2  \max_{i,j\in[N]: i\neq j} \bE  \big(\tbf Y_2)_{ij}^2 \leq C \delta^{-2}N^{-1/2}.\]	
By Lemma \ref{g-delta-est}, the same argument implies $ \lim_{N\to\infty} \| \tbf Y_3\|_{\text{op}} =0 $, and we have that
		\[\begin{split}
		\bP \big ( \| \tbf Y_4\|_{\text{op}} > K\big) &\leq K^{-2} \beta^2 N^2 \max_{i,j\in[N]: i\neq j} \bE  \big(\tbf Y_4)_{ij}^2 \\
		&\leq C K^{-2} N^2    N^{-1}\sum_{j_1} \Big\| \sum_jg_{1j}\big( \delta_1 m_{jj_1}^{[1]}\big)\Big\|_2^2  \leq C K^{-2}, 
		\end{split}\]	
for some $C>0$, independent of $N$, where we used symmetry among the sites to reduce the computation to the first row of $\tbf G$. Combining these remarks, Prop.\ \ref{prop:M-H} follows if
		\be\label{eq:laststep1.2}\lim_{N\to\infty} \|\tbf Y_1 \|_{\text{op}} = 0 \ee		
in probability. 

To prove \eqref{eq:laststep1.2}, we argue first as in \eqref{eq:2435} to obtain that
		\[ \bE \Big|  \sum_j g_{ij} \Big( \delta_im_j^{[i]} -   \beta \sum_{k} g_{ik}m_{kj}^{(i)}\Big) \Big|^2 \leq C N^{-5/2}. \]
Indeed, this bound is a direct consequence of It\^o's lemma, which shows that
		\[\begin{split}
		 \dif \sum_j g_{ij} \Big( \delta_im_j^{[i]} -   \beta \sum_{k} g_{ik}m_{kj}^{(i)}\Big) 
		& =   \beta\!\sum_{j}\Big( \delta_im_j^{[i]} \!- \!  \beta \sum_{k} g_{ik}m_{kj}^{(i)}\Big) \dif g_{ij}\!+ \!\beta\sum_{j, k} g_{ij} \big( \Delta_{i}m_{jk}^{[i]}\big) \!\dif g_{ik} \\
		&\hspace{0.5cm}+ \frac{\beta^2}2(1-N^{-1})\sum_j g_{ij}  \big(\delta_{i} m_{j;1}^{[i]}\big)\dif t +  \frac\beta N \sum_{j }   \big( \Delta_{i}m_{jk}^{[i]}\big) \dif t,
		\end{split}\]
and estimating the different contributions on the r.h.s. of the previous identity using Lemmas \ref{lm:delta-m-estimate} and \ref{g-delta-est}. Combining the previous bound with Markov thus implies 
		\[\lim_{N\to\infty} \sup_{i\in[N]} \Big| (\tbf Y_1 )_{ii}  - \Big( \beta^2(1-q) -\beta^2 \sum_{j}g_{ij} \sum_{k} g_{ik}m_{kj}^{(i)}\Big)\Big| =0    \]
in probability. 

We now  split  
		\[\begin{split}
		& \beta^2(1-q) -\beta^2 \sum_{j}g_{ij} \sum_{k} g_{ik}m_{kj}^{(i)} \\
		&= \beta^2 \big\langle R_{1,2} - q \big\rangle^{(i)} - \beta^2\sum_{j} \big( g_{ij}^2 - (N-1)^{-1}\big) \big(1- \big(m_j^{(i)}\big)^2\big)-\beta^2 \sum_{j,k:j\neq k}g_{ij} g_{ik}m_{kj}^{(i)}, 
		\end{split}\]
and it is straightforward to deduce from Theorem \ref{thm:conc} that 
		\[ \lim_{N\to\infty} \sup_{i\in[N]} \big| \big\langle R_{1,2} - q \big\rangle^{(i)}\big|  = 0 \]
in probability. Indeed, this follows from Markov's inequality combined with Theorem \ref{thm:conc}, the continuity of $ (\beta, \mu_h, \sigma_h^2)\to q_{\beta,h} =q_{\beta,\mu_h, \sigma_h^2} $ and the fact that $ \langle \cdot\rangle^{(i)}$ is a SK measure with interaction coupling $ \beta' =\beta'_N$ and Gaussian external field $h'=h'_N$ such that $  | \beta- \beta'| \to 0$, $ | \sigma_h^2-\sigma_{h'}^2| \to 0$ as $N\to \infty$ and $|\mu_h-\mu_{h'}| =0$, as remarked already at \eqref{eq:bminb'}.

On the other hand, a standard exponential concentration bound yields 
		\[ \lim_{N\to\infty} \sup_{i\in[N]} \Big| \sum_{j} \big( g_{ij}^2 - (N-1)^{-1}\big) \big(1- \big(m_j^{(i)}\big)^2\big)\Big|= \lim_{N\to\infty} \sup_{i\in[N]} \Big| \sum_{j} \big( g_{ij}^2 - N^{-1}\big) \Big|=0\]
and, finally, applying once more Theorem \ref{thm:conc}, a straightforward computation shows that
		\[\begin{split}
		\bE \Big(  \sum_{j,k:j\neq k}g_{ij} g_{ik}m_{kj}^{(i)}\Big)^4 &  \leq CN^{-4} \sum_{j_1,j_2,j_3,j_4}  \bE \, \big(m_{j_1j_2}^{(i)}\big)^2 \big(m_{j_3j_4}^{(i)}\big)^2.
		\end{split}\]
Then, using the identity 
		\[\begin{split}
		\big(m_{j_1j_2}^{(i)}\big)^2 &= \big\langle  \big( \sigma_{j_1}^1 -  \sigma_{j_1}^2 \big)\big( \sigma_{j_2}^1 -  \sigma_{j_2}^3 \big)\big( \sigma_{j_1}^{-1} -  \sigma_{j_1}^{-2} \big)\big( \sigma_{j_2}^{-1} -  \sigma_{j_2}^{-3} \big)  \big \rangle^{(i)}
		\end{split}\]
together with Theorem \ref{thm:conc} and Cauchy-Schwarz, we arrive at
		\[\begin{split}
		\bE \Big(  \sum_{j,k:j\neq k}g_{ij} g_{ik}m_{kj}^{(i)}\Big)^4 &  \leq C \,\bE\, \big( \big\langle  \big( R_{1,-1} +R_{2,-2} - R_{1,-2} - R_{2,-1}\big)^2 \big\rangle^{(i)} \big)^2 \leq C N^{-2}.
		\end{split}\]			
Hence, by Markov's inequality, we obtain that
		\[ \lim_{N\to\infty} \sup_{i\in[N]} \Big|    \sum_{j,k:j\neq k}g_{ij} g_{ik}m_{kj}^{(i)}\Big| = 0, \]
in probability. This proves $\lim_{N\to\infty} \|\tbf Y_1 \|_{\text{op}} = 0$ and concludes Prop.\ \ref{prop:M-H}.
\end{proof}

%%%%%%%%%%%%%%%%%%%%%%%%%%%%%%%%%%%%%%%%%%%%
%%%%%%%%%%%%%%%%%%%%%%%%%%%%%%%%%%%%%%%%%%%%
%%%%%%%%%%%%%%%%%%%%%%%%%%%%%%%%%%%%%%%%%%%%

\section{Proof of Proposition \ref{prop:op-H}} \label{sec:AMP}

In this section we conclude Theorem \ref{thm:main} by proving Prop.\ \ref{prop:op-H}. Our arguments rely on the main results of \cite{Bolt1, Bolt2,CT21} and translate, up to a few modifications, the main ideas from \cite{celentano2022sudakov} to the present context. For the rest of this paper, $h>0$ denotes a deterministic field strength as in the definition of $H_N$ in Eq.\ \eqref{eq:HN} in Section \ref{sec:intro}.

We first recall the main result of \cite{CT21}, which states that, in a suitable subregion of the replica symmetric phase, the magnetization vector $\tbf m $ is well approximated by an iterative solution to the TAP \cite{TAP} equations introduced by Bolthausen in \cite{Bolt1, Bolt2}. To state this precisely, we need to recall Bolthausen's construction of the TAP solution and collect some of its properties. Here, we find it convenient to follow the conventions and notation used in \cite{Bolt2, BY}. We start with the sequences $(\alpha_{k})_{k\in\mathbb{N}}$, $(\gamma_k)_{k\in\mathbb{N}}$ and $(\Gamma_k)_{k\in\mathbb{N}}$ which have initializations
		\[ \alpha_1 = \sqrt{q}\gamma_1,\hspace{0.5cm} \gamma_1 = E \tanh(h+ \beta\sqrt q Z ), \hspace{0.5cm} \Gamma_1^2 = \gamma_1^2\]
and we define $\psi: [0,q]\to [0,q]$ by
		\[ \psi(t)=E \tanh\big(h+ \beta \sqrt t Z +\beta \sqrt{q-t}Z' \big) \tanh \big(h+ \beta \sqrt t Z +\beta \sqrt{q-t}Z'' \big) \]
Then, we set recursively
		\[\alpha_{k} =  \psi(\alpha_{k-1}), \hspace{0.5cm} \gamma_k = \frac{\alpha_{k} - \Gamma^2_{k-1}}{\sqrt{q - \Gamma^2_{k-1}}}, \hspace{0.5cm} \Gamma_k^2 = \sum_{j=1}^k \gamma_j^2. \]
\begin{lemma}{(\cite[Lemma 2.2, Corollary 2.3, Lemma 2.4]{Bolt1}, \cite[Lemma 2]{Bolt2})}\label{lm:seqlemma}
\begin{enumerate}[1)]
\item $\psi$ is strictly increasing and convex in $[0,q]$ with $0< \psi(0) <\psi(q) = q$. If \eqref{eq:AT} is satisfied, then $q$ is the unique fixed point of $\psi$ in $[0,q]$.
\item $(\alpha_{k})_{k\in\mathbb{N}}$ is increasing and $\alpha_{k} >0$ for all $k\in\mathbb{N}$. If \eqref{eq:AT} is satisfied, then $\lim_{k\to\infty} \alpha_{k} =q$. 
\item For $k \geq 2$, we have that $ 0< \Gamma_{k-1}^2<\alpha_{k} <q $ and that $0<\gamma_k < \sqrt{q -\Gamma_{k-1}^2}$. If \eqref{eq:AT} is satisfied, then $\lim_{k\to\infty} \Gamma_k^2 =q$ and, consequently, $\lim_{k\to\infty} \gamma_k =0$. 
\end{enumerate}
\end{lemma}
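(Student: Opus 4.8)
The plan is to establish the three items in order; the real content lies in the analytic properties of $\psi$ in item 1), after which items 2) and 3) reduce to elementary monotone-sequence arguments.

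\emph{Item 1).} I would first exhibit $\psi$ as a Gaussian two-point function: setting $W_1=\beta\sqrt t\,Z+\beta\sqrt{q-t}\,Z'$ and $W_2=\beta\sqrt t\,Z+\beta\sqrt{q-t}\,Z''$, the pair $(W_1,W_2)$ is a centred Gaussian vector with $\mathrm{Var}(W_1)=\mathrm{Var}(W_2)=\beta^2 q$ and $\mathrm{Cov}(W_1,W_2)=\beta^2 t$, so that $\psi(t)=E\big[\tanh(h+W_1)\tanh(h+W_2)\big]$. Expanding $\xi\mapsto\tanh(h+\beta\sqrt q\,\xi)=\sum_{n\ge0}a_n H_n(\xi)$ in the orthonormal Hermite basis and invoking Mehler's formula with correlation $t/q\in[0,1]$ gives
\[ \psi(t)=\sum_{n\ge0}a_n^2\,(t/q)^n, \]
a power series with nonnegative coefficients, hence convex on $[0,q]$ and strictly increasing there since $a_1=\beta\sqrt q\,E\,\text{sech}^2(h+\beta\sqrt q Z)>0$. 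The boundary values are $\psi(0)=a_0^2=\gamma_1^2$, which is positive because $h>0$ and $\tanh$ is odd and increasing, and $\psi(q)=\sum_n a_n^2=E\tanh^2(h+\beta\sqrt q Z)=q$ by \eqref{eq:defq}; strict monotonicity then gives $\psi(0)<\psi(q)$. Finally, Parseval applied to $\tfrac{d}{d\xi}\tanh(h+\beta\sqrt q\,\xi)=\beta\sqrt q\,\text{sech}^2(h+\beta\sqrt q\,\xi)$ yields $\sum_{n\ge1}n a_n^2=\beta^2 q\,E\,\text{sech}^4(h+\beta\sqrt q Z)$, hence $\psi'(q)=q^{-1}\sum_n n a_n^2=\beta^2 E\,\text{sech}^4(h+\beta\sqrt q Z)$; in other words the AT condition \eqref{eq:AT} is precisely $\psi'(q)<1$. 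Since $\psi$ is convex, $\psi'$ is nondecreasing, so under \eqref{eq:AT} the map $t\mapsto\psi(t)-t$ has negative derivative throughout $[0,q]$, and together with $\psi(q)-q=0$ this gives $\psi(t)>t$ for all $t\in[0,q)$; in particular $q$ is the unique fixed point of $\psi$ in $[0,q]$.

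\emph{Item 2).} Here $\alpha_1=\sqrt q\,\gamma_1>0$, and $\alpha_1<q$ since $\gamma_1^2=(E\tanh(h+\beta\sqrt q Z))^2<E\tanh^2(h+\beta\sqrt q Z)=q$ by Cauchy--Schwarz (the random variable $\tanh(h+\beta\sqrt q Z)$ being non-constant). One then checks $\alpha_1\le\alpha_2=\psi(\alpha_1)$ --- immediate under \eqref{eq:AT} from $\psi(t)>t$ on $[0,q)$, and in general by a direct computation as in \cite{Bolt1}. As $\psi$ is increasing, induction yields $\alpha_{k+1}=\psi(\alpha_k)\ge\psi(\alpha_{k-1})=\alpha_k$ and $\alpha_{k+1}=\psi(\alpha_k)<\psi(q)=q$, so $(\alpha_k)_k$ is increasing and bounded above by $q$; its limit is a fixed point of $\psi$ in $[0,q]$ by continuity, hence equals $q$ whenever \eqref{eq:AT} holds.

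\emph{Item 3).} I would induct on $k\ge2$ with hypothesis $\Gamma_{k-1}^2<\alpha_k<q$; the base case $k=2$ holds since $\Gamma_1^2=\gamma_1^2=\psi(0)<\psi(\alpha_1)=\alpha_2<\psi(q)=q$. For the inductive step, a short computation gives
\[ \Gamma_k^2-\alpha_k=\Gamma_{k-1}^2+\frac{(\alpha_k-\Gamma_{k-1}^2)^2}{q-\Gamma_{k-1}^2}-\alpha_k=\frac{(\alpha_k-\Gamma_{k-1}^2)(\alpha_k-q)}{q-\Gamma_{k-1}^2}<0, \]
so $\Gamma_{k-1}^2<\Gamma_k^2<\alpha_k\le\alpha_{k+1}=\psi(\alpha_k)<q$ using item 2), and consequently $\gamma_{k+1}=(\alpha_{k+1}-\Gamma_k^2)/\sqrt{q-\Gamma_k^2}\in(0,\sqrt{q-\Gamma_k^2})$ because $\Gamma_k^2<\alpha_{k+1}<q$; this closes the induction. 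The sequence $(\Gamma_k^2)_k$ is increasing and bounded by $q$, so it converges to some $\Gamma_\infty^2\le q$; under \eqref{eq:AT}, $\alpha_k\to q$ by item 2) while $\gamma_k^2\to0$ since $\sum_k\gamma_k^2=\Gamma_\infty^2<\infty$, and plugging $\alpha_k\to q$, $\gamma_k^2\to0$ into $\gamma_k^2=(\alpha_k-\Gamma_{k-1}^2)^2/(q-\Gamma_{k-1}^2)$ forces $\Gamma_\infty^2=q$; then $\gamma_k\to0$ is immediate.

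\emph{Main obstacle.} The only step that is not routine is item 1), namely extracting the Mehler power-series representation of $\psi$ and, through it, the identity $\psi'(q)=\beta^2 E\,\text{sech}^4(h+\beta\sqrt q Z)$, which is what couples the fixed-point and monotonicity behaviour of the recursion to the AT condition \eqref{eq:AT}. The auxiliary inequality $\alpha_1\le\alpha_2$ without assuming \eqref{eq:AT} also requires a bit of care, but it is automatic in the region \eqref{eq:AT} used throughout the rest of the paper.
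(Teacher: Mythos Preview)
The paper does not give its own proof of this lemma; it is quoted verbatim from Bolthausen's papers \cite{Bolt1,Bolt2} without argument. Your proof is correct. The Hermite--Mehler representation $\psi(t)=\sum_{n\ge0}a_n^2(t/q)^n$ is an elegant device that delivers convexity, strict monotonicity, and the key identity $\psi'(q)=\beta^2 E\,\mathrm{sech}^4(h+\beta\sqrt q\,Z)$ in one stroke; an essentially equivalent route, closer to what Bolthausen does, is the Gaussian interpolation identity $\partial_\rho E[f(\xi_1)f(\xi_2)]=E[f'(\xi_1)f'(\xi_2)]$ for a standard bivariate Gaussian with correlation $\rho$, which gives $\psi'(t)=\beta^2 E\big[\mathrm{sech}^2(h+W_1)\,\mathrm{sech}^2(h+W_2)\big]>0$ and (iterating) $\psi''\ge0$ directly, together with $\psi'(q)=\beta^2 E\,\mathrm{sech}^4(h+\beta\sqrt q\,Z)$ since $W_1=W_2$ at $t=q$. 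Your treatment of items 2) and 3) via monotone iteration and the algebraic identity $\Gamma_k^2-\alpha_k=(\alpha_k-\Gamma_{k-1}^2)(\alpha_k-q)/(q-\Gamma_{k-1}^2)$ is clean and matches the standard argument. The one loose end you flag---that $\alpha_1\le\alpha_2$ holds even without \eqref{eq:AT}---does require a separate short computation in \cite{Bolt1}, but it is irrelevant for the remainder of the present paper, which operates entirely under \eqref{eq:AT}.
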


Next, we recall Bolthausen's decomposition of the interaction matrix $\tbf G$, which yields a convenient representation of the iterative TAP solution. Without loss of generality, we may assume that the interaction matrix $\tbf G \in \bR^{N\times N}$ in \eqref{eq:HN} is equal to
		\[ \tbf G  = \frac{ \tbf W + \tbf W^T }{\sqrt 2} = \bar {\tbf W}  \]
for a random matrix $\tbf W = (w_{ij})_{1\leq i,j\leq N} \in \bR^{N\times N} $ with zero diagonal and i.i.d. Gaussian entries $w_{ij}\sim \cN(0,N^{-1})$ (without symmetry constraint). Here and in the following, we abbreviate $ \bar {\tbf X} = (\tbf X + \tbf X^T)/\sqrt{2}$ for $\tbf X \in \bR^{N\times N}$. Then, we set
		\[ \tbf{W}^{(1)} = \tbf{W}, \hspace{0.5cm} \tbf{G}^{(1)} =  \bar{ \tbf W}^{(1)} , \hspace{0.5cm} \phi^{(1)} = N^{-1/2}\, \tbf{1}  \in \mathbb{R}^N, \hspace{0.5cm} \tbf{m}^{(1)}=\sqrt{q}\tbf{1}\in\mathbb{R}^N \]
and, assuming $\tbf{W}^{(s)}, \tbf{G}^{(s)} = \bar{ \tbf W}^{(s)}, \phi^{(s)}, \tbf{m}^{(s)}$ are defined for $1\leq s\leq k$, we set\footnote{Notice that here we use the convention that $ (\phi^{(s)})_{s=1}^k$ forms an orthonormal sequence in $(\bR^N, (\cdot,\cdot) )$. In contrast to that, in \cite{Bolt2, BY} the inner product $\langle\cdot,\cdot\rangle $ and the tensor product $\otimes$ are rescaled by a factor $N^{-1}$.}
		\[ \zeta^{(s)} =  \tbf{G}^{(s)}  N^{1/2} \phi^{(s)}.\]
Moreover, $\cG_k$ denotes the $\sigma$-algebra
		\[ \cG_k = \sigma\big( \tbf{W}^{(s)} N^{1/2} \phi^{(s)},(\tbf{W}^{(s)})^T  N^{1/2} \phi^{(s)}: 1\leq s\leq k   \big). \]
Conditional expectations and respectively probabilities with respect to $\cG_k$ are denoted by $\bE_k$ and $\bP_k$. We then define the iterative cavity field $\tbf{z}^{(k+1)} \in \bR^N$ and the iterative TAP solution $\tbf{m}^{(k+1)}\in \bR^N$ at step $k+1$ by
		\begin{equation}\begin{split}\label{eq:AMP}
		\tbf{z}^{(k+1)} &=   \sum_{s=1}^{k-1}\gamma_s\zeta^{(s)} + \sqrt{q-\Gamma_{k-1}^2}\zeta^{(k)},\hspace{0.5cm} \tbf{m}^{(k+1)} =\tanh \big(h\tbf{1}+\beta\, \tbf{z}^{(k+1)}\big)
		 \end{split}\end{equation}
and we also set
		\[\phi^{(k+1)}  = \frac{\tbf{m}^{(k+1)} - \sum_{s=1}^{k}   ( \tbf{m}^{(k+1)}, \phi^{(s)}) \phi^{(s)}}{\big\| \tbf{m}^{(k+1)} - \sum_{s=1}^{k}   ( \tbf{m}^{(k+1)}, \phi^{(s)} )\phi^{(s)} \big\|},\]
recalling that $\phi^{(k+1)}$ is well-defined for all $k<N$ \cite[Lemma 5]{Bolt2}. Finally, we define
		\[\begin{split}
		\tbf{W}^{(k+1)} =&\, \tbf{W}^{(k)}- \rho^{(k)}, \hspace{0.5cm}\text{for}  \\
		 \rho^{(k)} =&\,    \tbf{W}^{(k)}\phi^{(k)}\otimes \phi^{(k)} +  \phi^{(k)}\otimes (\tbf{W}^{(k)})^T\phi^{(k)} - ( \tbf{W}^{(k)}\phi^{(s)} , \phi^{(k)})\, \phi^{(k)}\otimes \phi^{(k)},
		\end{split}\]
where $ (\tbf x \otimes \tbf y) (\tbf z) = (\tbf y, \tbf z) \,\tbf x$, and by symmetrization
		\[\begin{split}
		 \tbf{G}^{(k+1)} &= \bar{\tbf{W}}^{(k+1)} =  \tbf{G}^{(k)}- \bar{ \rho}^{(k)}, \hspace{0.5cm}\text{for}\\
		\bar{ \rho}^{(k)}&=   N^{-1/2}\zeta^{(k)}\otimes \phi^{(k)} +  N^{-1/2}\phi^{(k)}\otimes \zeta^{(k)} - N^{-1/2} (\zeta^{(k)}, \phi^{(k)} ) \,\phi^{(k)}\otimes \phi^{(k)}.
		\end{split}\]
$(\phi^{(s)})_{s=1}^k$ is orthonormal in $(\mathbb{R}^N, (\cdot,\cdot))$ and $\tbf{P}^{(k)}$, $\tbf{Q}^{(k)}$ denote the orthogonal projections
		\[ \tbf{P}^{(k)} = \sum_{s=1}^{k} \phi^{(s)}\otimes \phi^{(s)} = (P^{(k)}_{ij})_{1\leq i,j\leq N}, \hspace{0.5cm} \tbf{Q}^{(k)}= \tbf{1}_{\bR^N}-\tbf{P}^{(k)}= (Q^{(k)}_{ij})_{1\leq i,j\leq N}.  \]
Notice that $ \tbf{P}^{(k)} $ equals the orthogonal projection onto 
		$$ \text{span} \,\big( \tbf m^{(s)}: 1\leq s\leq k \big) = \text{span} \,\big( \phi^{(s)}: 1\leq s\leq k \big).$$
In the next result we collect several useful facts about $ (\tbf z^{(s)})_{s=1}^k, (\tbf m^{(s)})_{s=1}^k $ and $ (\tbf G^{(s)})_{s=1}^k $. We say for $ (X_N)_{N\geq 1}, (Y_N)_{N\geq 1}$ that may depend on parameters like $\beta, h,$ etc. that 
		$$ X_N\simeq Y_N$$
if and only if there exist positive constants $c, C >0$, which may depend on the parameters, but which are independent of $N $, such that for every $t >0$ we have 
	$$ \mathbb{P} ( |X_N-Y_N| > t ) \leq C e^{-cNt^2}.$$
\begin{proposition}{(\cite[Prop. 2.5]{Bolt1}, \cite[Prop. 4, Prop. 6, Lemmas 3, 11, 14 \& 16]{Bolt2})} \label{prop:Bolt}
\begin{enumerate}[1)]
\item $\tbf{m}^{(k)}$ and $\phi^{(k)}$ are $\cG_{k-1}$-measurable for every $k\geq 1$ and 
		\[ \emph{\tbf{W}}^{(k)}\phi^{(s)} =  (\emph{\tbf{W}}^{(k)})^{T}\phi^{(s)}= \emph{\tbf{G}}^{(k)}\phi^{(s)}  =0, \,\,\forall \,\,s<k.  \]
\item Conditionally on $\cG_{k-2}$, $ \emph{\tbf{W}}^{(k)}$ and $ \emph{\tbf{W}}^{(k-1)}$ are Gaussian, and we have that
		\[\bE_{k-2} \, w_{ij}^{(k)}w_{st}^{(k)} = \frac1N  Q^{(k-1)}_{is} Q^{(k-1)}_{jt} \]
and, consequently, that
		\[\bE_{k-2} \, g_{ij}^{(k)}g_{st}^{(k)} = \frac1N  Q^{(k-1)}_{is} Q^{(k-1)}_{jt} + \frac1N  Q^{(k-1)}_{it} Q^{(k-1)}_{js}. \]
\item Conditionally on $\cG_{k-2}$, $\emph{\tbf{W}}^{(k)}$ is independent of $\cG_{k-1}$. In particular, conditionally on $\cG_{k-1}$, $\emph{\tbf{W}}^{(k)}$ and $\emph{\tbf G}^{(k)}$ are Gaussian with the same covariance as in 2).
\item For every $k\geq 1$, $ N^{-1/2}(\zeta^{(k)}, \phi^{(k)}) $ is unconditionally Gaussian with variance $2/N$.
\item Conditionally on $\cG_{k-1}$, the random variables $\zeta^{(k)}$ are Gaussian with 
		\[ \bE_{k-1} \zeta^{(k)}_i\zeta^{(k)}_j = Q^{(k-1)}_{ij} +   \phi^{(k)}_i\phi^{(k)}_j.\]
\item For every $k\geq 1$ and $  s < k$, one has
		\[\begin{split}
		 N^{-1/2} ( \tbf{m}^{(k)}, \phi^{(s)})&\simeq \gamma_s,\hspace{0.2cm}N^{-1/2} ( \tbf{m}^{(k)}, \phi^{(k)}) \simeq \sqrt{q-\Gamma_{k-1}^2},\\
		N^{-1} (  \tbf{m}^{(k)}, \tbf{m}^{(s)}) &\simeq \alpha_s, \hspace{0.2cm} N^{-1} (  \tbf{m}^{(k)},\tbf{m}^{(k)} ) \simeq q.
		 \end{split} \]
In particular, $ \lim_{N\to\infty} N^{-1} \bE\, \| \tbf m^{(k)} \|^2 =   q $ and, assuming \eqref{eq:AT}, we have that
		\[ \begin{split}
		\limsup_{j,k\to \infty }  \lim_{N\to\infty}  N^{-1} \bE\, \| \tbf m^{(j)}- \tbf m^{(k)} \|^2=0.
		\end{split} \]
\item For every $k\geq 1$, we have that
		\[ \begin{split}
		\big\| N^{-1/2} \big(\zeta^{(k)}, \tbf m^{(k+1)} \big) - \beta (1-q) \sqrt{q- \Gamma_{k-1}^2}  \big\|_2 = 0,  
		\end{split}\]
and, for $1\leq s\leq k-1$, we have that
		\[ \begin{split}
		\big\| N^{-1/2} \big(\zeta^{(s)}, \tbf m^{(k+1)} \big) - \beta (1-q) \gamma_s    \big\|_2 = 0. 
		\end{split}\]
\end{enumerate}
\end{proposition}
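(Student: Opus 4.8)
The statement collects structural and concentration properties of Bolthausen's AMP recursion established in \cite{Bolt1,Bolt2} (see also \cite{BY}); the plan is to recall how each item follows by induction on $k$, the engine throughout being the Gaussian conditioning lemma for the i.i.d.\ matrix $\tbf W$. Item 1) is essentially by construction: $\phi^{(k+1)}$ is a deterministic function of $\tbf m^{(k+1)},\phi^{(1)},\ldots,\phi^{(k)}$, while $\tbf m^{(k+1)} = \tanh(h\tbf 1 + \beta\tbf z^{(k+1)})$ with $\tbf z^{(k+1)}$ a fixed linear combination of $\zeta^{(1)},\ldots,\zeta^{(k)}$, each of which is $\cG_k$-measurable by definition of $\cG_k$; an induction then yields $\cG_{k-1}$-measurability of $\tbf m^{(k)},\phi^{(k)}$. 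The annihilation relations $\tbf W^{(k)}\phi^{(s)} = (\tbf W^{(k)})^T\phi^{(s)} = 0$ for $s<k$ follow from the defining subtraction $\tbf W^{(k)} = \tbf W^{(k-1)} - \rho^{(k-1)}$: a direct computation shows that $\rho^{(k-1)}$ is the unique low-rank correction supported on $\phi^{(k-1)}$ for which $\tbf W^{(k)}\phi^{(k-1)} = (\tbf W^{(k)})^T\phi^{(k-1)} = 0$, and it preserves the earlier relations since $(\phi^{(s)},\phi^{(k-1)}) = 0$ for $s<k-1$; symmetrizing gives the claim for $\tbf G^{(k)} = \bar{\tbf W}^{(k)}$.

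For items 2)--5) the key input is the conditioning lemma: for $\tbf W$ with i.i.d.\ $\cN(0,N^{-1})$ entries, conditionally on the linear data $\tbf W\phi^{(s)},\tbf W^T\phi^{(s)}$, $1\le s\le k-1$ --- which generate $\cG_{k-1}$ --- the law of $\tbf W$ is again Gaussian, with mean the orthogonal projection (in matrix space) onto the span of the revealed functionals and conditionally independent residual whose entrywise covariance is conjugated by $\tbf Q^{(k-1)} = \text{id} - \tbf P^{(k-1)}$. Unwinding the recursion, $\tbf W^{(k)}$ is exactly this residual once $\cG_{k-2}$ has revealed $\phi^{(1)},\ldots,\phi^{(k-1)}$ but before $\tbf W^{(k-1)}\phi^{(k-1)}$ is revealed, which gives $\bE_{k-2}w^{(k)}_{ij}w^{(k)}_{st} = N^{-1}Q^{(k-1)}_{is}Q^{(k-1)}_{jt}$ and, by symmetrization, the covariance of $\tbf G^{(k)}$ in item 2). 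Item 3) is the observation that $\tbf W^{(k)}$ lies in the subspace annihilating $\phi^{(1)},\ldots,\phi^{(k-1)}$ and is therefore independent of the extra data $\tbf W^{(k-1)}\phi^{(k-1)},(\tbf W^{(k-1)})^T\phi^{(k-1)}$ which, together with $\cG_{k-2}$, generate $\cG_{k-1}$. Items 4)--5) are then obtained by contraction: since $\zeta^{(k)} = \tbf G^{(k)}N^{1/2}\phi^{(k)}$ and $\tbf Q^{(k-1)}\phi^{(k)} = \phi^{(k)}$, one finds $\bE_{k-1}\zeta^{(k)}_i\zeta^{(k)}_j = Q^{(k-1)}_{ij} + \phi^{(k)}_i\phi^{(k)}_j$, and contracting once more with $\phi^{(k)}$ shows that $N^{-1/2}(\zeta^{(k)},\phi^{(k)})$ is centered Gaussian with variance $2/N$, which is constant in $\cG_{k-1}$ and hence unconditional.

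Items 6)--7) combine this conditional Gaussianity with Gaussian concentration of measure and the recursions for $(\alpha_k),(\gamma_k),(\Gamma_k)$. Conditionally on $\cG_{k-1}$, $\zeta^{(k)}$ is Gaussian by item 5), so for any $\cG_{k-1}$-measurable unit vector $\tbf v$ the scalar $N^{-1/2}(\zeta^{(k)},\tbf v)$ is Gaussian with variance $O(1/N)$, which produces the sub-Gaussian tail $\bP(|X_N - Y_N|>t)\le Ce^{-cNt^2}$ underlying $\simeq$. Running $\tbf v$ over the $\phi^{(s)}$ and the (normalized) $\tbf m^{(s)}$ and feeding in the Gaussian identities defining $\psi$ together with $\alpha_k = \psi(\alpha_{k-1})$, $\gamma_k = (\alpha_k - \Gamma^2_{k-1})/\sqrt{q-\Gamma^2_{k-1}}$, an induction on $k$ gives the concentration of $N^{-1/2}(\tbf m^{(k)},\phi^{(s)})$, $N^{-1}(\tbf m^{(k)},\tbf m^{(s)})$, etc.\ around $\gamma_s,\alpha_s$. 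Under \eqref{eq:AT}, Lemma \ref{lm:seqlemma} gives $\alpha_k,\Gamma^2_k\to q$, hence $N^{-1}\bE(\tbf m^{(j)},\tbf m^{(k)})\to q$ for large $j,k$, so $N^{-1}\bE\|\tbf m^{(j)}-\tbf m^{(k)}\|^2 = N^{-1}\bE(\|\tbf m^{(j)}\|^2 + \|\tbf m^{(k)}\|^2) - 2N^{-1}\bE(\tbf m^{(j)},\tbf m^{(k)})\to 0$, which is the Cauchy statement. Item 7) is the Onsager-correction step: $\tbf m^{(k+1)}$ depends on $\zeta^{(k)}$ (resp.\ $\zeta^{(s)}$) linearly through $\tbf z^{(k+1)}$ with coefficient $\sqrt{q-\Gamma^2_{k-1}}$ (resp.\ $\gamma_s$), so Stein's identity applied conditionally produces the factor $\beta\sqrt{q-\Gamma^2_{k-1}}\cdot N^{-1}\sum_i\text{sech}^2(h+\beta z^{(k+1)}_i)$, and items 2)--6) together with the fixed-point equation \eqref{eq:defq}, which gives $E\,\text{sech}^2(h+\beta\sqrt q Z) = 1-q$, identify the limit $\beta(1-q)\sqrt{q-\Gamma^2_{k-1}}$ with vanishing $L^2$ fluctuation (and similarly for the $\gamma_s$ case).

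The main obstacle is the bookkeeping of the nested conditioning along $(\cG_k)_k$: one must check that the information gained from $\cG_{k-1}$ to $\cG_k$ is exactly $\tbf W^{(k)}\phi^{(k)}$ and its transpose, that the residual $\tbf W^{(k+1)}$ is genuinely independent of everything revealed so far --- which is precisely where the orthonormalization of the $\phi^{(s)}$ and the rank structure of $\rho^{(k)}$ enter --- and that the Gaussian-concentration estimates can be iterated over $k$ steps with constants controlled well enough to let $k\to\infty$ after $N\to\infty$ in item 6). That last passage to the limit is the subtler analytic point: it relies on Lemma \ref{lm:seqlemma}, namely that under \eqref{eq:AT} the map $\psi$ has $q$ as its unique attracting fixed point in $[0,q]$, so that $(\tbf m^{(k)})_k$ is Cauchy in the relevant $L^2$ sense. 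All of this is carried out in \cite{Bolt1,Bolt2}, which we import.
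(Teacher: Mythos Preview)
The paper does not provide its own proof of this proposition: it is stated as a collection of results imported verbatim from \cite{Bolt1,Bolt2} (as indicated in the proposition header), with no accompanying proof environment. Your proposal goes beyond what the paper does by sketching the actual arguments --- the Gaussian conditioning lemma for $\tbf W$, the induction on $k$, the role of the orthonormalization of the $\phi^{(s)}$, Stein's identity for the Onsager term --- and this sketch is faithful to the approach in \cite{Bolt1,Bolt2}. So your proposal and the paper agree in the only way possible here: both defer to Bolthausen, and your outline of his method is accurate.
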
  

In addition to the properties listed in Prop.\ \ref{prop:Bolt} it is well-known that the sequences $(\tbf m^{(s)})_{s=1}^k$ and $(\tbf z^{(s)})_{s=2}^{k+1}$ have an explicit joint limiting law, which we recall in the next proposition. In its statement, convergence of a sequence $ (\mu_n)_{n\in\bN}$ of probability measures on $\bR^d$, $\mu_n\in \cP(\bR^d)$ for each $n\in\bN$, to a limiting measure $\mu \in \cP(\bR^d)$ in $\cW_2(\bR^d) $ means that 
		\[ \lim_{n\to\infty} \cW_2(\mu_n,\mu) =0, \hspace{0.5cm} \text{ where } \hspace{0.5cm} \cW_2(\mu,\nu ) = \inf_{\Pi} \sqrt{ E \|\tbf X-\tbf Y\|^2}\]
denotes the usual Wasserstein $2$-distance between two probability measures $ \mu, \nu \in\cP(\bR^d)$ (the infimum is taken over all couplings $\Pi \in\cP(\bR^{2d})$ of $\mu$ and $\nu$, and $(\tbf X,\tbf Y)$ has joint distribution $(\tbf X,\tbf Y)\sim \Pi$). The next theorem follows from combining \cite{JM13, Bolt1, Bolt2}. 

\begin{thm}\label{thm:law} In the sense of probability (w.r.t. the disorder $\tbf G$), we have that 
		\[ \lim_{N\to\infty } \frac 1 N\sum_{i=1}^N \delta_{m_i^{(1)},\ldots,m_i^{(k)},z_i^{(2)},\ldots,z_i^{(k+1)}} =  \cL_{(M_1,\ldots,M_k,Z_2,\ldots,Z_{k+1})}  \]
in $W_2(\bR^{2k}) $, where $ M_1= \sqrt{q}$, $M_s \stackrel{s>1}{=} \tanh( h+ \beta Z_{s})$, $(Z_2,\ldots, Z_{k+1}) \sim \cN(\mathbf{0},\mathbf K_{\leq k})$ is a Gaussian vector in $\bR^k$ and $\cL_{(M_1,\ldots,M_K,Z_2,\ldots,Z_{k+1})} $ is the law of $ (M_1,\ldots,M_K,Z_2,\ldots,Z_{k+1})$. The covariance $\mathbf K_{\leq k}\in \bR^{k\times k}$ equals $K_{st} = \emph{Cov}(Z_{s+1},Z_{t+1}) = E M_s M_t$; in particular 
		\[\begin{split}
		K_{st}   & \stackrel{s,t >1}{=} E \tanh (h+ \beta Z_{s} ) \tanh (h+ \beta Z_{t}) = E M_s M_t =\begin{cases} q &:  s=t, \\ \alpha_{s\wedge t} &: s\neq t. \end{cases}
		\end{split}\]
\end{thm}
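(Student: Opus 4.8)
The plan is to recognize Theorem \ref{thm:law} as the state evolution (master) theorem for Bolthausen's AMP iteration \eqref{eq:AMP}, and to deduce it from the general convergence theory for approximate message passing with GOE data \cite{JM13}, combined with the algebraic identities of \cite{Bolt1, Bolt2}. First I would reduce the claim to the cavity fields alone: for $s\ge 2$ one has $\tbf m^{(s)} = \tanh(h\tbf 1 + \beta \tbf z^{(s)})$ coordinatewise, $\tanh$ is globally Lipschitz, and $\tbf m^{(1)} = \sqrt q\,\tbf 1$ is deterministic. Since $\cW_2$-convergence of empirical measures is preserved under applying a fixed Lipschitz map coordinatewise and under adjoining deterministic coordinates, it suffices to show that
\[ \frac1N\sum_{i=1}^N \delta_{(z_i^{(2)},\ldots,z_i^{(k+1)})} \longrightarrow \cN(\mathbf 0,\mathbf K_{\le k}) \quad \text{in } \cW_2(\bR^k) \]
in probability, with $K_{st}= E M_s M_t$ as in the statement.

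I would establish this by induction on $k$. The base case $k=1$ is the statement $z^{(2)}\sim\cN(0,q)$: indeed \eqref{eq:AMP} gives $\tbf z^{(2)} = \sqrt q\,\zeta^{(1)} = \sqrt q\,\tbf G\tbf 1$, whose coordinates $\sqrt q\sum_{j\ne i} g_{ij}\sim\cN(0,q(N-1)/N)$ have empirical measure concentrating around $\cN(0,q)$, consistently with $E M_1^2 = q$ by \eqref{eq:defq}. For the inductive step, I would condition on $\cG_{k-1}$: by Proposition \ref{prop:Bolt}(1) the vectors $\phi^{(1)},\ldots,\phi^{(k)}$, $\tbf m^{(1)},\ldots,\tbf m^{(k)}$ and $\zeta^{(1)},\ldots,\zeta^{(k-1)}$, hence $\tbf z^{(2)},\ldots,\tbf z^{(k)}$, are $\cG_{k-1}$-measurable, while by Proposition \ref{prop:Bolt}(5) the new randomness $\zeta^{(k)}$ is, conditionally on $\cG_{k-1}$, a centered Gaussian vector with covariance $\tbf Q^{(k-1)} + \phi^{(k)}\otimes\phi^{(k)} = \mathrm{id}_{\bR^N} - \tbf P^{(k-1)} + \phi^{(k)}\otimes\phi^{(k)}$, i.e. the identity perturbed by a matrix of fixed rank $\le k$ and bounded operator norm. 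Since $k$ is kept fixed while $N\to\infty$, this perturbation is negligible for the joint empirical law of $\zeta^{(k)}$ together with the fixed $\cG_{k-1}$-measurable vectors, and the standard Gaussian-conditioning argument (see \cite{JM13, Bolt2}) shows that the coordinates of $\zeta^{(k)}$ behave, for the purposes of $\cW_2$-convergence of joint empirical measures, like i.i.d.\ standard Gaussians independent of $(\tbf z^{(2)},\ldots,\tbf z^{(k)})$.

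Plugging this and the inductive hypothesis into $\tbf z^{(k+1)} = \sum_{s=1}^{k-1}\gamma_s\zeta^{(s)} + \sqrt{q-\Gamma_{k-1}^2}\,\zeta^{(k)}$ then exhibits $(\tbf z^{(2)},\ldots,\tbf z^{(k+1)})$ as an asymptotically jointly Gaussian family, and it remains to identify the limiting covariance. For this I would use the overlap concentration recorded in Proposition \ref{prop:Bolt}(6), namely $N^{-1}(\tbf m^{(k)},\tbf m^{(s)})\simeq\alpha_s$, $N^{-1/2}(\tbf m^{(k)},\phi^{(s)})\simeq\gamma_s$, $N^{-1}\|\tbf m^{(k)}\|^2\simeq q$ and $N^{-1/2}(\tbf m^{(k)},\phi^{(k)})\simeq\sqrt{q-\Gamma_{k-1}^2}$, together with the defining recursions for $\alpha_k,\gamma_k,\Gamma_k$ and the fixed-point relation $\psi(\alpha_{k-1})=\alpha_k$ from Lemma \ref{lm:seqlemma}, to check that the limiting covariance is exactly $\mathbf K_{\le k}$, i.e.\ $K_{ss}=q$, $K_{st}=\alpha_{s\wedge t}$ for $s\ne t$, equivalently $K_{st}=E M_s M_t$ with $M_1=\sqrt q$ and $M_s=\tanh(h+\beta Z_s)$. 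This closes the induction and, after pushing forward by the Lipschitz maps of the first paragraph, yields the claim.

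I expect the main obstacle to be precisely the fixed-rank conditioning step: one must show, with enough uniformity to conclude genuine $\cW_2$-convergence of the \emph{joint} empirical measure and not merely convergence of finitely many test-function averages, that conditioning on $\cG_{k-1}$ distorts $\zeta^{(k)}$ away from a clean i.i.d.\ Gaussian vector only by an amount vanishing as $N\to\infty$ for each fixed $k$. This is the technical core of every AMP state-evolution proof; it requires the Gaussian conditioning lemma together with concentration of the $X_N\simeq Y_N$ type for all the relevant bilinear and quadratic forms in the $\phi^{(s)}$, $\tbf m^{(s)}$ and $\zeta^{(s)}$, essentially all of which are already available from Proposition \ref{prop:Bolt}, so that the residual work is bookkeeping. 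For this reason I would in fact present the argument primarily as follows: verify that \eqref{eq:AMP} is an instance of the AMP iterations covered by \cite{JM13}, read off that the induced state evolution recursion is the one generating the sequences $(\alpha_k,\gamma_k,\Gamma_k)$ of Section \ref{sec:AMP}, and invoke the general theorem to obtain the $\cW_2$ limit with covariance $\mathbf K_{\le k}$, using \cite{Bolt1, Bolt2} to reconcile conventions; the conditioning sketch above is then the self-contained version of the same statement.
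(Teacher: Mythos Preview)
Your proposal is correct and, in its final paragraph, lands on essentially the paper's own argument: the theorem is not given a self-contained proof in the paper but is deduced in Remark~4.3 from the state evolution theorem of \cite{JM13} for the \emph{standard} AMP recursion $\wt{\tbf z}^{(k+1)}=\tbf G\wt{\tbf m}^{(k)}-\beta(1-q)\wt{\tbf m}^{(k-1)}$, after checking that Bolthausen's decomposed iteration \eqref{eq:AMP} is asymptotically the same, i.e.\ $N^{-1}\bE\|\tbf z^{(k)}-\wt{\tbf z}^{(k)}\|^2\to 0$ and $N^{-1}\bE\|\tbf m^{(k)}-\wt{\tbf m}^{(k)}\|^2\to 0$ for each $k$, by induction using Proposition~\ref{prop:Bolt} and the identity $\tbf G=\tbf G^{(k)}+\sum_{s=1}^{k-1}\bar\rho^{(s)}$. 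The one point you leave implicit, and which is the actual content of the paper's ``reconcile conventions'' step, is that \eqref{eq:AMP} is \emph{not} literally an instance of the \cite{JM13} iteration (it uses $\tbf G^{(s)}$ and the orthogonalized $\phi^{(s)}$ rather than $\tbf G$ and an Onsager correction), so one really must pass through the auxiliary sequence $(\wt{\tbf z},\wt{\tbf m})$ and prove this $\ell^2$ closeness; your direct conditioning sketch (route 1) would also work and is closer in spirit to \cite{Bolt2}, but the paper opts for the shorter reduction.
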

\noindent \textbf{Remark 4.3.} \emph{In \cite{JM13, Bolt1}, the iterative TAP sequence is defined differently compared to \eqref{eq:AMP}. Namely, in the language of \cite{JM13, Bolt1} one sets $ \wt {\tbf{m}}^{(0)}=\tbf 0,  \wt{\tbf{m}}^{(1)} = \sqrt q \tbf 1 $ and for $k\geq 1$
		\[ \wt{\tbf{z}}^{(k+1)} =   \tbf G  \wt{\tbf m}^{(k)}- \beta(1-q)  \wt{\tbf m}^{(k-1)}  ,\hspace{0.5cm} \wt{\tbf{m}}^{(k+1)} =\tanh \big(h\tbf{1}+\beta\, \wt{\tbf{z}}^{(k+1)}\big). \]
The validity of Theorem \ref{thm:law} then follows from 
		\[ \lim_{N\to\infty} N^{-1} \bE\, \big\| \wt {\tbf z}^{(k)}- \tbf z^{(k)}\big \|^2 =0, \hspace{0.5cm}\lim_{N\to\infty} N^{-1} \bE\, \big\| \wt {\tbf m}^{(k)}-\tbf m^{(k)}\big \|^2=0  \]
for every $k\geq 2$, which can be proved inductively (note that $ \wt {\tbf m}^{(1)} = \tbf m^{(1)}, \wt {\tbf m}^{(2)} = \tbf m^{(2)} $), based on Prop.\ \ref{prop:Bolt} and the decomposition $ \tbf G = \tbf G^{(k)} + \sum_{s=1}^{k-1} \bar{\rho}^{(s)} $, and which implies that
		\[\lim_{N\to\infty } \cW_2 \bigg( \frac 1 N\sum_{i=1}^N \delta_{m_i^{(1)},\ldots,m_i^{(k)},z_i^{(2)},\ldots,z_i^{(k+1)}},  \frac 1 N\sum_{i=1}^N \delta_{\wt{m}_i^{(1)},\ldots,\wt{m}_i^{(k)},\wt{z}_i^{(2)},\ldots,\wt{z}_i^{(k+1)}} \bigg)=0 \]
in probability.}		
\vspace{0.2cm}

The properties of Bolthausen's iterative solution $ (\tbf m^{(k)})_{k \geq 1}$ suggest that it is close to the magnetization vector $ \tbf m$ of the SK model, at least in a suitable subregion of the RS phase. This is proved in \cite{CT21}, based on a locally uniform overlap concentration assumption. 

\begin{thm}[{\cite{CT21}}]  \label{thm:CT}
Assume that $\beta,h>0$ are such that for some $\delta>0$ we have
	\begin{equation}\label{eq:CTcond} \lim_{N \to \infty} \sup_{\beta-\delta\leq \beta'\leq \beta} \E \Ex{| R_{12} - q_{\beta',h}|^2}_{\beta',h}  = 0\,. \end{equation}
Then we have that 
	\be \label{eq:CTconc}\lim_{k \to \infty} \lim_{N \to \infty}  N^{-1} \,\bE \,  \big\| \tbf m- \tbf m^{(k)} \big\|^2  = 0.\ee
\end{thm}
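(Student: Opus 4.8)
The plan is to reduce \eqref{eq:CTconc} to identifying the limiting overlap $\lim_{N\to\infty} N^{-1}\bE(\tbf m,\tbf m^{(k)})$ with Bolthausen's scalar sequence $\alpha_k$, and then to invoke $\alpha_k\to q$ from Lemma \ref{lm:seqlemma}. Expanding the square,
\[ N^{-1}\bE\big\|\tbf m-\tbf m^{(k)}\big\|^2 = N^{-1}\bE\|\tbf m\|^2 - 2\,N^{-1}\bE\big(\tbf m,\tbf m^{(k)}\big) + N^{-1}\bE\big\|\tbf m^{(k)}\big\|^2 . \]
By Proposition \ref{prop:Bolt}, part 6), the last term tends to $q$ for every fixed $k$, and the first term also tends to $q$ since $N^{-1}\|\tbf m\|^2=\langle R_{1,2}\rangle$ and the hypothesis \eqref{eq:CTcond} at $\beta'=\beta$ gives $\bE\langle (R_{1,2}-q)^2\rangle\to0$, hence (by Jensen) $N^{-1}\bE\|\tbf m\|^2\to q$. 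So everything reduces to showing that $a_k:=\lim_{N\to\infty}N^{-1}\bE(\tbf m,\tbf m^{(k)})$ exists and satisfies $a_k\to q$; moreover, by Cauchy--Schwarz in the Euclidean inner product and in $\bE$, one already has $\limsup_N N^{-1}\bE(\tbf m,\tbf m^{(k)})\le q$, so only a matching lower bound is at stake.

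The heart of the argument is to prove that $a_k=\alpha_k$ for all $k\ge1$, equivalently that $a_1=\sqrt q\,E\tanh(h+\beta\sqrt q Z)=\alpha_1\in(0,q)$ and $a_{k+1}=\psi(a_k)$. I would obtain this by running Bolthausen's state evolution with the true magnetization $\tbf m$ carried along as an additional, fixed vector. Two ingredients enter. First, Proposition \ref{prop:Bolt}, parts 3) and 5), which make the iterative cavity field $\tbf z^{(k+1)}$ conditionally Gaussian given $\cG_k$ with the covariances dictated by $(\gamma_s)$, $(\Gamma_s)$; together with Theorem \ref{thm:law}, this pins down the empirical law of the AMP iterates. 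Second, the cavity representation of the true magnetization,
\[ m_i = \tanh\!\big(h + \beta(\tbf G\tbf m)_i - \beta^2(1-q)m_i\big) + \mathrm{err}_i, \qquad N^{-1}\sum_{i=1}^{N}\bE\,\mathrm{err}_i^2 \to 0 \ \text{as}\ N\to\infty, \]
where the bound on $\mathrm{err}_i$ is precisely where the exponential overlap concentration \eqref{eq:CTcond} is used: $\mathrm{err}_i$ quantifies the extent to which the cavity spins $(\sigma_j)_{j\ne i}$ fail to be asymptotically independent under the reweighted measure $\langle\cdot\rangle^{(i)}$, and this is governed by the concentration of $R_{1,2}$. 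Given these, a Guerra--Talagrand type Gaussian interpolation between $\big(\tbf G\tbf m,\tbf z^{(k+1)}\big)$ and the target joint Gaussian --- whose interpolation derivative is bounded by the above error together with the usual AMP Onsager cancellations --- identifies the limiting joint law of $\big(m_i, z_i^{(k+1)}\big)$ as that of $\big(\tanh(h+\beta Z_\infty),\,Z_{k+1}\big)$, where $(Z_\infty,Z_{k+1})$ is centred Gaussian, each of variance $q$, with covariance $a_k+o(1)$. Feeding this into $N^{-1}\bE(\tbf m,\tbf m^{(k+1)})=N^{-1}\sum_i\bE\,m_i\tanh(h+\beta z_i^{(k+1)})$ and recognising the resulting bivariate Gaussian integral as $\psi$ evaluated at that covariance gives $a_{k+1}=\psi(a_k)$ in the limit; the base case $a_1=\alpha_1$ is immediate from $\tbf m^{(1)}=\sqrt q\,\tbf 1$ and the convergence of the one-point function $N^{-1}\sum_i\bE\,m_i\to E\tanh(h+\beta\sqrt q Z)$.

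To close, note that \eqref{eq:CTcond} places $(\beta,h)$ in the replica symmetric phase, so in particular the AT condition \eqref{eq:AT} holds and Lemma \ref{lm:seqlemma} applies: $\psi$ is increasing and convex on $[0,q]$ with $\psi(0)>0$ and with $q$ as its unique fixed point there, so $\psi(t)>t$ for every $t\in[0,q)$. Iterating $\psi$ from $a_1=\alpha_1\in(0,q)$ thus produces the increasing sequence $a_k=\alpha_k$ with $\alpha_k\to q$, whence $N^{-1}\bE\|\tbf m-\tbf m^{(k)}\|^2\to 2(q-\alpha_k)\to0$. (If the identity $a_{k+1}=\psi(a_k)$ is obtained only up to a vanishing error, or only as an inequality $a_{k+1}\ge\psi(a_k)-\eta_k$ with $\eta_k\to0$, the same conclusion follows by taking $\liminf_{k\to\infty}$ and using the continuity and fixed-point uniqueness of $\psi$ together with $a_k\le q$.)

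The main obstacle is the recursion step: one must show that the cavity fields of the true Gibbs state are asymptotically Gaussian with exactly Bolthausen's state-evolution covariances, \emph{uniformly in the iteration depth $k$}. This is exactly why \eqref{eq:CTcond} is assumed in the locally uniform form, over $\beta'\in[\beta-\delta,\beta]$: the uniformity in the temperature is what makes the interpolation and differentiation estimates go through while keeping the error terms $o(1)$ as $N\to\infty$ at each fixed $k$. Finally, the order of limits --- $N\to\infty$ first and only then $k\to\infty$ --- is essential throughout, since for fixed $N$ the AMP sequence need not converge as $k\to\infty$.
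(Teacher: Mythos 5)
First, note what the paper itself does with this statement: Theorem \ref{thm:CT} is imported verbatim from \cite{CT21} and is not proved here at all. The only work the paper supplies around it is Remark 4.4, which (i) observes that the hypothesis \eqref{eq:CTcond} follows from \eqref{eq:AT} \& \eqref{eq:AT+} via Theorem \ref{thm:conc}, and (ii) reconciles the TAP iteration of \cite{CT21} with the one defined in \eqref{eq:AMP}. So you have effectively set out to reprove the main theorem of \cite{CT21}, and your skeleton does follow that paper's strategy: expand the square, use $N^{-1}\|\tbf m\|^2=\langle R_{1,2}\rangle\to q$ and Prop.\ \ref{prop:Bolt} 6) for the diagonal terms, reduce everything to $a_k:=\lim_N N^{-1}\bE(\tbf m,\tbf m^{(k)})=\alpha_k$, and conclude via $\alpha_k\to q$. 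That reduction, including the Cauchy--Schwarz upper bound $a_k\leq q$, is correct.

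The gap is that essentially the entire content of the theorem sits in the one step you do not carry out, namely the recursion $a_{k+1}=\psi(a_k)$. You assert two major inputs: (a) an asymptotic TAP/cavity representation $m_i=\tanh(h+\beta(\tbf G\tbf m)_i-\beta^2(1-q)m_i)+\mathrm{err}_i$ with $N^{-1}\sum_i\bE\,\mathrm{err}_i^2\to 0$, and (b) a joint limit theorem identifying the empirical law of $(m_i,z_i^{(k+1)})$ as $(\tanh(h+\beta Z_\infty),Z_{k+1})$ with $(Z_\infty,Z_{k+1})$ Gaussian of variance $q$ and covariance $a_k$. Neither is a routine lemma: (a) is a theorem of comparable depth to Prop.\ \ref{prop:hTAP} of this paper, and (b) --- the statement that the \emph{true} cavity field is asymptotically Gaussian \emph{jointly} with the AMP field and with exactly the state-evolution covariance --- is precisely what \cite{CT21} proves, via a careful conditional moment/interpolation argument in which the uniformity of \eqref{eq:CTcond} over $\beta'\in[\beta-\delta,\beta]$ enters because the comparison system runs at a slightly reduced effective temperature. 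Writing that ``a Guerra--Talagrand type interpolation whose derivative is bounded by the above error together with the usual AMP Onsager cancellations identifies the limiting joint law'' names the expected mechanism but is not an argument: one must exhibit the interpolating Hamiltonian, differentiate it, and verify that the Onsager terms actually cancel against the error in (a). A secondary gap: to apply Lemma \ref{lm:seqlemma} and get $\alpha_k\to q$ you need the strict AT inequality \eqref{eq:AT}, and your claim that \eqref{eq:CTcond} implies \eqref{eq:AT} is itself unproved (it requires a Toninelli-type argument). Inside this paper that is harmless, since \eqref{eq:AT} is assumed wherever the theorem is used, but for the theorem as stated it is a further missing step.
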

\noindent \textbf{Remark 4.4.} \emph{Observe that \eqref{eq:CTcond} is satisfied under \eqref{eq:AT} \& \eqref{eq:AT+}, by Eq.\ \eqref{eq:conc+} in Theorem \ref{thm:conc}. Notice furthermore that \cite{CT21} defines the TAP iteration differently compared to \eqref{eq:AMP}:
		\[  (\mathbf z')^{(k+1)} =\,\tbf G\,(\mathbf m')^{(k)} -\beta \big(1 - N^{-1}\| (\mathbf m')^{(k)}\|^2\big)\,(\mathbf m')^{(k-1)},\hspace{0.3cm}(\mathbf m')^{(k+1)}= \tanh\big(h + \beta (\mathbf z')^{(k+1)} \big).\]
Similar remarks as for Theorem \ref{thm:law} apply: based on Prop.\ \ref{prop:Bolt} and induction, one obtains
		\[ \lim_{N\to\infty} N^{-1} \bE\, \big\| (\tbf z')^{(k)}-  \tbf z^{(k)}\big \|^2 =0, \hspace{0.5cm} \lim_{N\to\infty} N^{-1} \bE\, \big\| (\tbf m')^{(k)}-  \tbf m^{(k)}\big \|^2=0,  \]
which implies \eqref{eq:CTconc} with $\tbf m^{(k)}$ as defined in \eqref{eq:AMP}. }
\vspace{0.2cm} 

Equipped with the above preparations, we now turn to the proof of Prop.\ \ref{prop:op-H}, which is a direct consequence of the next proposition. For a sequence $ (X_N)_{N\in\bN}$, we set
		\[\begin{split} \text{p-}\!\liminf_{N \to \infty} X_N &= \sup \Big\{ t\in\bR\!:  \lim_{N\to\infty} \bP (X_N\leq t) =0\Big\}, \\
		 \text{p-}\!\limsup_{N \to \infty} X_N&= \inf \Big\{ t\in\bR\!:  \lim_{N\to\infty} \bP (X_N \geq t) =0\Big\}. \end{split}\]
The next result follows by translating the main ideas \cite[Section 4]{celentano2022sudakov} to the present context. For completeness, we carry out the key steps, with a few modifications, in detail below.
\begin{proposition} \label{prop:vHv} Assume that $ (\beta,h)$ satisfy \eqref{eq:AT}, and define for $ \tbf v\in (-1,1)^N$ the \linebreak diagonal matrix $ \tbf D_{\tbf v}\in \bR^{N\times N} $ and the set $ S_{N,\eps,k}\subset \bR^{N}\times (-1,1)^N$ by 
		\be\label{eq:defDS} \big( \tbf D_{\tbf v}\big)_{ij} =  \frac{\delta_{ij}}{1-v_i^2}, \hspace{0.2cm} S_{N,\eps,k} = \Big\{(\tbf u, \tbf v)\in \bR^{N}\times (-1,1)^N: \|\tbf u\|^2=1,  \|\mathbf v -\mathbf m^{[k]}\|^2  \leq N\eps\Big\}.\ee
Then, there exists a constant $c = c_{\beta,h}>0$, that is independent of $N$, such that
	\[ \liminf_{\eps \to 0}\;\liminf_{k \to \infty}\;\emph{p-}\liminf_{N \to \infty} \bigg[\inf_{(\tbf u, \tbf v) \in S_{N,\eps,k}} \Big(\tbf u,  \big  ( \tbf D_{\tbf v} +\beta^2(1-q) - \frac{2\beta^2}N \tbf v\otimes \tbf v- \beta \tbf G\big) \tbf u \Big) \bigg] \geq c. \]
\end{proposition}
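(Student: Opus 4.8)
The plan is to bound the quadratic form from below uniformly over the set $S_{N,\eps,k}$ by first reducing to the simpler matrix $\beta^2(1-q) - \beta\tbf G$ plus a controlled perturbation coming from $\tbf D_{\tbf v}$, and then to invoke the variational characterization of the relevant infimum via a Gaussian comparison (Sudakov–Fernique / Gordon type) argument exactly as in \cite{celentano2022sudakov}. Concretely, write $\tbf D_{\tbf v} = \text{id} + \tbf E_{\tbf v}$, where $(\tbf E_{\tbf v})_{ij} = \frac{v_i^2}{1-v_i^2}\delta_{ij}$ is a nonnegative diagonal matrix. Since $\|\tbf v - \tbf m^{[k]}\|^2 \leq N\eps$, and $\tbf m^{[k]}$ has empirical coordinate distribution converging to $\tanh(h+\beta Z_k)$ by Theorem \ref{thm:law}, the diagonal entries of $\tbf E_{\tbf v}$ are close (in an averaged $\ell^2$ sense over $i$) to $\frac{m_i^2}{1-m_i^2}$ with $m_i = \tanh(h+\beta z_i)$. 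The first key step is therefore to show that for $(\tbf u,\tbf v)\in S_{N,\eps,k}$ the contribution $(\tbf u, \tbf E_{\tbf v}\tbf u)$ can be replaced, up to $o_\eps(1) + o_k(1)$ errors uniformly in $N$ with high probability, by a ``fixed'' diagonal profile independent of $\tbf v$; the $\tbf v$-dependence of $\tbf E_{\tbf v}$ is the source of the subtlety flagged in the paper before Prop.\ \ref{prop:op-H}, but the constraint $\|\tbf v - \tbf m^{[k]}\|\leq \sqrt{N\eps}$ together with Lipschitz continuity of $t\mapsto \frac{t^2}{1-t^2}$ on compacts is designed precisely to control it (one must also use a crude a priori bound that minimizers $\tbf v$ stay in $[-1+\delta,1-\delta]^N$, which follows from $\tbf m^{[k]}$ being bounded away from $\pm1$).

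**Next I would** handle the GOE part via Gaussian process comparison. After the reduction, the quantity to bound is essentially
\[ \inf_{\|\tbf u\|=1} \Big[ \beta^2(1-q) + (\tbf u, \tbf E \tbf u) - \tfrac{2\beta^2}{N}(\tbf v, \tbf u)^2 - \beta(\tbf u, \tbf G\tbf u) \Big], \]
and since $\tbf G$ is GOE the map $\tbf u \mapsto (\tbf u, \tbf G\tbf u)$ is a quadratic Gaussian form. The standard trick (cf.\ \cite{Bolt1, celentano2022sudakov, CT21}) is to linearize: one introduces an auxiliary variable and writes the infimum of the quadratic form over the sphere as a min–max over $\tbf u$ and a dual vector, reducing $(\tbf u, \tbf G\tbf u)$ to a bilinear Gaussian form $(\tbf x, \tbf G\tbf y)$ to which Gordon's comparison inequality applies, yielding matching upper and lower bounds in terms of a scalar variational problem involving independent Gaussians. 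The rank-one term $\tfrac{2\beta^2}{N}\tbf v\otimes\tbf v$ is negative semidefinite and only helps the lower bound (it lowers the quadratic form), but one still needs it present because it is exactly the term that makes the effective ``single-site'' problem match the AT condition; tracking it through the comparison is bookkeeping. The scalar problem that emerges should, after optimization, reproduce a quantity whose positivity is equivalent to a strict form of the AT inequality \eqref{eq:AT}; the strictness gives the uniform constant $c_{\beta,h}>0$.

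**The main obstacle**, as the paper itself signals, is decoupling the implicit dependence of $\tbf D_{\tbf v}$ on $\tbf v$ (and ultimately on $\tbf G$ once $\tbf v = \tbf m$ or $\tbf v = \tbf m^{(k)}$) from the Gaussian comparison, since Gordon's inequality requires the Gaussian matrix $\tbf G$ to be independent of the geometry of the optimization domain and of the diagonal profile $\tbf E$. The device for this is the order of limits $\liminf_{\eps\to0}\liminf_{k\to\infty}\text{p-}\liminf_{N\to\infty}$: for fixed $k$, $\tbf m^{[k]} = \tbf m^{(k)}$ is $\cG_{k-1}$-measurable and built from only finitely many ``directions'' $\phi^{(1)},\dots,\phi^{(k)}$ of $\tbf G$ (Prop.\ \ref{prop:Bolt}), so conditionally on $\cG_{k-1}$ the matrix $\tbf G^{(k)}$ is a fresh GOE (up to rank $\leq 2k = o(N)$ corrections) independent of $\tbf D_{\tbf m^{(k)}}$ and of the constraint set; one applies the comparison to $\tbf G^{(k)}$ and absorbs the $\bar\rho^{(s)}$ corrections into $o_N(1)$ operator-norm errors. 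One then enlarges from the single point $\tbf v = \tbf m^{(k)}$ to the ball $\|\tbf v - \tbf m^{(k)}\|\leq\sqrt{N\eps}$ by the Lipschitz argument above, and finally sends $\eps\to0$ after $N\to\infty$ after $k\to\infty$. Verifying that the scalar variational problem genuinely returns a positive constant under \eqref{eq:AT} alone — and that all the $o(1)$ errors are genuinely negligible in this triple-limit order — is the technical heart, and I would follow \cite[Section 4]{celentano2022sudakov} closely there, adapting the single-site functions to the present field $h + \beta\sqrt{q}Z$ rather than the $h=0$ case.
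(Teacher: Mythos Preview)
Your high-level plan---condition on $\cG_k$ so that the residual GOE piece is independent of the constraint set, apply a Gaussian comparison, and reduce to a scalar variational problem whose strict positivity is the AT condition---is the correct architecture and matches the paper. However, two of your implementation steps contain genuine gaps.

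First, your proposed reduction of $\tbf D_{\tbf v}$ via Lipschitz continuity of $t\mapsto t^2/(1-t^2)$ on compacts, combined with an $\ell^\infty$ confinement of minimizers to $[-1+\delta,1-\delta]^N$, does not go through. The constraint $\|\tbf v-\tbf m^{(k)}\|^2\le N\eps$ is an $\ell^2$ ball and gives no $\ell^\infty$ control; moreover $m_i^{(k)}=\tanh(h+\beta z_i^{(k)})$ with $z_i^{(k)}$ asymptotically Gaussian, so a positive fraction of coordinates of $\tbf m^{(k)}$ can be arbitrarily close to $\pm1$. The map $t\mapsto t^2/(1-t^2)$ blows up there, so you cannot replace $\tbf E_{\tbf v}$ by a fixed profile at this stage. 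The paper never attempts this: it keeps $\tbf v$ inside the optimization all the way through the comparison and the passage to empirical measures, and only \emph{after} introducing a Lagrange multiplier $\lambda_u<1$ for the sphere constraint does it exploit that $v\mapsto \big((1-v^2)^{-1}-\lambda_u\big)^{-1}$ is globally Lipschitz on all of $[-1,1]$, which is what lets the $\eps\to0$ limit be taken. This inversion of the singular denominator via the multiplier is the actual mechanism, not a compactness argument.

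Second, the rank-two corrections $\bar\rho^{(s)}$ are not $o_N(1)$ in operator norm: each has norm of order one (since $\|\zeta^{(s)}\|\sim\sqrt N$). The paper keeps them explicitly as the term $2\beta N^{-1/2}\sum_s\zeta^{(s)}\otimes\phi^{(s)}$, which survives the comparison and eventually becomes the linear-in-$\tZ$ contribution in the scalar problem; dropping it would give the wrong limiting functional. Relatedly, your remark that the rank-one term $-\tfrac{2\beta^2}{N}\tbf v\otimes\tbf v$ ``only helps the lower bound'' is backwards: it subtracts a positive semidefinite matrix and therefore \emph{lowers} the infimum, so it must be carried through honestly (as you then acknowledge). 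Finally, the paper uses Sudakov--Fernique (Vitale's form) directly on the quadratic Gaussian form $(\tbf u,\tbf G^{(k+1)}\tbf u)$---no linearization or Gordon min--max is needed, since only a one-sided bound on a supremum is required.
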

\noindent \textbf{Remark 4.5.} \emph{As already remarked in Section \ref{sec:intro}, note that Prop.\ \ref{prop:vHv} only requires $ (\beta,h)$ to satisfy the AT condition \eqref{eq:AT}. The additional assumption \eqref{eq:AT+} in Prop.\ \ref{prop:op-H} is used to approximate the magnetization $\tbf m$ of the SK model by the iterative TAP solution $\tbf m^{(k)}$. }
\vspace{0.2cm} 

Before giving the proof of Prop.\ \ref{prop:vHv}, let us first record that it implies Prop.\ \ref{prop:op-H}. 
\begin{proof} [Proof of Prop.\ \ref{prop:op-H}, given Prop.\ \ref{prop:vHv}]
Observe, first of all, that $ \tbf D =\tbf D_{\tbf m}$, by \eqref{eq:Dij} \& \eqref{eq:defDS}. Now, choosing $\eps>0$ sufficiently small and $k\in \bN$ sufficiently large, Prop.\ \ref{prop:vHv} implies that 
		\[ \inf_{(\tbf u, \tbf v) \in S_{N,\eps,k}} \Big(\tbf u,  \big  ( \tbf D_{\tbf v} +\beta^2(1-q)- \frac{2\beta^2}N \tbf v\otimes \tbf v-  \beta \tbf G\big) \tbf u \Big)   \geq  \widetilde c/2  \]
for some constant $\widetilde c >0$, with probability tending to one as $N\to \infty$. By the assumptions \eqref{eq:AT} \&  \eqref{eq:AT+}, we can apply Theorem \ref{thm:CT} which combined with Markov implies
		\[ \big\| \tbf m- \tbf m^{(k)} \big\|^2  \leq N \eps \]
with probability tending to one as $N\to\infty$. Thus 
		\[\begin{split}
		&\inf_{\|\tbf u\|^2=1} \big(\tbf u,  \big  ( \tbf D  +\beta^2(1-q) -  \beta \tbf G\big) \tbf u \big) \\
		 &\;\;\geq  \inf_{\|\tbf u\|^2=1} \big(\tbf u,  \big  ( \tbf D_{\tbf m}  +\beta^2(1-q) -\frac{2\beta^2}N \tbf m\otimes \tbf m-  \beta \tbf G\big) \tbf u \big)  \\
		&\;\; \geq \inf_{\substack{(\tbf u, \tbf v)\in S_{N,\eps,k}}} \Big(\tbf u,  \big  ( \tbf D_{\tbf v} +\beta^2(1-q) - \frac{2\beta^2}N \tbf v\otimes \tbf v-  \beta \tbf G\big) \tbf u \Big) \geq \widetilde c/2 , \end{split}\]
i.e. $\tbf D  +\beta^2(1-q) -  \beta \tbf G\geq c$ for $c =\widetilde c/2$, with probability tending to one as $N\to \infty$.
\end{proof}
The rest of this section is devoted to the proof of Prop.\ \ref{prop:vHv}.
\begin{proof}[Proof of Prop.\ \ref{prop:vHv}]
\noindent \tbf{Step 1:} We adapt the main idea of \cite{celentano2022sudakov} and apply the Sudakov-Fernique inequality \cite{Sud1, Fer, Sud2}, conditionally on $\cG_{k}$, to reduce the problem to a solvable variational problem. To this end, we first write 
		\[\begin{split}
		\beta\big(\tbf u,    \tbf G \tbf u \big)& =  \beta\big( \tbf u,  \tbf G^{(k+1)} \tbf u \big) + \frac{2\beta}{\sqrt N} \sum_{s=1}^{k} (\tbf u, \zeta^{(s)} \otimes \phi^{(s)}, \tbf u) +o_{\tbf u}(1)
		\end{split}\] 
for an error $o_{\tbf u}(1) = -N^{-1/2}\sum_{s=1}^{k} (\zeta^{(s)}, \phi^{(s}) | (\phi^{(s)}, \tbf u)|^2$ which satisfies $ \sup_{\|u\|=1} o_{\tbf u} (1) \to 0$ as $N\to\infty$ almost surely, by Prop.\ \ref{prop:Bolt} 5). This means that 
		\[\begin{split}
		& \text{p-}\limsup_{N \to \infty} \bigg[\inf_{(\tbf u, \tbf v) \in S_{N,\eps,k}} \Big(\tbf u,  \big  ( \beta \tbf G + \frac{2\beta^2}N \tbf v\otimes \tbf v-\tbf D_{\tbf v} -\beta^2(1-q)\big) \tbf u \Big) \bigg]\\
		&= \!\text{p-}\limsup_{N \to \infty} \!\bigg[\inf_{(\tbf u, \tbf v) \in S_{N,\eps,k}} \!\Big(\tbf u,  \big  ( \beta \tbf G^{(k+1)}\! +\!\frac{2\beta}{\sqrt N} \sum_{s=1}^{k} \! \zeta^{(s)}\! \otimes \phi^{(s)}\!+\! \frac{2\beta^2}N \tbf v\!\otimes\! \tbf v\!-\!\tbf D_{\tbf v}\! -\!\beta^2(1-q)\big) \tbf u \Big) \bigg]. 
		\end{split} \]
Now, comparing the (conditionally on $\cG_k$) Gaussian processes
		\[ \begin{split}
		\big( \tbf X_{\tbf u, \tbf v}\big)_{(\tbf u, \tbf v) \in S_{N,\eps,k}}  & = \Big(  \beta \big (\tbf u,   \tbf G^{(k+1)} \tbf u \big) + \tbf f\, (\tbf u, \tbf v) \Big)_{(\tbf u, \tbf v) \in S_{N,\eps,k}}, \\
		\big( \tbf Y_{\tbf u, \tbf v}\big)_{ (\tbf u, \tbf v) \in S_{N,\eps,k} }  & = \Big(  \frac{2 \beta}{\sqrt N}\|\tbf{Q}^{(k)} \mathbf u\|  (\tbf Q^{(k)} \tbf u , \xi ) + \tbf f\, (\tbf u, \tbf v)\Big)_{(\tbf u, \tbf v) \in S_{N,\eps,k}}, 
		\end{split}\]
where $\xi \sim \cN (0, \text{id}_{\bR^N})$ denotes a Gaussian vector independent of the remaining disorder and where we abbreviate
		\[ \tbf f\,(\tbf u, \tbf v) = \frac{2\beta}{\sqrt N} \sum_{s=1}^{k} (\tbf u, \zeta^{(s)} ) (\phi^{(s)}, \tbf u) +\! \frac{2\beta^2}N ( \tbf u, \tbf v)^2\!- (\tbf u, \tbf D_{\tbf v}\tbf u) -\beta^2(1-q),  \]
we have $ \bE_k  \tbf X_{\tbf u, \tbf v} = \bE_k \tbf Y_{\tbf u, \tbf v}  = \tbf f\,(\tbf u, \tbf v)$ and an application of Prop.\ \ref{prop:Bolt}, part 4), shows that 
		\[ \bE_k \big(  \tbf X_{\tbf u, \tbf v}-  \tbf X_{\tbf u', \tbf v'}\big)^2\leq \bE_k \big(  \tbf Y_{\tbf u, \tbf v}-  \tbf Y_{\tbf u', \tbf v'}\big)^2  \]
for every $ (\tbf u, \tbf v),(\tbf u', \tbf v')  \in S_{N,\eps,k}  $. Thus, by Vitale's extension \cite{V} of the Sudakov-Fernique inequality \cite{Sud1, Fer, Sud2} we obtain that a.s.
		\be\label{eq:SF0}\begin{split}
		\bE_k   \sup_{(\tbf u, \tbf v) \in S_{N,\eps,k}} \Big[ \beta \big (\tbf u,   \tbf G^{(k+1)} \tbf u \big) +\tbf f\,(\tbf u, \tbf v)  \Big] \leq \bE_k\!\!   \sup_{(\tbf u, \tbf v) \in S_{N,\eps,k}}\! \Big[  \frac{2 \beta}{\sqrt N}\|\tbf{Q}^{(k)} \mathbf u\|  (\tbf Q^{(k)} \tbf u , \xi )+ \tbf f\, (\tbf u, \tbf v) \Big].
		\end{split}\ee
Next, observing that conditionally on $\cG_k$, we have the distributional equality 
		\[  \tbf G^{(k+1)} \stackrel{\text{d}}{=} \tbf Q^{(k)} \frac{1}{\sqrt{2N}}\big(   \tbf U  + \tbf U^{T}\big)\tbf Q^{(k)} \] 
for some random matrix $ \tbf U = ( u_{ij})_{1\leq i,j\leq N}\in \bR^{N\times N}$ with i.i.d. entries $  u_{ij} \sim \cN(0,1)$ (without symmetry constraint) for $i\neq j$ and $ u_{ii}=0$, a standard application of Gaussian concentration (see, for instance, \cite[Theorem 1.3.4]{talagrand2010mean-vol1}) implies that 
		\[\begin{split}
		&\bP_{k} \bigg(\,  \bigg| \sup_{(\tbf u, \tbf v) \in S_{N,\eps,k}} \bigg[ \beta \big (\tbf u,   \tbf G^{(k+1)} \tbf u \big) +\tbf f\,(\tbf u, \tbf v)  \bigg] \\
		&\hspace{2cm}- \bE_k \sup_{(\tbf u, \tbf v) \in S_{N,\eps,k}} \bigg[ \beta \big (\tbf u,   \tbf G^{(k+1)} \tbf u \big) +\tbf f\,(\tbf u, \tbf v)  \bigg]\bigg| > t \bigg) \leq 2e^{-CNt^2}. 		
		\end{split}\]
Indeed, for $\tbf X, \tbf Y\in \bR^{N\times N}$, notice that
		\[\begin{split}
		&\sup_{(\tbf u, \tbf v) \in S_{N,\eps,k}}\! \bigg[  \frac\beta{\sqrt{N}} \big (\tbf{Q}^{(k)} \tbf u,    \bar{\tbf {X}}\tbf{Q}^{(k)}  \tbf u \big) \!+\tbf f\,(\tbf u, \tbf v)  \bigg] -\!\!\sup_{(\tbf u, \tbf v) \in S_{N,\eps,k}} \!\bigg[  \frac\beta{\sqrt{N}} \big (\tbf{Q}^{(k)} \tbf u,    \bar{ \tbf {Y}} \tbf{Q}^{(k)}\tbf u \big) \!+\tbf f\,(\tbf u, \tbf v)  \bigg]\\
		& = \sup_{(\tbf u, \tbf v) \in S_{N,\eps,k}} \bigg[  \frac{\sqrt{2}\beta}{\sqrt{N}} \big (\tbf{Q}^{(k)} \tbf u,    ( \tbf {X} - \tbf {Y} ) \tbf{Q}^{(k)} \tbf u \big) + \frac{\sqrt{2}\beta}{\sqrt{N}} \big (\tbf{Q}^{(k)} \tbf u,    \tbf {Y} \tbf{Q}^{(k)} \tbf u \big) +\tbf f\,(\tbf u,  \tbf v)  \bigg] \\
		&\hspace{0.5cm}-\sup_{(\tbf u, \tbf v) \in S_{N,\eps,k}} \bigg[  \frac{2\beta}{\sqrt{N}} \big (\tbf{Q}^{(k)} \tbf u,    \tbf {Y} \tbf{Q}^{(k)} \tbf u \big) +\tbf f\,(\tbf u, \tbf v)  \bigg]\\
		&\leq \sup_{\tbf u' \in \bR^N: \| \tbf u'\|^2\leq 1 }  \frac{\sqrt{2}\beta}{\sqrt{N}} \big ( \tbf u',    ( \tbf {X} - \tbf {Y} )\, \tbf u' \big) \leq \frac{C}{\sqrt{N}} \bigg(\sum_{i,j=1}^N (\tbf X- \tbf Y)_{ij}^2\bigg)^{1/2} =  \frac{C}{\sqrt{N}} \| \tbf X- \tbf Y\|
		\end{split}\]
so that, upon switching the roles of $\tbf X \in \bR^{N\times N}$ and $\tbf Y\in \bR^{N\times N}$, we get
		\[\begin{split}
		&\bigg|\sup_{(\tbf u, \tbf v) \in S_{N,\eps,k}} \!\!\bigg[  \frac\beta{\sqrt{N}} \big (\tbf{Q}^{(k)} \tbf u,    \bar{\tbf {X}}\tbf{Q}^{(k)}  \tbf u \big) \!+\tbf f\,(\tbf u, \tbf v)  \bigg] -\!\!\!\sup_{(\tbf u, \tbf v) \in S_{N,\eps,k}}\!\! \bigg[  \frac\beta{\sqrt{N}} \big (\tbf{Q}^{(k)} \tbf u,    \bar{ \tbf {Y}} \tbf{Q}^{(k)}\tbf u \big) \!+\tbf f\,(\tbf u, \tbf v)  \bigg]\bigg|\\
		&\leq \frac{C}{\sqrt{N}} \| \tbf X- \tbf Y\|
		\end{split}\]
for the deterministic constant $C = \sqrt{2} \beta >0$. Arguing along the same lines, we also find
		\[\begin{split}
		&\bP_{k} \bigg(\,  \bigg| \sup_{(\tbf u, \tbf v) \in S_{N,\eps,k}} \bigg[ \frac{2 \beta}{\sqrt N}\|\tbf{Q}^{(k)} \mathbf u\|  (\tbf{Q}^{(k)}  \tbf u , \xi ) +\tbf f\,(\tbf u, \tbf v)  \bigg] \\
		&\hspace{2cm}- \bE_k \sup_{(\tbf u, \tbf v) \in S_{N,\eps,k}} \bigg[ \frac{2 \beta}{\sqrt N}\|\tbf{Q}^{(k)} \mathbf u\|  (\tbf{Q}^{(k)} \tbf u , \xi )  +\tbf f\,(\tbf u, \tbf v)  \bigg]\bigg| > t \bigg) \leq 2 e^{-CNt^2}, 
		\end{split}\]
and since $C>0$ is independent of $\cG_k$, we can take the expectation $\bE(\cdot)$ over the previous two tail bounds to see that they hold true unconditionally. Thus, we conclude
		\[\begin{split} 
		 \lim_{N\to\infty } \bigg| \!\sup_{(\tbf u, \tbf v) \in S_{N,\eps,k}} \Big[ \beta \big (\tbf u,   \tbf G^{(k+1)} \tbf u \big) +\tbf f\,(\tbf u, \tbf v)  \Big] - \bE_k \!\!\sup_{(\tbf u, \tbf v) \in S_{N,\eps,k}} \Big[ \beta \big (\tbf u,   \tbf G^{(k+1)} \tbf u \big) +\tbf f\,(\tbf u, \tbf v)  \Big]\bigg| &= 0 \end{split}\]
and 
		\[\begin{split}
		 &\lim_{N\to\infty } \bigg| \sup_{(\tbf u, \tbf v) \in S_{N,\eps,k}} \bigg[ \frac{2 \beta}{\sqrt N}\|\tbf{Q}^{(k)} \mathbf u\|  ( \tbf{Q}^{(k)}\tbf u , \xi )  +\tbf f\,(\tbf u, \tbf v)  \bigg] \\
		 &\hspace{4cm}- \bE_k \sup_{(\tbf u, \tbf v) \in S_{N,\eps,k}} \bigg[ \frac{2 \beta}{\sqrt N}\|\tbf{Q}^{(k)} \mathbf u\|  ( \tbf{Q}^{(k)}\tbf u , \xi )  +\tbf f\,(\tbf u, \tbf v)  \bigg]\bigg| = 0
		 \end{split}\]
in the sense of probability. Combining these observations with \eqref{eq:SF0}, we finally arrive at
		\be\label{eq:SF}\begin{split}
		 &\limsup_{\eps\to 0}\limsup_{k\to\infty} \text{p-}\limsup_{N \to \infty}\bigg[  \sup_{(\tbf u, \tbf v) \in S_{N,\eps,k}} \Big(\tbf u,  \big  (  \beta \tbf G +\frac{2\beta^2}N\tbf v\otimes \tbf v- \tbf D_{\tbf v}-\beta^2(1-q) \big) \tbf u \Big) \bigg] \\
		&\leq \limsup_{\eps\to 0}\limsup_{k\to\infty} \text{p-}\limsup_{N \to \infty} \bigg[\sup_{(\tbf u, \tbf v) \in S_{N,\eps,k}}  \text{F}\big(\mathbf u,\mathbf v,\xi, \mathbf M^{(k)},\mathbf Z^{(k)}\big)\bigg] ,
		\end{split}\ee
where $ \tbf M^{(k)}$ and $ \tbf Z^{(k)}$ denote the matrices
		$$\mathbf M^{(k)} = (\mathbf m^{(1)}\,\mathbf m^{(2)}\,\ldots\,\mathbf m^{(k)})\in \bR^{N\times k},\hspace{0.5cm} \mathbf Z^{(k)} = (\mathbf z^{(2)},\ldots,\mathbf z^{(k+1)})\in \bR^{N\times k}$$
and where
		\[\begin{split}
		\text{F}\big(\mathbf u,\mathbf v,\xi, \mathbf M^{(k)},\mathbf Z^{(k)}\big) &=\frac{2 \beta}{\sqrt N}\|\tbf{Q}^{(k)} \mathbf u\|  ( \tbf u , \xi )+ 2\beta  \big( \mathbf u, \mathbf Z^{(k)}\big( \mathbf M^{(k)T}\mathbf M^{(k)}\big)^{-1}\mathbf M^{(k)T} \mathbf u\big) \\
		&\hspace{0.5cm} +\frac{2\beta^2}N(\tbf u, \tbf m^{(k)})^2 - ( \tbf u , \tbf D_{\tbf v} \tbf u ) -\beta^2(1-q). 
		\end{split}\]		
Here, we used additionally that
		\[ \sup_{\| \tbf v - \tbf m^{(k)}\|^2\leq N\eps } N^{-1} \big\|  \tbf v \otimes \tbf v - \tbf m^{(k)}\otimes  \tbf m^{(k)}\big\|_{\text{op}} \leq 2 \sqrt{\eps} \]
and that, in the sense of probability, we have
		\[\begin{split}
		\lim_{N\to\infty}\big|  \sup_{\| \tbf u\|^2=1} N^{-1/2} \big(\tbf P^{(k)}\tbf u  , \xi \big) \big|\leq \lim_{N\to\infty} N^{-1/2}  \big \|  \tbf P^{(k)}  \xi \big\|   &= 0,\\
		 \lim_{N\to\infty} \Big\| \frac{1}{\sqrt N} \sum_{s=1}^k \zeta^{(s)} \otimes \phi^{(s)} -   \,\mathbf Z^{(k)}\big( \mathbf M^{(k)T}\mathbf M^{(k)}\big)^{-1}\mathbf M^{(k)T}\Big\|_{\text{op}} &=0.
		\end{split}  \]
The last two identities readily follow e.g. from a simple second moment bound and, respectively, from Prop.\ \ref{prop:Bolt}. Notice that both matrices $ N^{-1/2} \sum_{s=1}^k \zeta^{(s)} \otimes \phi^{(s)} $ and $\mathbf Z^{(k)}\big( \mathbf M^{(k)T}\mathbf M^{(k)}\big)^{-1}\mathbf M^{(k)T}$ are at most of rank $k$ ($ \tbf Q^{(k)}\bR^N$ contained in their kernels) so that the norm convergence follows from pointwise convergence on the vectors $ (\tbf m^{(s)})_{s=1}^k$. 

%	\be\label{eq:SF}\begin{split}
%	&\limsup_{\eps\to 0}\limsup_{k\to\infty} \text{p-}\limsup_{N \to \infty} \sup_{(\tbf u, \tbf v)\in S_{N,\eps,k}} \Big(\tbf u,  \big  (  \beta \tbf G+\frac{2\beta^2}N\tbf v\otimes \tbf v- \tbf D_{\tbf v} -\beta^2(1-q) \big) \tbf u \Big) \\
%	&\hspace{3.8cm}\leq \limsup_{\eps\to 0}\limsup_{k\to\infty}\text{p-}\limsup_{N \to \infty} \sup_{(\tbf u, \tbf v)\in S_{N,\eps,k}}  \text{F}'\big(\mathbf u,\mathbf v,\xi, \mathbf M^{(k)},\mathbf Z^{(k)}\big), 
%	\end{split}\ee
%where
%		\[\begin{split}
%		 \text{F}'\big(\mathbf u,\mathbf v,\xi, \mathbf M^{(k)},\mathbf Z^{(k)}\big) &= 2\beta  \big( \mathbf u, \mathbf Z^{(k)}\big( \mathbf M^{(k)T}\mathbf M^{(k)}\big)^{-1}\mathbf M^{(k)T} \mathbf u\big) + \frac{2 \beta}{\sqrt N}\|\tbf{Q}^{(k)} \mathbf u\|  ( \tbf u , \xi )\\
%		&\hspace{0.5cm} +\frac{2\beta^2}N(\tbf m^{(k)}, \tbf u)^2 - ( \tbf u , \tbf D_{\tbf v} \tbf u ) -\beta^2(1-q). 
%		\end{split}\]

\textbf{Step 2:} The remainder of the proof is based on analyzing the optimization problem on the r.h.s. of Eq.\ \eqref{eq:SF}. This can be done with the same arguments as in \cite[Sections 4.1 to 4.4]{celentano2022sudakov}. For completeness, we translate the arguments from \cite{celentano2022sudakov} to the present setting. In particular, we point out at which steps the AT condition \eqref{eq:AT} enters the analysis. 

First, we control the r.h.s. of Eq.\ \eqref{eq:SF} through an optimization problem that only involves the limiting distributions of $ \tbf M^{(k)}$ and $\tbf Z^{(k)}$, based on Theorem \ref{thm:law}. Setting  
		\[   \mu^{(k)}_N = \frac1N \sum_{i=1}^N \delta_{\sqrt{N} u_i, v_i, \xi_i, \tbf M^{(k)}_{i\cdot}\!,\, \tbf Z^{(k)}_{i\cdot}  } \in \cP(\bR^{2k+3} ), \]
where by $ \tbf X_{i\cdot} \in \bR^{n} $ we denote the $i$-th row of $\tbf X\in \bR^{m\times n}$, and setting 
		$$ \bE_{ \mu } (f) = \int \mu (dx) \,f(x), \hspace{0.2cm} \bP_{ \mu } (S) = \int \mu (dx) \,\textbf{1}_S(x) =\mu(S), \hspace{0.2cm}   \langle f, g\rangle_{L^2( d\mu) }  = \int \mu (dx)\,  f(x)g(x)$$ 
for $\mu\in\cP( \bR^{2k+3})$, we have that
		\[  \text{F}\big(\mathbf u,\mathbf v,\xi, \mathbf M^{(k)},\mathbf Z^{(k)}\big)=\Phi \big (\mu_N^{(k)} \big), \] 
where for every $\mu\in\cP( \bR^{2k+3})$ with finite second moment, we define
		\[\begin{split}
		 \Phi(\mu) &= 2\beta \,  \bE_{ \mu } ( \tU  \ul{ \text{Z}}^T  )     \big[ \bE_{ \mu } ( \ul{ \text{M}} \, \ul{ \text{M}}^T)  \big]^{-1}   \bE_{ \mu } (\ul{ \text{M}} \tU  ) +2\beta \|\tbf Q_{\ul{ \text{M}}} \tU \|_{L^2(d\mu)} \langle \tU, \Xi \rangle_{L^2(d\mu)} \\
		 &\hspace{0.5cm} + 2\beta^2 \langle  M_k, \tU \rangle^2_{L^2(d\mu)} -  \bE_{ \mu }\big(   (1- \tV^2)^{-1} \tU^2\big) -\beta^2(1-q). 
		  \end{split}\]
Here, the coordinates in the integrals over $\bR^{2k+3}$ w.r.t. to $\mu$ are denoted by $ (\tU, \tV, \Xi, \ul{\tM}, \ul{\tZ})  $ (or, in other words, $ \cL(\tU, \tV, \Xi, \ul{\tM}, \ul{\tZ} )=\mu$) and $\tbf Q_{\ul{ \text{M}} } $ denotes the projection onto the orthogonal complement of the coordinates of $\ul{ \text{M}} = (M_1,\dots, M_k)$, i.e. 
	\[  \tbf Q_{\ul{ \text{M}} } \tU = \tU -\ul{ \tM}^T\big( \bE_{ \mu } \, \ul{ \text{M}}\,\ul{ \text{M}}^T  \big)^{-1} \bE_\mu\, \ul{ \text{M}} \tU.\]
To stay consistent with the construction in \eqref{eq:AMP} and with the previous notation, we write in the following $ \ul{\tZ}=(Z_2,\dots,Z_{k+1})$. Now, applying \cite[Lemma 1]{celentano2022sudakov} (whose proof in \cite[Appendix B.2]{celentano2022sudakov} carries over directly to the present context upon replacing the functional $ \text{F}_{\text{x},k}$ from \cite{celentano2022sudakov} by $\Phi$ defined above, based on Prop.\ \ref{prop:Bolt}), we obtain the upper bound
		\be\label{eq:Phibnd}\begin{split}
		\limsup_{\eps\to 0}\limsup_{k\to\infty}\text{p-}\limsup_{N \to \infty} \sup_{(\tbf u, \tbf v)\in S_{N,\eps,k}}  \text{F}\big(\mathbf u,\mathbf v,\xi, \mathbf M^{(k)},\mathbf Z^{(k)}\big)
		&\leq \limsup_{\eps\to 0}\limsup_{k\to\infty} \sup_{\mu \in \cS_{\eps,k} } \Phi(\mu),
		\end{split}\ee
where
		\[\begin{split}
		\cS_{\eps,k}= \big\{&\mu= \cL(\tU, \tV, \Xi, \ul{\tM}, \ul{\tZ}):\, \|\tU \|_{L^2(d\mu)}^2=1,\; \| \tV- M_k\|^2_{L^2(d\mu)}\leq \eps, \bP_\mu( |\tV|<1)=1,\\
		&\,\ul{\text{Z}} = (Z_2,\ldots, Z_{k+1})\sim \cN(\mathbf{0},\mathbf K_{\leq k}),\,\ul{\text{M}} = (M_1,\ldots, M_k) \text{ with } M_1 = \sqrt{q}, \\
		&\,M_s \stackrel{s>1}= \tanh(h + \beta Z_{s-1}),\, \Xi \sim \cN(0,1) \text{ independent of } \ul{\text{Z}}\,\big\}.
		\end{split}\]
Next, using that $ \bE_{ \mu } ( \ul{ \text{Z}} \, \ul{ \text{Z}}^T) =  \bE_{ \mu } ( \ul{ \text{M}} \, \ul{ \text{M}}^T)  = \tbf{K}_{\leq k}$, we observe that 
		\[\begin{split}
		  & \big\| \ul{ \text{Z}}^T       \big[ \bE_{ \mu } ( \ul{ \text{M}} \, \ul{ \text{M}}^T)  \big]^{-1}   \bE_{ \mu } (\ul{ \text{M}} \tU  )\big\|^2_{L^2(d\mu)} + \|\tbf Q_{\ul{ \text{M}}} \tU \|^2_{L^2(d\mu)} \| \Xi \|^2_{L^2(d\mu)}\\
		  &  =   \bE_{ \mu } (\ul{ \text{M}} \tU  )^T  \big[ \bE_{ \mu } ( \ul{ \text{M}} \, \ul{ \text{M}}^T)  \big]^{-1}   \bE_{ \mu } ( \ul{ \text{Z}} \, \ul{ \text{Z}}^T)  \big[ \bE_{ \mu } ( \ul{ \text{M}} \, \ul{ \text{M}}^T)  \big]^{-1}  \bE_{ \mu } (\ul{ \text{M}} \tU  )  + \|\tbf Q_{\ul{ \text{M}}} \tU \|_{L^2(d\mu)}^2= 1
		  \end{split}\]
so that, by the independence  of $ \ul{ \tZ}$ and $\Xi$, we have that under $\mu$
		\[ \ul{ \text{Z}}^T   \big[ \bE_{ \mu } ( \ul{ \text{M}} \, \ul{ \text{M}}^T)  \big]^{-1}   \bE_{ \mu } (\ul{ \text{M}} \tU  ) + \|\tbf Q_{\ul{ \text{M}}} \tU \|_{L^2(d\mu)}  \Xi\sim \cN(0,1).\]
Arguing similarly that 
		\[ \big\langle Z_{k+1}, \ul{ \text{Z}}^T   \big[ \bE_{ \mu } ( \ul{ \text{M}} \, \ul{ \text{M}}^T)  \big]^{-1}   \bE_{ \mu } (\ul{ \text{M}} \tU  ) \big\rangle_{L^2(d\mu)} = \langle M_k, \tU\rangle_{L^2(d\mu)}  \]
and using that $ \| Z_{k+1}\|_{L^2(d\mu)}= \| M_k\|_{L^2(d\mu)} = \sqrt{q}  $, we can thus write 
		\[\begin{split}
		\ul{ \text{Z}}^T   \big[ \bE_{ \mu } ( \ul{ \text{M}} \, \ul{ \text{M}}^T)  \big]^{-1}   \bE_{ \mu } (\ul{ \text{M}} \tU  ) + \|\tbf Q_{\ul{ \text{M}}} \tU \|_{L^2(d\mu)}  \Xi  = q^{-1}\langle M_k, \tU \rangle_{L^2(d\mu)}  Z_{k+1} + \|\tbf Q_{M_{k}} \tU \|_{L^2(d\mu)} \, \Xi'
		\end{split}    \]
for some Gaussian $\Xi'\sim \cN(0,1)$, which is independent of $Z_{k+1}$, and where $Q_{M_k}$ denotes the projection onto the orthogonal complement of $M_k$ in $L^2(d\mu)$. In particular, we have
		\[\begin{split}
		\Phi(\mu) &= \frac{2\beta}q \, \langle \tU, M_k\rangle_{L^2(d\mu)} \langle Z_{k+1}, \tU\rangle_{L^2(d\mu)} +2\beta \|\tbf Q_{M_k} \tU \|_{L^2(d\mu)} \langle \tU, \Xi'\rangle_{L^2(d\mu)} \\
		 &\hspace{0.5cm} + 2\beta^2 \langle  M_k, U \rangle^2_{L^2(d\mu)} -  \bE_{ \mu }\big(   (1- \tV^2)^{-1} \tU^2\big) -\beta^2(1-q) ,  
		\end{split}\]
and setting $ M_{k+1} = \tanh(h+\beta Z_{k+1}) $, such that under the AT condition \eqref{eq:AT} we have $ \| M_{k+1}- M_k\|^2_{L^2(d\mu)} = 2q- 2\alpha_k\to 0$ as $k\to \infty$ by Lemma \ref{lm:seqlemma} and Prop.\ \ref{prop:Bolt} 4), we find
		\be\label{eq:Psibnd} \limsup_{\eps\to 0}\limsup_{k\to\infty} \sup_{\mu \in \cS_{\eps,k} } \Phi(\mu) \leq \limsup_{\eps\to 0}  \sup_{\mu \in \cS_{\eps} } \Psi(\mu),  \ee
where $\Psi$ is defined by			
		\[\begin{split}
		\Psi(\mu) &=  2\beta q^{-1/2} \, \langle \tU, \tM \rangle_{L^2(d\mu)} \langle \tZ, \tU\rangle_{L^2(d\mu)} +2\beta \sqrt{ 1 - q^{-1} \langle \tM, \tU \rangle^2_{L^2(d\mu)}} \langle \tW, \tU\rangle_{L^2(d\mu)} \\
		 &\hspace{0.5cm} + 2\beta^2   \langle\tM, \tU \rangle^2_{L^2(d\mu)} -  \bE_{ \mu }\big(   (1- \tV^2)^{-1} \tU^2\big) -\beta^2(1-q), 
		 \end{split}\]
for every $\mu = \cL( \tM,  \tU, \tV, \tW, \tZ  ) \in \cS_{\eps}$, with $\cS_{\eps}$ defined by
		\[\begin{split}
		  \cS_{\eps}  = \big\{&\cL( \tM,  \tU, \tV, \tW, \tZ  ):   \| \tU\|_{L^2(d\mu)}^2= 1, \| \tV- \tM\|^2_{L^2(d\mu)}\leq \eps,  \\
		  &\;\;\bP_\mu(| \tV| <1)=1, (\tW, \tZ) \sim \cN\big( 0,\text{id}_{\bR^2}\big),   \tM  = \tanh(h+\beta \sqrt{q} \tZ )   \big\}.
		\end{split}\]
		
\textbf{Step 3:} Finally, we need to analyze the optimization problem on the r.h.s. in Eq.\ \eqref{eq:Psibnd}. Here, we introduce Lagrange multipliers and simply upper bound the r.h.s.  by 
		\[\begin{split}
		\Psi(\mu) &=  2\beta q^{-1/2} \, \langle \tU, \tM \rangle_{L^2(d\mu)} \langle \tZ, \tU\rangle_{L^2(d\mu)} +2\beta \sqrt{ 1 - q^{-1} \langle \tM, \tU \rangle^2_{L^2(d\mu)}} \langle \tW, \tU\rangle_{L^2(d\mu)} \\
		 &\hspace{0.5cm} + 2\beta^2   \langle\tM, \tU \rangle^2_{L^2(d\mu)} -  \bE_{ \mu }\big(   (1- \tV^2)^{-1} \tU^2\big) -\beta^2(1-q)\\
		 & = 2\beta  q^{-1/2} x \,  \, \langle \tZ, \tU\rangle_{L^2(d\mu)} +2\beta \sqrt{ 1 - q^{-1} x^2}\, \langle \tW, \tU\rangle_{L^2(d\mu)}+ 2\beta^2  x^2-\beta^2(1-q) \\
		 &\hspace{0.5cm} +\lambda_u(\|U\|_{L^2(d\mu)}^2 - 1) + \lambda_x ( \langle \tU, \tM \rangle_{L^2(d\mu)}- x)   -  \bE_{ \mu }\big(   (1- \tV^2)^{-1} \tU^2\big) \\
		 &\leq   2\beta^2  x^2-\beta^2(1-q) - \lambda_u -\lambda_x x  + \bE_\mu \Theta (\tM, \tV, \tW, \tZ  ),
		\end{split}\]
where we set $x = x(\tU,\tM)= \langle \tU, \tM \rangle_{L^2(d\mu)}$ as well as
		\[\begin{split}
		\Theta (m, v, w, z  ) = \sup_{u\in\bR} \Big[ \big( 2\beta  q^{-1/2} x z + 2\beta \sqrt{ 1 - q^{-1} x^2}\, w +\lambda_x m\big)u + \big(\lambda_u -(1-v^2)^{-1}\big) u^2    \Big].
		\end{split}\]
Fixing from now on $ \lambda_u<1$ and assuming w.l.o.g. $|v| <1$ (in accordance to the fact that $ \bP_\mu(|V|<1) = 1$), strict concavity implies that 
		\[\Theta (m, v, w, z  )  = \frac14 \frac{ \big( 2\beta  q^{-1/2} x z + 2\beta \sqrt{ 1 - q^{-1} x^2}\, w +\lambda_x m\big)^2    }{\big( (1-v^2)^{-1} -\lambda_u\big)}.\]
Together with the global Lipschitz continuity of $ [-1,1] \ni v\mapsto \big( (1-v^2)^{-1} -\lambda_u\big)^{-1} \in [ 0, \infty) $, we thus obtain for every fixed $\lambda_u<1$ and $K>0$ the simple upper bound
		\be\label{eq:lastbnd}\begin{split}
		&\limsup_{\eps\to 0} \sup_{\mu \in \cS_{\eps} } \Psi(\mu) \\
		& \leq \max_{|x| \leq 1}\min_{ | \lambda_x|  \leq K }\bigg[  2\beta^2  x^2-\beta^2(1-q) - \lambda_u -\lambda_x x \\
		&\hspace{2.5cm}+   \bE_\mu  \frac{ \big( 2\beta  q^{-1/2} x \tZ + 2\beta \sqrt{ 1 - q^{-1} x^2}\, \tW +\lambda_x \tM \big)^2    }{ 4( \cosh^2(h +\beta \sqrt q \tZ) -\lambda_u )} \bigg]\\
		& = \max_{|x| \leq 1}\min_{ | \lambda_x|  \leq K }\bigg[  2\beta^2  x^2-\beta^2(1-q) - \lambda_u   +\bE_\mu  \frac{\beta^2(1-q^{-1}x^2) }{( \cosh^2(h +\beta \sqrt q \tZ) -\lambda_u)}\\
		&\hspace{3.2cm} -\lambda_x x+    \bE_\mu  \frac{ \beta^2 q^{-1}x^2\tZ^2 + \beta q^{-1/2} x\lambda_x \tM\tZ + \lambda_x^2 \tM^2/4  }{( \cosh^2(h +\beta \sqrt q \tZ) -\lambda_u )} \bigg].
		\end{split} \ee
Notice that, in the last step, we used the independence of $\tW$ of $\tZ$ and, consequently, of $\tM = \tanh( h+\beta \sqrt{q} \tZ)$. Applying Gaussian integration by parts w.r.t. $\tZ$ and rescaling $x$ and $\lambda_x$ by a factor $1/\sqrt{q}$ and respectively $\sqrt{q}$ , we find that
		\be\label{eq:lastbnd}\begin{split}
		& \max_{|x| \leq 1}\min_{ | \lambda_x|  \leq K }\bigg[  2\beta^2  x^2-\beta^2(1-q) - \lambda_u   +\bE_\mu  \frac{\beta^2(1-q^{-1}x^2) }{( \cosh^2(h +\beta \sqrt q \tZ) -\lambda_u)}\\
		&\hspace{3cm} -\lambda_x x+    \bE_\mu  \frac{ \beta^2 q^{-1}x^2\tZ^2 + \beta q^{-1/2} x\lambda_x \tM\tZ + \lambda_x^2 \tM^2/4  }{( \cosh^2(h +\beta \sqrt q \tZ) -\lambda_u )} \bigg]\\
		&= \max_{|x| \leq q^{-1/2}}\bigg\{  2  \beta^2 q x^2-\beta^2(1-q) - \lambda_u   + \bE_\mu  \frac{\beta^2}{( \cosh^2(h +\beta \sqrt q \tZ) -\lambda_u)}\\
		&\hspace{2.2cm} -  \bE_\mu  \frac{2 \beta^4 q (\cosh^2(h +\beta \sqrt q \tZ) +\sinh^2(h +\beta \sqrt q \tZ))}{( \cosh^2(h +\beta \sqrt q \tZ) -\lambda_u)^2}x^2\\
		&\hspace{2.2cm} +\bE_\mu  \frac{8 \beta^4 q \cosh^2(h +\beta \sqrt q \tZ)\sinh^2(h +\beta \sqrt q \tZ)}{( \cosh^2(h +\beta \sqrt q \tZ) -\lambda_u)^3}x^2\\
		&\hspace{2.2cm}+ \min_{ | \lambda_x|  \leq \sqrt{q} K } \bE_\mu\bigg[       \frac{ \beta^2   \, }{\cosh^2(h +\beta \sqrt q \tZ) ( \cosh^2(h +\beta \sqrt q \tZ) -\lambda_u )}x\lambda_x\\
		&\hspace{4.8cm} - \frac{2\beta^2 \cosh^2 (h +\beta \sqrt q \tZ)\tanh^2(h +\beta \sqrt q \tZ)  }{  ( \cosh^2(h +\beta \sqrt q \tZ) -\lambda_u )^2} x\lambda_x \\
		&\hspace{6.2cm}  - x\lambda_x+  \frac14  \frac{  \tM^2  }{( \cosh^2(h +\beta \sqrt q \tZ) -\lambda_u )} \lambda_x^2 \bigg]\bigg\}  \\
		&=\max_{|x| \leq q^{-1/2}} \Sigma_{\lambda_u} (x). 
		\end{split} \ee
Now, evaluating the Hessian of $\Sigma_{\lambda_u}$ for $\lambda_u=0$, one obtains with $ q_4 = \bE_{\mu} \,\tM^4$ that
		\[\begin{split}
		\frac12 \frac{d^2 \Sigma_{ 0}}{dx^2}  & = 2\beta^2 q  - 2\beta^4 q \, (1 -4q +3q_4) -  (q-q_4)^{-1}\big(1-   \beta^2 (1-4q+3q_4) \big)^2 \\ 
		&= \frac{ (1-   \beta^2 (1-4q+3q_4) )}{(q-q_4)}  \big( 2\beta^2 q (q-q_4)   -  \big(1-   \beta^2 (1-2q+q_4) \big) -2\beta^2  (q-q_4)  \big)\\
		& =  \frac{ \big(1-   \beta^2 (1-4q+3q_4) \big)}{(q-q_4)}  \big(  -  \big(1-   \beta^2 (1-2q+q_4)\big)-2\beta^2 (1-q) (q-q_4)    \big).
		\end{split}\]
Finally, by the AT condition \eqref{eq:AT} and $q-q_4 = \bE_\mu \big( \frac{\tanh^2} {\cosh^2}\big)(h+\beta\sqrt{q}\,\tZ) \geq 0$, we get 
		\[ \begin{split}
		1-   \beta^2 (1-2q+q_4) &= 1- \beta^2\bE_\mu \text{sech}^4(h+\beta{\sqrt q}\, \tZ) > 0, \\
		1-   \beta^2 (1-4q+3q_4)& = 1-   \beta^2 (1-2q+q_4) + 2\beta^2 (q-q_4) > 0,
		\end{split} \]
and together with $ 1-q = \bE_{\mu} \text{sech}^2(h+\beta\sqrt{q}\,\tZ) \geq 0$ and smoothness in $\lambda_u$, this implies
		\[   \frac{d^2 \Sigma_{ \lambda_u}}{dx^2} <0 \]
for $ \lambda_u \in (0,1)$ sufficiently small. By concavity, we obtain for $\lambda_u>0 $ small enough that
		\[\begin{split}
		& \max_{|x| \leq 1}\min_{ | \lambda_x|  \leq K }\bigg[  2\beta^2  x^2-\beta^2(1-q) - \lambda_u   +\bE_\mu  \frac{\beta^2(1-q^{-1}x^2) }{( \cosh^2(h +\beta \sqrt q\, \tZ) -\lambda_u)}\\
		&\hspace{3cm} -\lambda_x x+    \bE_\mu  \frac{ \beta^2 q^{-1}x^2\tZ^2 + \beta q^{-1/2} x\lambda_x \tM\tZ + \lambda_x^2 \tM^2/4  }{( \cosh^2(h +\beta \sqrt q\, \tZ) -\lambda_u )} \bigg]\\
		& = -\beta^2(1-q) - \lambda_u +\beta^2\bE_\mu  \frac{ 1} { \cosh^2(h +\beta \sqrt q\, \tZ) -\lambda_u )} \\
		& = - \int_0^{\lambda_u} dt \, \bigg[ 1- \beta^2\bE_\mu \frac{1}{(\cosh^2(h +\beta \sqrt q\, \tZ) -t )^2}\bigg]\leq - c <0
		\end{split} \]
for some positive constant $c = c_{\beta,h}>0$. Here, the last inequality follows from the assumption that $\lambda_u>0$ is sufficiently small and from the AT condition \eqref{eq:AT}, noting that
		\[ \begin{split}
		%&\bigg[ - \lambda_u +\beta^2\bE_\mu  \frac{ 1} { (\cosh^2(h +\beta \sqrt q\, \tZ) -\lambda_u )} \bigg]_{| \lambda_u=0} = \beta^2(1-q), \\
		& -\bigg[  1 -\beta^2\bE_\mu  \frac{ 1} { (\cosh^2(h +\beta \sqrt q\, \tZ) -t )^2} \bigg]_{| t=0} = -\big( 1- \beta^2\bE_\mu \text{sech}^4(h +\beta \sqrt q\, \tZ)\big) <0.
		\end{split}\] 
Combining this with \eqref{eq:SF}, \eqref{eq:Phibnd}, \eqref{eq:Psibnd} and \eqref{eq:lastbnd}, this proves Prop.\ \ref{prop:vHv}.
\end{proof}

%%%%%%%%%%%%%%%%%%%%%%%%%%%%%%%%%%%%%%%%%%%%
%%%%%%%%%%%%%%%%%%%%%%%%%%%%%%%%%%%%%%%%%%%%
%%%%%%%%%%%%%%%%%%%%%%%%%%%%%%%%%%%%%%%%%%%%

\vspace{0.3cm}
\noindent\textbf{Acknowledgements.} We thank two anonymous referees for helpful comments. C. B. acknowledges support by the Deutsche Forschungsgemeinschaft (DFG, German Research Foundation) under Germany’s Excellence Strategy – GZ 2047/1, Projekt-ID 390685813. The work of C. X. is partially funded by a Simons Investigator award. The work of H.-T. Y. is partially supported by the NSF grant DMS-1855509 and DMS-2153335 and a Simons Investigator award.

\vspace{0.8cm}
\noindent {\Large \textbf{Data Availability}}
\vspace{0.5cm}

\noindent Data sharing is not applicable to this article as no new data were created or analyzed in this study.

%\vspace{0.5cm}
%\noindent {\Large \textbf{Conflict of Interests}}
%\vspace{0.5cm}

%\noindent The authors have no conflict of interest to declare that are relevant to the content of this article.
%%%%%%%%%%%%%%%%%%%%%%%%%%%%%%%%%%%%%%%%%%%%
%%%%%%%%%%%%%%%%%%%%%%%%%%%%%%%%%%%%%%%%%%%%
%%%%%%%%%%%%%%%%%%%%%%%%%%%%%%%%%%%%%%%%%%%%

\small

\bibliography{ref}

\bibliographystyle{abbrv}

\end{document}